\numberwithin{equation}{section}
\begin{document}

\newtheorem{theorem}{Theorem}[section]
\newtheorem{corollary}[theorem]{Corollary}
\newtheorem{lemma}[theorem]{Lemma}
\newtheorem{sublemma}[theorem]{Sublemma}
\newtheorem{proposition}[theorem]{Proposition}
\newtheorem{conjecture}[theorem]{Conjecture}

\newcommand{\be}{\begin{equation}}
\newcommand{\ee}{\end{equation}}
\newcommand{\bea}{\begin{eqnarray}}
\newcommand{\eea}{\end{eqnarray}}
\newcommand{\De}{\Delta}
\newcommand{\de}{\delta}
\newcommand{\Z}{{\mathbb Z}}
\newcommand{\N}{{\mathbb N}}
\newcommand{\C}{{\mathbb C}}
\newcommand{\Cs}{{\mathbb C}^{*}}
\newcommand{\R}{{\mathbb R}}
\newcommand{\Q}{{\mathbb Q}}
\newcommand{\T}{{\mathbb T}}
\newcommand{\cW}{{\cal W}}
\newcommand{\cJ}{{\cal J}}
\newcommand{\cE}{{\cal E}}
\newcommand{\cA}{{\cal A}}
\newcommand{\cR}{{\cal R}}
\newcommand{\cP}{{\cal P}}
\newcommand{\cM}{{\cal M}}
\newcommand{\cN}{{\cal N}}
\newcommand{\cI}{{\cal I}}
\newcommand{\cB}{{\cal B}}
\newcommand{\cD}{{\cal D}}
\newcommand{\cC}{{\cal C}}
\newcommand{\cL}{{\cal L}}
\newcommand{\cF}{{\cal F}}
\newcommand{\cH}{{\cal H}}
\newcommand{\cS}{{\cal S}}
\newcommand{\cT}{{\cal T}}
\newcommand{\cU}{{\cal U}}
\newcommand{\cQ}{{\cal Q}}
\newcommand{\cV}{{\cal V}}
\newcommand{\cK}{{\cal K}}
\newcommand{\cO}{{\cal O}}
\newcommand{\intR}{\int_{-\infty}^{\infty}}
\newcommand{\diag}{{\rm diag}}
\newcommand{\Ln}{{\rm Ln}}
\newcommand{\Arg}{{\rm Arg}}
\newcommand{\pz}{\partial_z}
\newcommand{\re}{{\rm Re}\, }
\newcommand{\im}{{\rm Im}\, }
\newcommand{\res}{{\rm Res}}
\newcommand{\intp}{\int_{0}^{\pi/2r}}
\newcommand{\intnp}{\int_{-\pi/2r}^{\pi/2r}}

\title{On Razamat's $A_2$ and $A_3$ kernel identities}
\author{ Simon Ruijsenaars \\ School of Mathematics, \\ University of Leeds, Leeds LS2 9JT, UK}


\maketitle

\begin{abstract}
In recent work on superconformal quantum field theories, Razamat arrived at elliptic kernel identities of a new and striking character: They relate solely to the root systems $A_2$ and $A_3$ and have no coupling type parameters~\cite{Ra18}. The pertinent 2- and 3-variable Hamiltonians are analytic difference operators and the kernel functions are built from the elliptic gamma function. Razamat presented compelling evidence for the validity of these identities, and checked them to a certain order in a power series expansion.  This paper is mainly concerned with analytical proofs of these identities. More specifically, we furnish a complete proof of the identities for the $A_2$ elliptic case and for the $A_3$ hyperbolic case, and consider several specializations. We also discuss the implications the kernel identities might have for a Hilbert space scenario involving common eigenvectors of the Hamiltonians and the integral operators associated with the kernel functions.
\end{abstract}

\tableofcontents


\section{Introduction}

B\"acklund transformations are a well-known ingredient at the classical level of the theory of integrable systems. They have been found and studied for a great many well-known systems, both of the infinite-dimensional and of the finite-dimensional variety. Specializing the latter to the integrable $N$-particle systems of Calogero-Moser and Toda type tied to the root system $A_{N-1}$, 
the associated B\"acklund transformations have a less well known quantum counterpart, namely, so-called kernel identities. 

For a comprehensive study of these finite-dimensional B\"acklund transformations and kernel identities the reader is referred to~\cite{HaRu12}. The latter work contains in particular a survey of the literature in this area. The present paper is concerned with novel and highly unusual kernel identities (due to Razamat~\cite{Ra18}). In particular,  to date no classical counterparts (B\"acklund transformations) of these identities are known. 

We recall that a kernel identity refers to a connection between two Hamiltonians $H_1(v)$ and $H_2(w)$ that are differential or difference operators, with $v$ and $w$ varying over some finite-dimensional vector spaces (not necessarily of the same dimension). Specifically, the Hamiltonians are related by a formula 
\be
H_1(v)K(v,w)=H_2(w)K(v,w),
\ee
where $K(v,w)$ is a function referred to as the kernel function.  

 At the highest level of the Calogero-Moser $A_{N-1}$ type hierarchy, namely the relativistic elliptic quantum level (which includes the Toda case by specialization), the~$N$ commuting Hamiltonians are  analytic difference operators $H_l(x)$, $x\in\C^N$, $l=1,\ldots,N$, with coefficients whose building blocks are essentially Jacobi theta functions. After a similarity transformation with (the square root of)  a factorized weight function, they yield A$\De$Os (analytic difference operators) $A_l(x)$ satisfying a kernel identity of the form
\be\label{kid}
A_l(v)\cS(v,w)=A_l(-w)\cS(v,w),\ \ \ v,w\in\C^N,
\ee
\be\label{kf}
\cS(v,w)\equiv \prod_{j,k=1}^N G(\pm (v_j-w_k)-b),\ \ \ b\in\C.
\ee
Here and below, we use the notation
\be
f(\pm z +c)\equiv f(z+c)f(-z+c),\ \ \ z,c\in\C,
\ee
$b$ is a parameter of coupling constant character, and $G(z)$ is the elliptic gamma function introduced in~\cite{Ru97}. This function and its hyperbolic, trigonometric and rational specializations are crucial ingredients for this paper. (The latter are basically the same as the double sine, $q$-gamma and Euler gamma functions, resp.)

We have collected some salient features of these gamma functions in Appendix~A.
As recalled there, the elliptic and hyperbolic ones have a symmetric dependence on scale parameters $a_+$ and $a_-$, nowadays referred to as modular invariance. However, the dependence on these parameters in the difference operators is not symmetric. Correspondingly, the kernel identities come in pairs: When they hold for an asymmetric difference operator, they also are valid for its modular partner (the operator with $a_+$ and $a_-$ swapped). 

From now on we shall indicate this in our notation by using a subscript $+$ or $-$. (Thus in~\eqref{kid}, for example, we can replace $A_l(x)$ by $A_{l,\de}(x)$ with $\de=+$ or $\de=-$.) By convention, in an A$\De$O $A_+(x)$ the parameter $a_+$ encodes an elliptic period of its coefficients, whereas $a_-$ encodes its shift parameter.

In his recent paper~\cite{Ra18}, Razamat was led to elliptic kernel identities for the root systems $A_2$ and $A_3$ that have a superficial resemblance to the ones just mentioned. His paper was partly inspired by his previous joint work with Gaiotto and Rastelli~\cite{GRS13} and with Zafrir~\cite{RaZa18}. These articles  may be viewed as part of an extensive program to obtain explicit information concerning superconformal field theories, which took off in the eighties of the previous century. In this context, a great many intimate connections to Calogero-Moser and Toda type systems have surfaced, both at the classical and at the quantum level. 

In particular, in recent years the kernel functions and commuting A$\De$Os for the $N$-particle quantum relativistic ($A_{N-1}$) elliptic Calogero-Moser system have been recovered in the setting of index theory for a class of superconformal quantum field theories~\cite{GRS13}. Moreover, the latter setting also gives rise to these objects associated to the quantum elliptic $BC_1$ van Diejen model~\cite{NaRa18}.

Razamat's  A$\De$Os featuring in~\cite{Ra18}   once again involve Jacobi theta function coefficients, yielding a 1-parameter family of commuting modular pairs for the $A_2$ case, and only one modular pair for the $A_3$ case. (It may well be that there exist further independent commuting A$\De$Os satisfying the kernel identities.) With our conventions, his kernel functions are of the form
\be\label{kf2}
\cS_2(v,w,z)\equiv \prod_{k,l,m=1}^3G(v_k+w_l+z_m-\de_2),\ \ \ \de_2\equiv i(a_++a_-)/6,
\ee
in the $A_2$ case, and
\be\label{kf3}
\cS_3(d;v,w)\equiv \prod_{k,l=1}^4G(v_k+w_l-\de_3\pm d),\ \ \ \de_3\equiv i(a_++a_-)/4,\ \ \ d\in\C,
\ee
in the $A_3$ case.  

As they stand, the functions in \eqref{kf2}/\eqref{kf3} depend on vectors with 3/4 components. However, the kernel identities do not hold true in case the vectors  do not satisfy the `center-of-mass' ($A_{N-1}$) restriction
\be
\sum_{j=1}^N x_j=0,\ \ \ N=3,4,
\ee
respectively. Thus $\cS_2$ and $\cS_3$ should be viewed as functions of 6 independent variables, for which we shall choose $v_j,w_j,z_j$, $j=1,2$, and $v_j,w_j$, $j=1,2,3$, resp. 

Another conspicuous difference between~\eqref{kf} and~\eqref{kf2} is the additional vector $z$. Due to the manifest symmetry in $v$, $w$ and $z$, this gives rise to $A_2$ kernel identities  
\be\label{kid2}
A_{2,\de}(\mu;v)\cS_2(v,w,z)=A_{2,\de}(\mu;w)\cS_2(v,w,z)=A_{2,\de}(\mu;z)\cS_2(v,w,z),\ \ \ \de=+,-.
\ee
The A$\De$O family is explicitly given by
\begin{multline}\label{A2sum} 
A_{2,\de}(\mu;x)\equiv \frac{R_{\de}(x_2-x_3\pm\mu)}{R_{\de}(x_1-x_2-ia_{\de}/2)R_{\de}(x_1-x_3-ia_{\de}/2)}\exp\Big(\frac13 ia_{-\de}\big(2\partial_{x_1}-\partial_{x_2}-\partial_{x_3}\big)\Big)
\\
+  \frac{R_{\de}(x_3-x_1\pm\mu)}{R_{\de}(x_2-x_3-ia_{\de}/2)R_{\de}(x_2-x_1-ia_{\de}/2)}\exp\Big(\frac13 ia_{-\de}\big(2\partial_{x_2}-\partial_{x_3}-\partial_{x_1}\big)\Big)
\\
+ \frac{R_{\de}(x_1-x_2\pm\mu)}{R_{\de}(x_3-x_1-ia_{\de}/2)R_{\de}(x_3-x_2-ia_{\de}/2)}\exp\Big(\frac13 ia_{-\de}\big(2\partial_{x_3}-\partial_{x_1}-\partial_{x_2}\big)\Big),
\end{multline}
with the building block $R_{\de}(x)$  detailed in Appendix~A. Note that the second and third summands are obtained by cyclic permutations from the first one.

For the $A_3$ case the kernel identities read
 \be\label{kid3}
A_{3,\de}(v)\cS_3(d;v,w)=A_{3,\de}(w)\cS_3(d;v,w),\ \ \ \de=+,-.
\ee 
Here the A$\De$Os are defined by
\be\label{A3sum}
A_{3,\de}(x)\equiv\sum_{m=1}^4 A_{3,\de}^{(m)}(x),
\ee
with
\be\label{A31}
A_{3,\de}^{(1)}(x)\equiv \frac{R_{\de}(x_2-x_3)R_{\de}(x_3-x_4)R_{\de}(x_4-x_2)}{\prod_{j=2}^4R_{\de}(x_1-x_j-ia_{\de}/2)} \exp\Big(\frac14 ia_{-\de}\big(3\partial_{x_1}-\partial_{x_2}-\partial_{x_3}-\partial_{x_4}\big)\Big),
\ee
and the three summands $A_{3,\de}^{(2)},A_{3,\de}^{(3)}$ and $A_{3,\de}^{(4)}$ obtained from $A_{3,\de}^{(1)}$ by cyclic permutations.
We point out that these operators have no dependence on the parameter~$d$, so here we are dealing with a 1-parameter family of kernel functions.

As they stand, the $A_2$ operators~\eqref{A2sum} and the $A_3$ operators~\eqref{A3sum} have a well-defined action on arbitrary meromorphic functions $f(x)$ with $x\in\C^3$ and $x\in\C^4$, resp. We repeat, however, that in both cases the coordinate sums $\sum_jx_j$ are supposed to vanish, a condition that is clearly left invariant by the argument shifts at hand. 

We proceed by outlining the plan of the paper. Section~2 is concerned with the $A_2$ elliptic identities~\eqref{kid2}. We begin by proving that they do not hold true for unrestricted vectors $v,w,z\in\C^3$, cf.~Prop.~2.1. Imposing next the $A_2$ restriction in the explicit form
\be\label{x3}
 x_3\equiv -x_1-x_2,\ \ \ x=v,w,z,
\ee
we view the two functions at issue (namely $\cL_r$ defined by~\eqref{cLr} and the function~$\cR_r$ obtained by taking $v\leftrightarrow w$)  as meromorphic functions of~$v_1$ (thereby breaking the $S_3$-invariance). The equality of these functions is stated in our main result Theorem~2.2. To prove the equality, we embark on analyzing their poles. They are simple for generic values of the other variables. Combined with a quasi-periodicity property we establish, it suffices to show equality of residue sums at all $v_1$-poles in a quasi-period rectangle. 

This is straightforward for the $w$-independent poles~\eqref{vpind} and is spelled out in Lemma~2.4. By contrast, handling the $w$-dependent poles is a much more laborious enterprise. Even though the remaining symmetries enable us to consider just the single pole~\eqref{vpdep}, the proof of the pertinent Lemma~2.5 is quite long. 

To begin with, we reduce the proof of equality of the two single $v_1$-pole residue functions to a further analysis of their $v_2$-residues at the locations~\eqref{v2L} and~\eqref{v2LR}. 
The former 4 poles are only present in the summands of one of the two residue functions; they are easily verified to yield vanishing residue sums. The 4 poles at~\eqref{v2LR}, however,  occur for summands in both residue functions and involve far more effort. We have relegated their residue analysis to Sublemma~2.6. 

The upshot of this residue analysis is that with the $A_2$ sum constraint in force the asserted equality follows. Hence the elliptic identities~\eqref{kid2} 
hold true, as stated in~Corollary~2.3.
 
Section~3 concerns the $A_3$ case. In Subsection~3.1 we focus on the elliptic version~\eqref{kid3}--\eqref{A31} of the kernel identities. We show that they are not valid for unrestricted $v,w\in\C^4$, cf.~Prop.~3.1. With the $A_3$ sum constraint in effect, however,
the method we used in Section~2 becomes extremely laborious, and we are not even confident that it can be pushed through for the simpler hyperbolic specialization. 

We consider the hyperbolic case in Subsection~3.2. In Prop.~3.2 we first show that the hyperbolic kernel identities are false for unconstrained $v,w\in\C^4$, just like the elliptic ones. 
Insisting next on the sum constraint, we have abolished the method of Section~2 (after going a long way), in favor of a shorter and more insightful proof following a quite different strategy. Unfortunately, we were unable to generalize this method to the elliptic case. The  key idea is to avoid the consideration of $w$-dependent poles in the two functions at hand (once again viewed as meromorphic functions of~$v_1$).  The price to pay for this is that an asymptotic analysis as $\re v_1\to \pm \infty$ is needed, for which no elliptic counterpart exists.

Admittedly, the proof of Theorem~3.3 (which states the hyperbolic kernel identities) is still rather long. In its course, we arrived at three hyperbolic identities that seem far removed from the kernel identities, and that may have independent interest. They pertain to the ratio
\be
R_1\equiv \frac 
{ \cosh (x_2-x_3)\cosh (x_3-x_4)\cosh (x_4-x_2) }{\sinh (x_1-x_2)\sinh (x_1-x_3)\sinh (x_1-x_4)},
\ee
and the three ratios obtained by cyclic permutations. Specifically, the  identities read
\be\label{iden0}
\sum_{m=1}^4 R_m=0,
\ee
\be\label{iden1}
 \sum_{m=1}^4 R_m\cosh (u-2x_m)=\sum_{l=1}^4\sinh (u+2x_l),
\ee
\be\label{iden2}
 \sum_{m=1}^4 R_m\cosh (u-2x_m)\sinh (u+2x_m)^2 
=-\sum_{1\le l_1<l_2<l_3\le 4}\sinh (u+2x_{l_1})\sinh (u+2x_{l_2})\sinh (u+2x_{l_3}),
\ee
where $u,x_1,x_2,x_3,x_4\in\C$ and $\sum_{j=1}^4x_j=0$. We prove equivalent identities in Lemma~3.4.

  Section~4 contains some further observations and developments. In particular, we obtain various results concerning formal self-adjointness, commutativity and regime specializations.  We also present scenarios for the Hilbert space reinterpretation of the kernel identities, inspired by our previous work on this subject. We do this in the form of tentative conjectures. We consider the $A_2$ and $A_3$ cases in Subsections~4.1 and 4.2, respectively.  
  
  In Subsection~4.1 we first show in Prop.~4.1 that the A$\De$Os in the family~\eqref{A2sum} all commute, cf.~\eqref{Acom}. We then show that with suitable restrictions they can be viewed as \emph{formally} positive operators on a weighted $L^2$-space, with weight function explicitly given by~\eqref{Ws}, cf.~Prop.~4.2. It is a challenging open problem to push through the Hilbert space aspects by exploiting the elliptic kernel identities proved in~Section~2. Conjecture~4.3 summarizes the salient properties that would clarify the Hilbert space picture. 
  
Turning to the hyperbolic $A_2$ regime,  we first show in Prop.~4.4 that the hyperbolic kernel identities are not valid for unconstrained $v,w,z$, just like the elliptic ones. Then we proceed along the same lines as in the elliptic case, summarizing our expected Hilbert space scenario in Conjecture~4.5. Next, we tie in the A$\De$Os at issue in this paper with A$\De$Os we encountered in our joint work with Halln\"as~\cite{HaRu12}. The latter arise in the dual regime of the relativistic nonperiodic Toda 3-particle system, cf.~\eqref{BATr}.

For the trigonometric case we can arrive at a counterpart of the kernel identities, but there appears to be no way to reinterpret the pertinent A$\De$O $A_t(x)$~\eqref{Atr} as being self-adjoint on a suitable weighted $L^2$-space. Indeed, it would be a prerequisite to obtain \emph{formal} self-adjointness, and we were not able to find a weight function $W_t(x)$ so that $A_t(x)$ is formally self-adjoint with respect to the measure $W_t(x)dx$.

For the rational A$\De$O $A_r(x)$~\eqref{A2rsum}, obtained as a limit of the hyperbolic or trigonometric one, the kernel identities become trivial: They express that the constant function is a zero-eigenvalue eigenfunction of $A_r(x)$. Essentially the same A$\De$O already arose in~\cite{HaRu12}, namely for the dual nonrelativistic nonperiodic Toda 3-particle system, cf.~\eqref{AATn}.

In Subsection~4.2 we study the $A_3$ case along the same lines as the $A_2$ case. In particular, we show that the A$\De$Os $A_{3,\pm}(x)$ can be reinterpreted as formally positive operators on a weighted $L^2$-space, cf.~Prop.~4.6. We recall that in Subsection~3.1 we have not proved the $A_3$ elliptic kernel identities, but we do expect them to be true. It is very likely that they would be a crucial tool in substantiating the Hilbert space scenario for the elliptic regime, which we encode in Conjecture~4.7.  

For the hyperbolic case we have proved the kernel identities in Subsection~3.2. Conjecture~4.8 summarizes the Hilbert space features we expect in this regime. The above identity~\eqref{iden0} entails that the constant function is a zero-eigenvalue eigenfunction of the hyperbolic A$\De$Os $A_{3,\pm}(x)$ (given by~\eqref{A3sum} and~\eqref{A31h}), but this function is not square-integrable with respect to the pertinent hyperbolic measure.

The trigonometric case again yields kernel identities~\eqref{Atid} for an A$\De$O $A_t(x)$~\eqref{At3} that we cannot even `promote' to a \emph{formally} self-adjoint Hilbert space operator. The limiting kernel identity for the rational A$\De$O $A_r(x)$~\eqref{A3rsum} once more amounts to the constant function being a zero-eigenvalue eigenfunction. This operator is basically equal to an 
  A$\De$O in~\cite{HaRu12}, associated with the dual nonrelativistic nonperiodic Toda 4-particle system, cf.~\eqref{AATn3}.

In Appendix~A we have summarized information about the generalized gamma functions we have occasion to use. It also serves to introduce the elliptic building blocks $R_{\pm}(x)$ and $s_{\pm}(x)$, and to fix conventions and notations.
  

\section{The $A_2$ identities}

 By modular invariance we need only consider one of the modular partners $A_{2,\pm}(\mu;x)$. 
Choosing $A_{2,+}(\mu;x)$, we rewrite this A$\De$O in the shorter form 
\be\label{Ax}
A(x)=\frac{R(x_2-x_3\pm\mu)}{R(x_1-x_2- ia_+/2)R(x_1-x_3- ia_+/2)}\exp\Big(\frac13 ia_-\big(2\partial_1-\partial_2-\partial_3\big)\Big)+\mathrm{cyclic},
\ee
where we have abbreviated $R_+$ as $R$.
From this we readily calculate (by using the difference equations~\eqref{Geades} obeyed by the elliptic gamma function) 
\begin{multline}\label{AS2}
\frac{A(v)\cS_2}{\prod_{k,l,m=1}^3G(v_k+w_l+z_m-ia_-/3-i(a_++a_-)/6)}
\\
=\frac{R(v_2-v_3\pm\mu)}{R(v_1-v_2-t)R(v_1-v_3-t)}
 \prod_{l,m=1}^3R(v_1+w_l+z_m-t+c)+\mathrm{cyclic},
\end{multline}
with  new parameters
\be
t\equiv ia_+/2,\ \ \ c\equiv ia_+/3.
\ee 
We now study whether the function on the right-hand side of~\eqref{AS2} is invariant under swapping $v$ and $w$. 

To this end we divide the  right-hand side by the symmetric product
\be
\prod_{k,l,m=1}^3R(v_k+w_l+z_m-t+c),
\ee
yielding the function
\be\label{cL} 
\cL\equiv \frac{R(v_2-v_3\pm\mu)}{R(v_1-v_2-t)R(v_1-v_3-t)}
 \prod_{k\ne 1,l,m}\frac{1}{R(v_k+w_l+z_m-t+c)}+\mathrm{cyclic}.
\ee
The question is now whether $\cL$ equals
\be\label{cR}
\cR\equiv \frac{R(w_2-w_3\pm\mu)}{R(w_1-w_2-t)R(w_1-w_3-t)}
 \prod_{k,l\ne 1,m}\frac{1}{R(v_k+w_l+z_m-t+c)}+\mathrm{cyclic}.
\ee

At this point it can already be seen that this is false without the $A_2$ constraint
\be\label{A2con}
x_1+x_2+x_3=0,\ \ \ x=v,w,z.
\ee
Equivalently,  the kernel identities are false for unrestricted variables.

\begin{proposition}
Letting $v,w,z\in\C^3$, the functions $\cL$ and $\cR$ are not equal.
\end{proposition}
\begin{proof}
Let us view $\cL-\cR$ as a meromorphic function of $v_1$. For generic values of the remaining variables this function is a sum of 6 terms that have simple poles. Consider now the poles arising for $v_1=v_2+2kt$, $k\in \Z$. They only occur  in the first two summands. Upon multiplication  by $R(v_1-v_2-t)$, their sum becomes
\begin{multline}\label{sumne1}
 \prod_{l,m}\frac{1}{R(v_3+w_l+z_m-t+c)}\Big(\frac{R(v_2-v_3\pm\mu)}{R(v_1-v_3-t)}
 \prod_{l,m}\frac{1}{R(v_2+w_l+z_m-t+c)}
 \\
 -\exp[2ir(v_1-v_2)]\frac{R(v_3-v_1\pm\mu)}{R(v_2-v_3-t)}
 \prod_{l,m}\frac{1}{R(v_1+w_l+z_m-t+c)}\Big),
\end{multline}
where we used~\eqref{Rade}. Clearly, this sum vanishes identically for $v_1=v_2$. But for $v_1=v_2+2t$, the sum in brackets equals 
\begin{multline}\label{sumne2}
 \frac{R(v_2-v_3\pm\mu)}{R(v_2-v_3-t)}
 \prod_{l,m}\frac{1}{R(v_2+w_l+z_m-t+c)}\Big(-\exp[2ir(v_2-v_3)]
 \\
 +\exp\Big[2ir\Big(2t+2(v_3-v_2-t)+9(v_2+c)+3\sum_{l}w_l+3\sum_mz_m\Big)\Big]\Big),
\end{multline}
a function that obviously is nonzero. This nonvanishing residue entails that $\cL$ and $\cR$ are not equal for arbitrary $v,w,z\in\C^3$, as asserted.
\end{proof}

More generally, when we take $v_1\to v_1+2t$, then the 3 summands of $\cL$ pick up multipliers 
\be\label{muj3}
\mu_1 \exp 2ir (2v_1),\ \ \mu_2 \exp 2ir (8v_1),\ \ \mu_3 \exp 2ir (8v_1), 
\ee
whereas in $\cR$ this yields multipliers 
\be\label{nuj} 
\nu_j \exp 2ir (6v_1),\ \ \ j=1,2,3,
\ee
with $\mu_j$ and $\nu_j$ not depending on $v_1$. (As in the proof just given, this can be verified by using~\eqref{Rade}.)

From now on we impose the $A_2$ constraint~\eqref{A2con}, in the more specific form of~\eqref{x3}. 
Then the two transpositions $x_1\leftrightarrow x_2$  and $x_2\leftrightarrow x_3$ generating $S_3$ amount to the maps
\be
(x_1,x_2)\mapsto(x_2,x_1),\ \ \ (x_1,x_2)\mapsto (x_1,-x_1-x_2),
\ee
and the A$\De$O~\eqref{Ax} turns into
\begin{multline}\label{A2r} 
A_r(x)= \frac{R(x_1+2x_2\pm\mu)}{R(x_1-x_2-t)R(2x_1+x_2-t)}\exp\Big(\frac13 ia_-\big(2\partial_{x_1}-\partial_{x_2} \big)\Big)
\\
+\frac{R(2x_1+x_2\pm\mu)}{R(x_1+2x_2-t)R(x_1-x_2+t)}\exp\Big(\frac13 ia_-\big(-\partial_{x_1}+2\partial_{x_2} \big)\Big)
\\
+\frac{R(x_1-x_2\pm\mu)}{R(2x_1+x_2+t)R(x_1+2x_2+t)}\exp\Big(\frac13 ia_-\big(-\partial_{x_1}-\partial_{x_2} \big)\Big) .
\end{multline}

In sharp contrast to the above unequal multipliers for the unconstrained variables,  for all of the 6 summands at issue the shift $v_1\to v_1+2t$ now gives rise to a multiplier
\be\label{eqmult}
\exp[2ir(6ia_++6v_2)]\exp[2ir(12v_1)].
\ee
This state of affairs is not immediate from~\eqref{cL} and~\eqref{cR}, but our variable restriction~\eqref{x3}  turns the 9-variable function $\cL$ into the 6-variable function
\begin{multline}\label{cLr} 
\cL_r= \frac{(v_1+2v_2\pm\mu)}{(v_1-v_2-t)(2v_1+v_2-t)}  \prod_{l,m}\frac{1}{(v_2+w_l+z_m-t+c)(-v_1-v_2+w_l+z_m-t+c)}\\
+\frac{(2v_1+v_2\pm\mu)}{(v_1+2v_2-t)(v_1-v_2+t)} 
 \prod_{l,m}\frac{1}{(-v_1-v_2+w_l+z_m-t+c)(v_1+w_l+z_m-t+c)}
\\
+\frac{(v_1-v_2\pm\mu)}{(2v_1+v_2+t)(v_1+2v_2+t)}
 \prod_{l,m}\frac{1}{(v_1+w_l+z_m-t+c)(v_2+w_l+z_m-t+c)} .
\end{multline}
(Here and from now on in this section, we abbreviate $R_+(x)$, $x\in\C$, to $(x)$.) From this restricted form~$\cL_r$ of $\cL$ and the corresponding form~$\cR_r$ of $\cR$ (obtained by swapping $v$ and $w$), the multiplier assertion~\eqref{eqmult} can be readily verified by using~\eqref{Rade}. 

The equal multiplier property is a key ingredient for showing $\cL_r=\cR_r$. Indeed, it implies that we need only show that the residue sums of $\cL_r-\cR_r$, viewed as a meromorphic and $\pi/r$-periodic function of~$v_1$, are zero for the (generically simple) poles
\be\label{vpind}
v_1=v_2,-2v_2,-v_2/2+\omega_j,\ \ \ j=0,1,2,3,
\ee
with
\be
\omega_0:=0,\ \   \omega_1:=\pi/2r,\ \   \omega_2:=t,\ \   \omega_3:=  t+\pi/2r,
\ee
and for
\be\label{vpdep}
v_1=-w_1-z_1-c,\ \ c=ia_+/3.
\ee
To see this, note that when this holds true, we may invoke $S_3$ symmetry to deduce vanishing residue sums for all of the 9 poles
\be
v_1=-w_l-z_m-c,\ \ l,m=1,2,3,
\ee
and then it follows that $\cL_r-\cR_r$ is an entire and $\pi/r$-periodic function  of~$v_1$ with multiplier~\eqref{eqmult} under the shift $v_1\to v_1+2t$.

It is widely known that this state of affairs implies that $\cL_r-\cR_r$ must vanish identically, but it may be helpful   to add a short proof. 
Introducing new variables
\be
z:=\exp (2irv_1),\ \ \ q:= \exp (-2ra_+), 
\ee
we may and shall view $\cL_r-\cR_r$ as a function $F(z)$ that is analytic in $\C^{*}$ and that satisfies
\be\label{Fq}
F(qz)=Cz^{12} F(z),\ \ \ C:=\exp[2ir(6ia_++6v_2)].
\ee
As a consequence, $F(z)$ admits a representation as a Laurent series
\be
F(z)=\sum_{k=-\infty}^{\infty} a_kz^k,
\ee
which converges in $\C^*$. This implies in particular
\be\label{az}
\lim_{k\to\pm\infty}a_k=0.
\ee
Now from~\eqref{Fq} we readily obtain
\be
a_k=Cq^{-k} a_{k-12}.
\ee
Since we have $|q|<1$, the assumption that there exists a nonzero $a_{k_0}$ leads to a contradiction to~\eqref{az}: It entails that the sequence $a_{k_0+12m}$ diverges for $m\to\infty$.

We are now prepared to state the main results of this section.

\begin{theorem}
The function $\cL_r$ defined by~\eqref{cLr} and the function~$\cR_r$ obtained by swapping $v$ and $w$ are equal.
\end{theorem}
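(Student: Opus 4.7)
The plan is to regard $F := \cL_r - \cR_r$ as a meromorphic function of $v_1$ alone, with the five remaining variables held at generic values, and show that $F \equiv 0$ by combining a residue analysis with the quasi-periodicity argument already in hand. The discussion leading to~\eqref{Fq} shows that $F$ is $\pi/r$-periodic in $v_1$ and transforms by the multiplier~\eqref{eqmult} under $v_1 \to v_1 + 2t$, and that any meromorphic function with these two properties which is moreover entire in $v_1$ must vanish identically. Hence it suffices to prove that the residue sums of $F$ cancel at every pole within one quasi-period strip. These poles are generically simple, and they split naturally into a $w$-independent family at the locations listed in~\eqref{vpind} and a $w$-dependent family of the form $v_1 = -w_l - z_m - c$, of which~\eqref{vpdep} is a representative.

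I would first dispose of the $w$-independent family. These residues originate exclusively from the pure-$v$ denominator factors of $\cL_r$, and at each point of~\eqref{vpind} the two summands carrying a pole can be paired off using the $R$-shift relation~\eqref{Rade}; the cancellation then follows directly by invoking $v_3 = -v_1 - v_2$ to align the arguments. The mechanism is of the same type as the one displayed in \eqref{sumne1}--\eqref{sumne2} in the proof of Proposition~2.1, but now---crucially---\emph{with} the constraint \eqref{x3} in force, the apparent nonvanishing residue actually collapses to zero. I would package this step as Lemma~2.4.

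The main obstacle is the $w$-dependent family, at which both $\cL_r$ and $\cR_r$ carry poles, so no cancellation can take place within a single side of~$F$. By the remaining relabeling symmetries in $(w_1,w_2,w_3)$ and $(z_1,z_2,z_3)$, it is enough to show cancellation at the single representative pole $v_1 = -w_1 - z_1 - c$ from~\eqref{vpdep}. After this substitution the residue becomes a meromorphic function of the remaining variable $v_2$, and I would iterate the same strategy one dimension down: identify the $v_2$-pole set, check that it splits into a benign subfamily living on only one side of~$F$ (handled by an elementary pairing argument) and a stubborn subfamily shared between the two sides, verify an appropriate quasi-periodicity under a $v_2$-shift, and conclude entireness. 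The residue analysis at the shared $v_2$-subfamily is where the genuine analytic content sits: one brings the contributing summands over a common denominator, uses~\eqref{Rade} to convert products and ratios of $R$-factors into a relation among the elliptic building blocks $s_\pm$, and verifies the resulting short theta identity as a consequence of the two sum constraints. I would isolate this shared-subfamily computation as Sublemma~2.6 and assemble it with the other inputs into Lemma~2.5, which closes the theorem.
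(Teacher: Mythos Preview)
Your proposal is correct and follows essentially the same architecture as the paper's proof: reduce to a $v_1$-residue analysis via the shared multiplier~\eqref{eqmult}, clear the $w$-independent poles~\eqref{vpind} by pairing summands (Lemma~2.4), and for the representative $w$-dependent pole~\eqref{vpdep} iterate the strategy in $v_2$, splitting the $v_2$-poles into a one-sided family~\eqref{v2L} and a shared family~\eqref{v2LR} handled in Sublemma~2.6. The only minor discrepancy is cosmetic: the paper's Sublemma~2.6 does not pass to the $s_\pm$-functions but stays entirely within the $R_+$-calculus, treating each of the four shared $v_2$-poles by a further elliptic-quotient argument in an auxiliary variable rather than invoking a single ``short theta identity.''
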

\begin{corollary}
With the constraint
\be
\sum_{j=1}^3 x_j=0,\ \ \ x=v,w,z,
\ee
in effect, the $A_2$ elliptic kernel identities~\eqref{kid2} hold true.
\end{corollary}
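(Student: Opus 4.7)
The Corollary is an immediate consequence of Theorem~2.2. The similarity-transformation identity~\eqref{AS2} expresses $A(v)\cS_2$ as a product of the fully symmetric prefactor $\prod_{k,l,m}G(v_k+w_l+z_m-ia_-/3-i(a_++a_-)/6)\cdot\prod_{k,l,m}R(v_k+w_l+z_m-t+c)$ and the function $\cL$, which reduces to $\cL_r$ once the constraint~\eqref{x3} is imposed. Swapping $v\leftrightarrow w$ inside the desired identity $A(v)\cS_2=A(w)\cS_2$ produces $\cR_r$ on the right-hand side, so the equality $\cL_r=\cR_r$ asserted in Theorem~2.2 is exactly what is needed. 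The third equality involving $z$ follows from the manifest $S_3$-invariance of $\cS_2$ and of the constraint in the triple $(v,w,z)$, and modular invariance then yields the companion identity for $\de=-$, giving~\eqref{kid2} in full.

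For Theorem~2.2 itself, the plan is dictated by the discussion preceding the statement. I would view $D(v_1) := \cL_r(v_1) - \cR_r(v_1)$ as a meromorphic, $\pi/r$-periodic function of $v_1$ for generic values of $v_2,w_1,w_2,z_1,z_2$, and exploit the equal-multiplier property~\eqref{eqmult} under $v_1\mapsto v_1+2t$. Combined with the Laurent-series argument spelled out after~\eqref{Fq}, this reduces the theorem to showing that every residue of $D$ at its (generically simple) poles in one fundamental domain vanishes. The pole set splits into the $w$-independent locations~\eqref{vpind} and the nine $w$-dependent locations $v_1=-w_l-z_m-c$; the $S_3$-symmetry in $(w,z)$ inherited from $\cS_2$ lets one reduce the latter family to the single representative~\eqref{vpdep}.

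The $w$-independent poles are the tractable part: at any such location only a small subset of the six summands of $D$ contributes, and the shift relations~\eqref{Rade} for $R$ permit a direct verification of each residue cancellation, in exact parallel with the single-pole computation carried out in the proof of Proposition~2.1. The real obstacle is the $w$-dependent pole~\eqref{vpdep}, where summands from both $\cL_r$ and $\cR_r$ contribute, and the residue is itself a nontrivial meromorphic function of $v_2$. I would dispose of it by a second round of residue analysis in the $v_2$-variable, once more using quasi-periodicity to localize the problem at the pole sets~\eqref{v2L} and~\eqref{v2LR}. The poles at~\eqref{v2L} should vanish by the same elementary mechanism as in the $w$-independent case, whereas the four poles at~\eqref{v2LR} force a genuine elliptic identity (the content of Sublemma~2.6): an equality between two expressions carrying matching quasi-periods and simple poles, provable by a Liouville-type argument after their local behavior has been matched. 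This last step is where I expect essentially all of the technical difficulty to reside.
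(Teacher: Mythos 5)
Your proposal is correct and follows the same approach as the paper: the corollary is deduced from Theorem~2.2 exactly as you describe (via the factorization through the fully symmetric prefactor, $S_3$-symmetry of $\cS_2$ to obtain the $z$-equality, and modular invariance for $\de=-$), and your outline of the residue-based proof of Theorem~2.2 — quasi-periodicity/Laurent argument in $v_1$, splitting into $w$-independent poles~\eqref{vpind} and the representative $w$-dependent pole~\eqref{vpdep}, then a second residue analysis in $v_2$ localizing at~\eqref{v2L} and~\eqref{v2LR} — matches the structure of the paper's Lemmas~2.4, 2.5 and Sublemma~2.6.
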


This corollary is a direct consequence of the theorem and modular invariance.  We prove the theorem by detailing the vanishing of the residue sums at the   poles~\eqref{vpind} that do not depend on $w$ and~$z$ in Lemma~2.4, and at the pole~\eqref{vpdep} in Lemma~2.5. As already shown, this suffices to deduce Theorem~2.2. 

\begin{lemma}
The residue sums of $\cL_r-\cR_r$ at the 6 poles~\eqref{vpind} are zero.
\end{lemma}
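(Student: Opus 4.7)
The starting observation is that $\cR_r$, obtained from $\cL_r$ by the swap $v\leftrightarrow w$, depends on $v_1$ only through factors of the form $R_+(v_1+w_l+z_m-t+c)^{-1}$ and $R_+(v_1-w_1-w_2+z_m-t+c)^{-1}$. Every $v_1$-pole of $\cR_r$ therefore depends on $w$ or $z$ and, for generic parameters, avoids the six $w,z$-independent poles \eqref{vpind}. Hence it suffices to show that $\cL_r$ itself has vanishing residue at each of these six poles.

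A direct inspection of the denominators in \eqref{cLr}, combined with the fact that $R_+$ has simple zeros precisely at $(2n+1)t+k\pi/r$, $n,k\in\Z$, reveals that each pole receives contributions from exactly two of the three summands of $\cL_r$: $v_1=v_2$ from summands 1 and 2 via $R_+(v_1-v_2\mp t)$; $v_1=-2v_2$ from summands 2 and 3 via $R_+(v_1+2v_2\mp t)$; and $v_1=-v_2/2+\omega_j$ from summands 1 and 3 via $R_+(2v_1+v_2\mp t)$. In each case the plan is to write the two residues as a common prefactor (built from the surviving denominator factor, the $\pm\mu$-numerator and the product $\prod_{l,m}R_+(\cdot)^{-1}$) multiplied respectively by $1/R_+'(-t)$ and $1/R_+'(t)$; since $R_+$ is even, $R_+'(-t)=-R_+'(t)$, so the two contributions cancel.

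For $v_1=v_2$ and $v_1=-2v_2$ the equality of the two common prefactors is manifest from~\eqref{cLr}: the surviving denominator factors reduce to $R_+(3v_2-t)$ resp.\ $R_+(-3v_2+t)$, and the relevant products over $(l,m)$ are already identical before any rearrangement. For $v_1=-v_2/2+\omega_j$ with $j=0,1$ the matching follows after invoking $R_+(-x)=R_+(x)$ and, when $j=1$, the $\pi/r$-periodicity $R_+(x+\pi/r)=R_+(x)$: both numerators reduce to $R_+(3v_2/2+\omega_j\pm\mu)$, and the products $P_1,P_3$ obtained from \eqref{cLr} by setting $v_1=-v_2/2+\omega_j$ coincide.

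The main obstacle is the pair of cases $j=2,3$, where $\omega_j$ contains the imaginary quasi-period $t=ia_+/2$. There the numerators, the surviving denominator factor and the products $P_1,P_3$ no longer match outright but pick up multipliers governed by the quasi-periodicity $R_+(x+2t)=f(x)R_+(x)$ recorded in Appendix~A, while $1/R_+'(-t),1/R_+'(t)$ are replaced by $1/R_+'(t),1/R_+'(3t)$ with $R_+'(3t)=f(t)R_+'(t)$. The required cancellation thus reduces to an explicit product identity among $f$-multipliers; here the $A_2$ constraint $\sum_l w_l=\sum_m z_m=0$ is crucial in collapsing the nine-fold $f$-product coming from $P_1/P_3$ down to a $v_2$- and $w,z$-independent constant. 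This identity I would verify by direct computation from the explicit exponential form of $f$, its consistency being guaranteed a priori by the uniform-multiplier observation \eqref{eqmult}.
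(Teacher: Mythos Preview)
Your approach is correct and essentially coincides with the paper's: you correctly observe that $\cR_r$ is regular at the six poles, pair up the two contributing summands of $\cL_r$ at each, and cancel them using evenness/periodicity of $R_+$ for $j=0,1$ and the quasi-periodicity (A$\Delta$E) plus the sum constraint for $j=2,3$. The paper streamlines the bookkeeping by multiplying through by the single factor $R_+(2v_1+v_2-t)$ and taking the limit (so the ratio $R_+(2v_1+v_2-t)/R_+(2v_1+v_2+t)\to -\exp[2ir(2\omega_j)]$ via~\eqref{Rade} replaces your separate tracking of $R_+'(\pm t)$ and $R_+'(3t)$); carrying out your $f$-multiplier product for $j=2$ you will find it collapses, after the sum constraints $\sum_l w_l=\sum_m z_m=0$, to the single relation $-6t+9c=0$, which holds since $t=ia_+/2$, $c=ia_+/3$ --- so do write this out rather than defer it, and note that the appeal to~\eqref{eqmult} is only a consistency check, not a substitute for the computation.
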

\begin{proof} Recalling our convention $(x):=R_+(x)$, we successively consider the 6 cases at issue.

($\mathbf {v_1=v_2}$) We multiply by $(v_1-v_2-t)$ and then take $v_1\to v_2$ to get
\begin{multline}
\frac{(3v_2\pm\mu)}{(3v_2-t)}
 \prod_{l,m}\frac{1}{(v_2+w_l+z_m-t+c)(-2v_2+w_l+z_m-t+c)}
\\
-  \frac{(3v_2\pm\mu)}{(3v_2-t)}
 \prod_{l,m}\frac{1}{(-2v_2+w_l+z_m-t+c)(v_2+w_l+z_m-t+c)},
\end{multline}
which vanishes.

($\mathbf{v_1=-2v_2}$) We multiply by $(v_1+2v_2-t)$ and then take $v_1\to -2v_2$ to get
\begin{multline}
\frac{(-3v_2\pm\mu)}{(-3v_2+t)}
 \prod_{l,m}\frac{1}{(v_2+w_l+z_m-t+c)(-2v_2+w_l+z_m-t+c)}
\\
-  \frac{(-3v_2\pm\mu)}{(-3v_2+t)}
 \prod_{l,m}\frac{1}{(-2v_2+w_l+z_m-t+c)(v_2+w_l+z_m-t+c)}=0.
\end{multline}

($\mathbf{v_1=-v_2/2+\omega_j}$) We multiply by $(2v_1+v_2-t)$ and then take $v_1\to -v_2/2+\omega_j$, yielding
\begin{multline}\label{4pcal}
\frac{(3v_2/2+\omega_j\pm\mu)}{(-3v_2/2+\omega_j-t)}
 \prod_{l,m}\frac{1}{(v_2+w_l+z_m-t+c)(-v_2/2-\omega_j+w_l+z_m-t+c)}
\\
- \exp[2ir(2\omega_j )]\frac{(-3v_2/2+\omega_j\pm\mu)}{(3v_2/2+\omega_j+t)}
\\
\times  \prod_{l,m}\frac{1}{(-v_2/2+\omega_j+w_l+z_m-t+c)(v_2+w_l+z_m-t+c)}.
\end{multline}
This expression clearly vanishes for $j=0$ and also for $j=1$ (using $\pi/r$-periodicity). For $j=2$ it is proportional to
\be\label{p2cal}
1-\exp\Big[2ir\Big((2t)+(3v_2)+(3v_2/2+t)+9(-v_2/2-t+c)+3\sum_lw_l+3\sum_mz_m\Big)\Big].
\ee
Using $-6t+9c=0$ and the sum constraint~\eqref{A2con}, we infer this multiplier vanishes. Likewise, using once more $\pi/r$-periodicity, we get a vanishing residue sum for $j=3$. 
\end{proof}

\begin{lemma}
The residue sum of $\cL_r-\cR_r$ at the  pole~\eqref{vpdep} vanishes.
\end{lemma}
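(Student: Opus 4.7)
My plan is to adapt the residue-sum reduction used for Lemma~2.4, now carried out in the variable~$v_2$. First I would identify which summands of $\cL_r$ and $\cR_r$ actually carry the pole~\eqref{vpdep}. Since $R_+$ has its unique zero modulo the period lattice $(\pi/r,2t)$ at $-t$, the pole arises solely from the factor $R_+(v_1+w_1+z_1-t+c)^{-1}$, whose argument equals $-t$ at $v_1=-w_1-z_1-c$. Inspecting~\eqref{cLr} shows that this factor sits in the second and third summands of $\cL_r$ (those indexed by $k=1$ in the internal product), and the analogous inspection of $\cR_r$ gives two of its three summands. Multiplying $\cL_r-\cR_r$ by $R_+(v_1+w_1+z_1-t+c)$ and letting $v_1\to -w_1-z_1-c$ therefore yields a difference $\Phi(v_2)=\phi_\cL(v_2)-\phi_\cR(v_2)$ with two contributions from each side.

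Next I would verify that $\Phi$, viewed as a meromorphic function of~$v_2$, picks up a single common multiplier under $v_2\to v_2+2t$. This is a direct quasi-periodicity check using~\eqref{Rade}, parallel to the derivation of~\eqref{eqmult} and using $v_3=-v_1-v_2$, the sum constraints $\sum_l w_l=\sum_m z_m=0$, and the identity $-6t+9c=0$. Once this uniform multiplier is established, the Laurent-series argument displayed between~\eqref{Fq} and~\eqref{az} reduces the claim $\Phi\equiv 0$ to the vanishing of the $v_2$-residue sums of $\Phi$ at the simple poles in a single fundamental rectangle of periods $\pi/r$ and~$2t$.

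These $v_2$-poles split, as announced in the introduction, into two groups of four. The first group, at the locations~\eqref{v2L}, consists of poles present in the summands of only one of $\phi_\cL$ or $\phi_\cR$, coming from $v_2$-only denominator factors; for each of them, vanishing of the residue sum follows by a short termwise manipulation, essentially a copy of the computations~\eqref{4pcal}--\eqref{p2cal} used in Lemma~2.4. The second group, at the locations~\eqref{v2LR}, consists of poles from factors of the form $R_+(v_2+w_l+z_m-t+c)^{-1}$ (and their $v_2$-translates) that are simultaneously present in summands of both $\phi_\cL$ and $\phi_\cR$; the cancellation here must be extracted by combining residues across the two sides. This is the residue analysis I would isolate as Sublemma~2.6.

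The main obstacle is this second group. For the first group, cancellation occurs within $\phi_\cL$ or within $\phi_\cR$ separately, and the sum constraints together with $-6t+9c=0$ enter in a way that mirrors the phase-factor calculation~\eqref{p2cal}. For the second group, four summands (two from each side) contribute to each single pole, and only after expressing the $R_+(\cdot\pm\mu)$-numerators in comparable form and carefully aligning the exponential multipliers by means of~\eqref{Rade}, the $A_2$ constraints $\sum_l w_l=\sum_m z_m=0$, and $-6t+9c=0$, does the combined residue vanish. Once Sublemma~2.6 is in place, the combined vanishing of the two residue-sum groups yields $\Phi\equiv 0$ via the reduction above, completing the proof of Lemma~2.5.
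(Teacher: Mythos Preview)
Your strategy matches the paper's: take the $v_1$-residue, view it as a function of $v_2$, establish a uniform quasi-periodicity multiplier, and reduce to a finite residue check. However, you skip a step that is not cosmetic. Before the $v_2$-analysis, the paper extracts a large common factor from the two residue functions, rewriting them as $C\cdot E_{\cL}$ and $C\cdot E_{\cR}$ with the \emph{same} $C$ (the passage \eqref{Lsub}--\eqref{ER}); only the reduced functions $E_{\cL},E_{\cR}$ have precisely the $4+4$ pole structure \eqref{v2L}--\eqref{v2LR} you cite. Your raw $\Phi=\phi_{\cL}-\phi_{\cR}$ has, in addition, simple $v_2$-poles coming from the six $v_2$-dependent factors of $C$ (for instance at $v_2=-w_k-z_k-c$, $k=1,2,3$, and at $v_2=2w_1+2z_1-c$, $v_2=-w_2-z_3-c$, $v_2=-w_3-z_2-c$). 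At each such pole $p$ the residue of $\Phi$ equals $(\mathrm{res}_{v_2=p}\,C)\cdot\bigl(E_{\cL}(p)-E_{\cR}(p)\bigr)$, which vanishes only if $E_{\cL}(p)=E_{\cR}(p)$---and that is exactly what remains to be proved. So without the common-factor extraction your residue enumeration is incomplete and the argument circular at those extra poles.

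Your multiplier claim for $\Phi$ is fine (since $C$ contributes the same multiplier to both sides), and your description of the \eqref{v2L} and \eqref{v2LR} cancellations is accurate once one works with $E_{\cL}-E_{\cR}$. The fix is simply to insert the factorization step: match the $v_2$-dependent denominator factors shared by $\phi_{\cL}$ and $\phi_{\cR}$, cancel them, and then run your residue program on the quotient. This is exactly what the paper does, and it is where most of the algebraic bookkeeping in the proof lives.
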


\begin{proof} We multiply $\cL_r$~\eqref{cLr} and $\cR_r$ by $(v_1+w_1+z_1-t+c)$ and then let $v_1$ converge to $-w_1-z_1-c$. Pulling out the obvious common factors, this yields for~$\cL_r$
\begin{multline}\label{Lsub}
\prod_{(l,m)\ne (1,1)} (-w_1-z_1+w_l+z_m-t)^{-1}
\\
\times\Big(\frac{(-2w_1-2z_1-2c+v_2\pm\mu)}{(-w_1-z_1-c+2v_2-t)(-w_1-z_1-c-v_2+t)}
\\
\times \prod_{l,m} (-w_1-z_1+v_2-w_l-z_m+t-2c)^{-1}
\\
+\frac{(-w_1-z_1-c-v_2\pm\mu)}{(-2w_1-2z_1-2c+v_2+t)(-w_1-z_1-c+2v_2+t)} 
\\
\times\prod_{l,m} (v_2+w_l+z_m-t+c)^{-1}\Big),
\end{multline}
and for $\cR_r$
\begin{multline}\label{Rsub}
\prod_{m>1}(-z_1+z_m-t)^{-1}\prod_m \Big[ (w_1+v_2+z_m-t+c)(2w_1+z_1-v_2+z_m-t+2c) \Big]^{-1}
\\
\times\Big(\frac{(2w_1+w_2\pm\mu)}{(w_1+2w_2-t)(w_1-w_2+t)}
\\
\times\prod_m [(2w_1+w_2+z_1-z_m+t)(w_1+w_2-v_2-z_m+t-c)(w_2-z_1+v_2-z_m+t-2c)]^{-1}
\\
+\frac{(w_1-w_2\pm\mu)}{(2w_1+w_2+t)(w_1+2w_2+t)}
\\
\times\prod_m [(w_2-w_1-z_1+z_m-t)(w_2+v_2+z_m-t+c)(w_2+w_1+z_1-v_2+z_m-t+2c)]^{-1}\Big).
\end{multline}

In order to pull out further common factors, we now recall $t=ia_+/2,c=ia_+/3$, 
to obtain
\be\label{x2c}
(x+t-2c)= (x-ia_+/6)=(x-t+c).
\ee
Using this we can rewrite~\eqref{Lsub} as
\begin{multline}\label{Lsub2}
\prod_{m>1}(-z_1+z_m-t)^{-1}\prod_{m} (-w_1-z_1+w_2+z_m-t)^{-1}(-w_1-z_1+w_3+z_m-t)^{-1}
\\
\times\Big(\frac{(-2w_1-2z_1-2c+v_2\pm\mu)}{(-w_1-z_1+2v_2-t-c)(-w_1-z_1-v_2+t-c)} \prod_{l,m} (v_2-w_1-z_1-w_l-z_m-t+c)^{-1}
\\
+\frac{(-w_1-z_1-c-v_2\pm\mu)}{(-2w_1-2z_1+v_2-t+c)(-w_1-z_1+2v_2+t-c)} \prod_{l,m} (v_2+w_l+z_m-t+c)^{-1}\Big).
\end{multline}
Consider next the factors with $l=m$. A moment's thought shows they give rise to 4 common factors, yielding
\begin{multline}\label{Lsub3}
\Big(\prod_{m>1}(-z_1+z_m-t)\prod_{m} (-w_1-z_1+w_2+z_m-t)(-w_1-z_1+w_3+z_m-t)\Big)^{-1}
\\
\times\Big( (v_2-2w_1-2z_1-t+c)\prod_k(v_2+w_k+z_k-t+c) \Big)^{-1}
\\
\times\Big(\frac{(-2w_1-2z_1-2c+v_2\pm\mu)}{(-w_1-z_1+2v_2-t-c) }  \prod_{l\ne m} (v_2-w_1-z_1-w_l-z_m-t+c)^{-1}
\\
+\frac{(-w_1-z_1-c-v_2\pm\mu)}{ (-w_1-z_1+2v_2+t-c)} \prod_{l\ne m} (v_2+w_l+z_m-t+c)^{-1}\Big).
\end{multline}
Finally, we can match the two factors with $(l,m)=(2,3),(3,2)$, which gives
\begin{multline}\label{Lsub4}
\Big(\prod_{m>1}(-z_1+z_m-t)\prod_{m} (-w_1+w_2-z_1+z_m-t)(-w_1+w_3-z_1+z_m-t)\Big)^{-1}
\\
\times\Big((v_2+w_2+z_3-t+c)(v_2+w_3+z_2-t+c) (v_2-2w_1-2z_1-t+c)\prod_k(v_2+w_k+z_k-t+c) \Big)^{-1}\times S_{\cL},
\end{multline}
with
\begin{multline}\label{SL} 
S_{\cL}\equiv \frac{(v_2-2w_1-2z_1-2c\pm\mu)}{(2v_2-w_1-z_1-t-c) }
\\
\times\Big( (v_2-2w_1+z_3-t+c) (v_2+w_3-2z_1-t+c) (v_2-2w_1+z_2-t+c) (v_2+w_2-2z_1-t+c)\Big)^{-1}
\\
+\frac{(v_2+w_1+z_1+c\pm\mu)}{ (2v_2-w_1-z_1+t-c)}
\\
\times \Big(  (v_2+w_1+z_2-t+c)(v_2+w_2+z_1-t+c)(v_2+w_1+z_3-t+c)(v_2+w_3+z_1-t+c)\Big)^{-1}.
\end{multline}

Next, we proceed in a similar way for \eqref{Rsub}. Thus we first get, using~\eqref{x2c},
\begin{multline}\label{Rsub2}
\Big(\prod_{m>1}(-z_1+z_m-t)\prod_m  (v_2+w_1+z_m-t+c)(v_2-2w_1-z_1-z_m-t+c) \Big)^{-1}
\\
\times\Big(\frac{(w_1-w_3\pm\mu)}{(w_2-w_3-t)(w_1-w_2+t)}
\\
\times\prod_m [(w_1-w_3+z_1-z_m+t)(v_2+w_3+z_m-t+c)(v_2+w_2-z_1-z_m-t+c)]^{-1}
\\
+\frac{(w_1-w_2\pm\mu)}{(w_1-w_3+t)(w_2-w_3+t)}
\\
\times\prod_m [(w_2-w_1-z_1+z_m-t)(v_2+w_2+z_m-t+c)(v_2+w_3-z_1-z_m-t+c)]^{-1}\Big).
\end{multline}
Consider the 3 factors in the 2 products over~$m$. We can match the $m=1$ case of the first factor and the $m>1$ cases of the second and third factors. Then we get
\begin{multline}\label{Rsub3}
\Big(\prod_{m>1}(-z_1+z_m-t)\prod_m  (v_2+w_1+z_m-t+c)(v_2-2w_1-z_1-z_m-t+c) \Big)^{-1}
\\
\times\Big( (w_1-w_2+t) (w_1-w_3+t)\prod_{l,m>1} (v_2+w_l+z_m-t+c)  \Big)^{-1}\times S_{\cR},
\end{multline}
with
\begin{multline}\label{SR}
S_{\cR}\equiv \frac{(w_1-w_3\pm\mu)}{(w_2-w_3-t)}
\\
\times\Big[\prod_{m>1} (w_1-w_3+z_1-z_m+t)\cdot (v_2+w_3+z_1-t+c)(v_2+w_2-2z_1-t+c)\Big]^{-1}
\\
+\frac{(w_1-w_2\pm\mu)}{(w_2-w_3+t)}
\\
\times\Big[\prod_{m>1} (w_1-w_2+z_1-z_m+t)\cdot (v_2+w_2+z_1-t+c)(v_2+w_3-2z_1-t+c)\Big]^{-1}.
\end{multline}

Next, we compare the factors in the multipliers~$\mu_{\cL}$ of $S_{\cL}$ and~$\mu_{\cR}$ of $S_{\cR}$ in \eqref{Lsub4} and \eqref{Rsub3}, resp. All of the 6 $v_2$-dependent factors in $\mu_{\cL}$ are matched by factors in~$\mu_{\cR}$. We can also match 4 of the remaining factors. Therefore, we are left with the question whether 
\be\label{EL}
E_{\cL}\equiv \Big(\prod_{m>1} (-w_1+w_2-z_1+z_m-t)(-w_1+w_3-z_1+z_m-t)\Big)^{-1}S_{\cL} 
\ee
and
\be\label{ER}
E_{\cR}\equiv \Big(\prod_{m>1}   (v_2+w_1+z_m-t+c)(v_2-2w_1+z_m-t+c) \Big)^{-1}S_{\cR}
\ee
are equal. 

To answer this, we view $E_{\cL}-E_{\cR}$ as a meromorphic $\pi/r$-periodic function of $v_2$ and calculate the 4 multipliers arising under the shift $v_2\to v_2+2t$. The result is that they are all equal to
\be
\exp[2ir(-3w_1-3z_1+2ia_+)]\exp[2ir(6v_2)].
\ee
Thus it follows as before that we need only inspect the residues at the poles 
\be\label{v2L}
v_2=(w_1+z_1+c)/2+\omega_j,\ \ \ j=0,1,2,3, 
\ee
(which occur solely in the summands of $S_{\cL}$), and at
the poles
\be\label{v2LR}
v_2+c=-w_2-z_1,-w_2+2z_1, 2w_1-z_2,-w_1-z_2.
\ee
(Indeed, both $E_{\cL}$ and $E_{\cR}$ are invariant under swapping $w_2$, $w_3$, and $z_2$, $z_3$.)

Using $-t+3c/2=0$, it is straightforward to check residue cancellation at the 4 poles~\eqref{v2L} (cf.~the calculations~\eqref{4pcal}--\eqref{p2cal} in Lemma~2.4).  We can therefore complete the proof of the lemma by showing equality of
the residues of $E_{\cL}$ and $E_{\cR}$ at the remaining 4 poles~\eqref{v2LR}. Unfortunately, this involves longer arguments, which we relegate to the following sublemma. 

\end{proof}

\begin{sublemma} 
The residues of $E_{\cL}$ and $E_{\cR}$ at the poles~\eqref{v2LR} are equal.
\end{sublemma}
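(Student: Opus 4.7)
My plan is to verify the residue identities at the four poles in~\eqref{v2LR} one pole at a time, exploiting the invariance of $E_\cL$ and $E_\cR$ under the double transposition $w_2\leftrightarrow w_3$, $z_2\leftrightarrow z_3$ to halve the book-keeping where convenient. A preliminary inspection, based on the fact that $R_+$ has its generic simple zero at $x=-t$ modulo the period lattice, splits the four poles into two qualitatively different classes: the first two poles $v_2+c=-w_2-z_1$ and $v_2+c=-w_2+2z_1$ are produced by a single summand of $S_\cL$ and a single summand of $S_\cR$, with the explicit prefactors in~\eqref{EL}--\eqref{ER} remaining regular there, whereas the last two poles $v_2+c=2w_1-z_2$ and $v_2+c=-w_1-z_2$ are produced not by $S_\cR$ at all but by the prefactor of~\eqref{ER} (specifically the factors $(v_2-2w_1+z_2-t+c)^{-1}$ and $(v_2+w_1+z_2-t+c)^{-1}$), so that both summands of~$S_\cR$ survive the limit as finite nonzero values while still only one summand of $S_\cL$ produces the pole in $E_\cL$.

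For the first two poles, the residues on both sides are single $R_+$-monomials, and I expect $\mathrm{Res}\, E_\cL=\mathrm{Res}\, E_\cR$ to reduce to a factor-by-factor match after applying the identity $(x+t-2c)=(x-t+c)$ from~\eqref{x2c} together with the shift relation~\eqref{Rade}. The exponential multipliers produced by~\eqref{Rade} should then cancel through the $A_2$ sum constraints $\sum_j w_j=\sum_j z_j=0$ built into~\eqref{x3}, by exactly the same mechanism that made the multiplier in~\eqref{p2cal} vanish in the $j=2,3$ cases of Lemma~2.4.

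For the last two poles the identity to be established has the less transparent shape ``a single $R_+$-monomial, coming from one summand of $S_\cL$, equals a two-term sum of $R_+$-ratios, coming from $S_\cR$ evaluated at $v_2=p$, times an explicit prefactor ratio''. My approach is to promote one of the still-free variables (for instance $\mu$, which appears only linearly inside $R_+$-numerators on both sides, or $w_1$) to a complex parameter, regard the would-be residue difference as a $\pi/r$-periodic meromorphic function of that variable, and apply the same quasi-period-plus-residue bootstrap used to reduce Theorem~2.2 to the four-pole cancellation in the first place: compute the shift multiplier under the lattice translation, verify vanishing of the residues at the now much smaller number of auxiliary poles, and then invoke the Laurent-series argument around~\eqref{az} to conclude identical vanishing. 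The main obstacle is precisely this collapse of a two-term $R_+$-sum to a single term at poles~3 and~4; if a direct theta three-term identity of Riemann-relation type can be spotted, the argument becomes short, but otherwise the one-variable bootstrap, while logically sound, threatens to make this sublemma noticeably more technical than Lemma~2.4 itself.
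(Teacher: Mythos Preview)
Your plan is correct and closely mirrors the paper's own proof. Your classification of the four poles in \eqref{v2LR} is accurate: at the first two poles a single summand of $S_\cL$ meets a single summand of $S_\cR$ and the residues match factor-by-factor after invoking \eqref{x2c}, \eqref{Rade}, and the sum constraints (the second pole does require the exponential bookkeeping you anticipate); at the last two poles the pole of $E_\cR$ indeed comes from the prefactor in \eqref{ER}, so both summands of $S_\cR$ survive and must be collapsed against the single $S_\cL$ contribution.

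The only notable divergence is your choice of auxiliary variable for poles 3 and 4. The paper does not use $\mu$ or $w_1$; after clearing common denominators it introduces $x=2w_1+w_2$, $y=w_1-w_2$, and then $z=2z_1+z_2$ (third pole) or $u=z_1-z_2$ (fourth pole), and forms the \emph{quotient} of the two sides rather than their difference. This quotient is genuinely elliptic in $z$ (resp.~$u$), has only the two poles $\pm\mu-t$ (resp.~$\pm\mu+t$) whose residues visibly cancel, hence is constant, and evaluation at $z=-x$ (resp.~$u=x$) gives $1$. Your proposed route via $\mu$ would also work---both sides are entire in $\mu$ and share the same degree-2 theta multiplier, so vanishing of the difference at two well-chosen $\mu$-values suffices---but the paper's $z$-variable quotient is tidier because it reduces to a genuinely elliptic function with an immediately visible pole structure, avoiding any Laurent-series argument.
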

\begin{proof}
 We successively consider the residues for the 4 poles at hand. 
 \vspace{6mm}

$\mathbf{(v_2=-w_2-z_1-c)}$ For $E_{\cL}$ we obtain from~\eqref{EL} and~\eqref{SL}
(upon multiplication by $(v_2+w_2+z_1-t+c)$ and then setting $v_2=-w_2-z_1-c$):
\begin{multline}
 \Big(\prod_{m>1} (-w_1+w_2-z_1+z_m-t)(-w_1+w_3-z_1+z_m-t)\Big)^{-1}
 \\
\times \frac{(-w_2+w_1\pm\mu)}{ (-2w_2-w_1-3z_1+t-3c)}
\\
\times \Big(  (-w_2+w_1-z_1+z_2-t) (-w_2+w_1-z_1+z_3-t)(-w_2+w_3-t)\Big)^{-1},
\end{multline}
whereas from~\eqref{ER} and~\eqref{SR} we get
\begin{multline}
\Big(\prod_{m>1}   (-w_2+w_1-z_1+z_m-t)(-w_2-2w_1-z_1+z_m-t) \Big)^{-1}
\\
\times\frac{(w_1-w_2\pm\mu)}{(w_2-w_3+t)}
\\
\times\Big[\prod_{m>1} (w_1-w_2+z_1-z_m+t)\cdot  (-w_2+w_3-3z_1-t)\Big]^{-1}.
\end{multline}
  Recalling $t-3c=-t$, we see that these two expressions are equal.
\vspace{6mm}

$\mathbf{(v_2=-w_2+2z_1-c)}$ For $E_{\cL}$ we obtain
\begin{multline}\label{r2L}
 \Big(\prod_{m>1} (-w_1+w_2-z_1+z_m-t)(-w_1+w_3-z_1+z_m-t)\Big)^{-1}
 \\
\times \frac{(-w_2-2w_1-3c\pm\mu)}{ (-2w_2-w_1+3z_1-t-3c)}
\\
\times \Big(  (-w_2-2w_1+2z_1+z_3-t) (-w_2+w_3-t)(-w_2-2w_1+2z_1+z_2-t)\Big)^{-1},
\end{multline}
and for $E_{\cR}$ we get
\begin{multline}\label{r2R}
\Big(\prod_{m>1}   (-w_2+w_1+2z_1+z_m-t)(-w_2-2w_1+2z_1+z_m-t) \Big)^{-1}
\\
\times\frac{(w_1-w_3\pm\mu)}{(w_2-w_3-t)}
\\
\times\Big[\prod_{m>1} (w_1-w_3+z_1-z_m+t)\cdot  (-w_2+w_3+3z_1-t)\Big]^{-1}.
\end{multline}

Now we can rewrite the first product on the first line of~\eqref{r2R} as
\be\label{1R}
\frac{1}{ \prod_{m>1}   (w_1-w_2+z_1-z_m-t) }
=\frac{\exp\Big(-2ir\sum_{m>1}(w_1-w_2+z_1-z_m)\Big)}{\prod_{m>1}   (w_1-w_2+z_1-z_m+t) },
\ee
which equals the first product on the first line of~\eqref{r2L} save for the exponential. Next, we rewrite the second and third line of~\eqref{r2L} as
\begin{multline}\label{23L}
 \frac{(w_1-w_3+ia_+\mp\mu)}{ (-w_2+w_3+3z_1-t-ia_+)}
\\
\times \Big(  (-w_2-2w_1+z_1-z_2-t) (w_2-w_3+t)(-w_2-2w_1+z_1-z_3-t)\Big)^{-1}
\\
=-\exp[-2ir((2w_1-2w_3+ia_+)+(-w_2+w_3+3z_1-t-ia_+/2))] \frac{(w_1-w_3\mp\mu)}{ (w_2-w_3-3z_1+t)}
\\
\times(-)\exp[ 2ir(w_2-w_3)]
 \Big( (w_2-w_3-t)\prod_{m>1} (-w_2-2w_1+z_1-z_m-t)  \Big)^{-1}
\\
=\exp[-2ir(2w_1-2w_2+3z_1)]\frac{(w_1-w_3\mp\mu)}{ (w_2-w_3-3z_1+t)}
\\
\times \Big( (w_2-w_3-t)\prod_{m>1} (-w_2-2w_1+2z_1+z_m-t)  \Big)^{-1}.
\end{multline}
The product in the last line occurring here matches the second product on the first line of~\eqref{r2R}. The remaining $R_+$-factors match as well. Finally, the exponential multipliers in~\eqref{23L} and~\eqref{1R} are equal. Therefore, the residues are equal.
\vspace{6mm}

$\mathbf{(v_2=2w_1-z_2-c)}$ For $E_{\cL}$ we obtain
\begin{multline}\label{r3L}
 \Big(\prod_{m>1} (-w_1+w_2-z_1+z_m-t)(-w_1+w_3-z_1+z_m-t)\Big)^{-1} 
 \\
\times \frac{(-2z_1-z_2-3c\pm\mu)}{(3w_1-z_1-2z_2-t-3c)}\Big( (-z_2+z_3-t)(2w_1+w_3-2z_1-z_2-t)(2w_1+w_2-2z_1-z_2-t)\Big)^{-1},
 \end{multline}
 and for $E_{\cR}$ we get
 \begin{multline}\label{r3R}
 \Big((-z_2+z_3-t)\prod_{m>1}(3w_1-z_2+z_m-t)\Big)^{-1}
 \\
\times\Big[ \frac{( w_1-w_3\pm\mu)}{(w_2-w_3-t)}\Big(\prod_{m>1}(w_1-w_3+z_1-z_m+t)\cdot (2w_1+w_3+z_1-z_2-t)(2w_1+w_2-2z_1-z_2-t)\Big)^{-1}
\\
+\frac{(w_1-w_2\pm\mu)}{(w_2-w_3+t)}\Big(\prod_{m>1}(w_1-w_2+z_1-z_m+t)\cdot (2w_1+w_2+z_1-z_2-t)(2w_1+w_3-2z_1-z_2-t)\Big)^{-1}\Big].
\end{multline}
Using
\be\label{simL}
(3w_1-z_1-2z_2-t-3c)=-\exp[2ir(3w_1-z_1-2z_2-t-ia_+/2)](3w_1-z_1-2z_2-t)
\ee
in~\eqref{r3L}, and
\be\label{simR1}
(2w_1+w_2+z_1-z_2-t)=-\exp[2ir(w_1-w_3+z_1-z_2)](w_1-w_3+z_1-z_2+t),
\ee
\be\label{simR2}
(2w_1+w_3+z_1-z_2-t)=-\exp[2ir(w_1-w_2+z_1-z_2)](w_1-w_2+z_1-z_2+t),
\ee
in~\eqref{r3R}, we obtain common denominator factors
\be
(3w_1-z_1-2z_2-t)
(-z_2+z_3-t)(-w_1+w_2-z_1+z_2-t)(-w_1+w_3-z_1+z_2-t).
\ee
Canceling these factors, we are left with
\begin{multline}
-\exp[-2ir(3w_1-z_1-2z_2-2t)+2ir(-4z_1-2z_2-2t)]
\\
\times(-2z_1-z_2\pm\mu)\Big(\prod_{l>1}(-w_1+w_l-z_1+z_3-t)(2w_1+w_l-2z_1-z_2-t)\Big)^{-1},
\end{multline}
and
\begin{multline}
\frac{-1}{(3w_1-t)}\Big[\exp[-2ir(w_1-w_2+z_1-z_2)]\frac{(w_1-w_3\pm\mu)}{(w_2-w_3-t)}
\\
\times\Big((w_1-w_3+z_1-z_3+t)(2w_1+w_2-2z_1-z_2-t)\Big)^{-1}
\\
+\exp[-2ir(2w_1+w_2+z_1-z_2)]\frac{(w_1-w_2\pm\mu)}{(w_2-w_3+t)}
\\
\times\Big((w_1-w_2+z_1-z_3+t)(2w_1+w_3-2z_1-z_2-t)\Big)^{-1}\Big].
\end{multline}

Next, we multiply these formulas by
\be
-(3w_1-t)(w_2-w_3-t)\prod_{l>1}(-w_1+w_l-z_1+z_3-t)(2w_1+w_l-2z_1-z_2-t),
\ee
so that we get
\be
\exp[2ir(-3w_1-3z_1)](3w_1-t)(w_2-w_3-t)(-2z_1-z_2\pm\mu),
\ee
and
\begin{multline}
\exp[-2ir(w_1-w_2+z_1-z_2)](w_1-w_3\pm\mu)
\\
\times(-w_1+w_2-z_1+z_3-t)(2w_1+w_3-2z_1-z_2-t)
\\
-\exp[-2ir(2w_1+w_2+z_1-z_2)+2ir(w_2-w_3)](w_1-w_2\pm\mu)
\\
\times
(-w_1+w_3-z_1+z_3-t)(2w_1+w_2-2z_1-z_2-t).
\end{multline}
The exponential multipliers in the second expression are equal. Hence we are left with showing equality of
\be
\exp[(2ir(-2w_1-w_2-2z_1-z_2)](3w_1-t)(w_1+2w_2-t)(2z_1+z_2\pm\mu),
\ee
and
\begin{multline}
(2w_1+w_2\pm\mu)(-w_1+w_2-2z_1-z_2-t)(w_1-w_2-2z_1-z_2-t)
\\
-(w_1-w_2\pm\mu)(-2w_1-w_2-2z_1-z_2-t)(2w_1+w_2-2z_1-z_2-t).
\end{multline}

To prove equality, we introduce new variables
\be
x=2w_1+w_2,\ \ \ y=w_1-w_2,\ \ \ z=2z_1+z_2.
\ee
Viewing the result as yielding two functions of~$z$, these functions read
\be
l(z):=\exp[2ir(-x-z)](x+y-t)(x-y-t)(z\pm\mu),
\ee
and
\be
r(z):=(x\pm\mu)(-y-z-t)(y-z-t)-(y\pm\mu)(-x-z-t)(x-z-t).
\ee
We have
\be
l(z-2t)=\exp[2ir(2z-2t)+2ir(2t)]l(z),
\ee
and
\be
r(z-2t)=\exp[2ir(2z)]r(z),
\ee
so the quotient
\be
q(z):=r(z)/l(z)
\ee
is elliptic with periods $\pi/r$, $ia_+$. The residue sums at the poles $z=\mu-t$ and $z=-\mu-t$ clearly vanish, so $q(z)$ is constant in~$z$. Taking $z=-x$, we obtain $q(z)=1$. This implies equality of residues at the third pole.
\vspace{6mm}

$\mathbf{(v_2=-w_1-z_2-c)}$ For $E_{\cL}$ we now obtain
\begin{multline}\label{r4L}
 \Big(\prod_{m>1} (-w_1+w_2-z_1+z_m-t)(-w_1+w_3-z_1+z_m-t)\Big)^{-1} 
 \\
\times \frac{(z_1-z_2\pm\mu)}{(-3w_1-z_1-2z_2+t-3c)}\Big((w_1-w_2-z_1+z_2+t)(-z_2+z_3-t)(w_1-w_3-z_1+z_2+t)\Big)^{-1},
 \end{multline}
 and for $E_{\cR}$ we get
 \begin{multline}\label{r4R}
 \Big((-z_2+z_3-t)\prod_{m>1}(-3w_1-z_2+z_m-t)\big)^{-1}
 \\
\times\Big[ \frac{( w_1-w_3\pm\mu)}{(w_2-w_3-t)}\Big(\prod_{m>1}(w_1-w_3+z_1-z_m+t)\cdot (w_1-w_3-z_1+z_2+t)(w_1-w_2+2z_1+z_2+t)\Big)^{-1}
\\
+\frac{(w_1-w_2\pm\mu)}{(w_2-w_3+t)}\Big(\prod_{m>1}(w_1-w_2+z_1-z_m+t)\cdot (w_1-w_2-z_1+z_2+t)(w_1-w_3+2z_1+z_2+t)\Big)^{-1}\Big].
\end{multline}
Here we have common denominator factors
\be
(-3w_1-z_1-2z_2-t)
(-z_2+z_3-t)(-w_1+w_3-2z_1-z_2-t)(-w_1+w_2-2z_1-z_2-t).
\ee
Canceling these factors, we are left with
\begin{multline}
 \Big(\prod_{l>1} (-w_1+w_l-z_1+z_2-t)\Big)^{-1} 
  (z_1-z_2\pm\mu)
 \\
\times \Big( (-w_1+w_2+z_1-z_2-t)(-w_1+w_3+z_1-z_2-t)\Big)^{-1},
\end{multline}
and
\begin{multline}
\frac{1}{(-3w_1-t)}\Big[ \frac{(w_1-w_3\pm\mu)}{(w_2-w_3-t)}
 \Big((w_1-w_3+z_1-z_2+t)(-w_1+w_3+z_1-z_2-t)\Big)^{-1}
\\
+ \frac{(w_1-w_2\pm\mu)}{(w_2-w_3+t)}
\\
\times \Big((w_1-w_2+z_1-z_2+t)(-w_1+w_2+z_1-z_2-t)\Big)^{-1}\Big] .
\end{multline}

Now we multiply these formulas by
\be
(-3w_1-t)(w_2-w_3-t)\prod_{l>1} (-w_1+w_l-z_1+z_2-t)(-w_1+w_l+z_1-z_2-t),
\ee
which gives
\be
 (z_1-z_2\pm\mu)(-3w_1-t)(w_2-w_3-t),
\ee
and
\begin{multline}
 (w_1-w_3\pm\mu) 
 (-w_1+w_2-z_1+z_2-t)(-w_1+w_2+z_1-z_2-t)
\\
-\exp[2ir(w_2-w_3)] (w_1-w_2\pm\mu)
 (-w_1+w_3-z_1+z_2-t)(-w_1+w_3+z_1-z_2-t).
\end{multline}

Next, we introduce new variables
\be
x=2w_1+w_2,\ \ \ y=w_1-w_2,\ \ \ u=z_1-z_2.
\ee
Viewing the result as yielding two functions of~$u$, these functions read
\be
\lambda(u):=(u\pm\mu)(-x-y-t)(x-y-t),
\ee
and
\be
\rho(u):=(x\pm\mu)(-y-u-t)(-y+u-t)-\exp[2ir(x-y)](y\pm\mu)(-x-u-t)(-x+u-t).
\ee

We now get
\be
\lambda(u-2t)=\exp[2ir(2u-2t)]\lambda(u),
\ee
and
\be
\rho(u-2t)=\exp[2ir(2u-2t)]\rho(u),
\ee
so that
\be
\tau(u):=\rho(u)/\lambda(u)
\ee
is elliptic. The residue at $u=\mu +t$ is proportional to
\begin{multline}
(x\pm\mu)(-y-\mu-2t)(-y+\mu)-\exp[2ir(x-y)](y\pm\mu)(-x-\mu-2t)(-x+\mu)
\\
=-\exp[2ir(-y-\mu-t)](x\pm\mu)(-y\pm\mu)+\exp[2ir(x-y)+2ir(-x-\mu-t)](y\pm\mu)(-x\pm\mu)
\\
=0,
\end{multline}
whereas the residue at $u=-\mu +t$ is proportional to
\begin{multline}
(x\pm\mu)(-y+\mu-2t)(-y-\mu)-\exp[2ir(x-y)](y\pm\mu)(-x+\mu-2t)(-x-\mu)
\\
=-\exp[2ir(-y+\mu-t)](x\pm\mu)(-y\pm\mu)+\exp[2ir(x-y)+2ir(-x+\mu-t)](y\pm\mu)(-x\pm\mu)
\\
=0,
\end{multline}
so it follows that $\tau(u)$ does not depend on~$u$. Choosing $u=x$, we deduce $\tau(u)=1$. Therefore, the fourth and last residue vanishes as well.
\end{proof}


\section{The $A_3$ identities}

 \subsection{The elliptic  case}
 
Following at first the flow chart of the previous section, we focus on $A_{3,+}(x)$ (given by \eqref{A3sum}--\eqref{A31}), rewritten as
\begin{multline}\label{A3new}
A(x)=\frac{R(x_2-x_3)R(x_3-x_4)R(x_4-x_2)}{R(x_1-x_2-ia_+/2)R(x_1-x_3-ia_+/2)R(x_1-x_4-ia_+/2)}
\\
\times \exp\Big(\frac14 ia_-\big(3\partial_1-\partial_2-\partial_3-\partial_4\big)\Big) + \mathrm{cyclic}.
\end{multline}

As the analog of \eqref{AS2} we  obtain 
\begin{multline}\label{AS3}
\frac{A(v)\cS_3}{\prod_{k,l}G(v_k+w_l-ia_-/4-i(a_++a_-)/4\pm d)}
\\
=\frac{R(v_2-v_3)R(v_3-v_4)R(v_4-v_2)}{\prod_{j=2}^4R(v_1-v_j-t)}
\prod_{l=1}^4R(v_1+w_l-t+c\pm d)+\mathrm{cyclic},
\end{multline}
now with  
\be\label{cA3}
t\equiv ia_+/2,\ \ \ \ c\equiv ia_+/4.
\ee 
 
Like in Section~2, we divide the right-hand side by the $(v\leftrightarrow w)$-invariant product
\be
\prod_{k,l}R(v_k+w_l-t+c\pm d),
\ee
yielding the function
\be\label{cL3}
\cL\equiv \frac{R(v_2-v_3)R(v_3-v_4)R(v_4-v_2)}{\prod_{j=2}^4R(v_1-v_j-t) }
 \prod_{k\ne 1,l}\frac{1}{R(v_k+w_l-t+c\pm d)}+\mathrm{cyclic}.
\ee
The question is now whether $\cL$ equals
\be
\cR\equiv \frac{R(w_2-w_3)R(w_3-w_4)R(w_4-w_2)}{\prod_{j=2}^4R(w_1-w_j-t) }
 \prod_{k,l\ne 1}\frac{1}{R(v_k+w_l-t+c\pm d )}+\mathrm{cyclic}.
\ee

As before, we can show that this is false without the $A_3$ constraint
\be\label{A3}
x_1+x_2+x_3+x_4=0,\ \ \ x=v,w,
\ee
which entails that the elliptic kernel identities do not hold true for unconstrained variables.

\begin{proposition}
Letting $v,w\in\C^4$, the functions $\cL$ and $\cR$ are not equal.
\end{proposition}
\begin{proof}
Viewing  $\cL-\cR$ as a meromorphic function of $v_1$,  this function is a sum of 8 terms that (generically)  have simple poles. The poles at $v_1=v_2+2kt$, $k\in \Z$, only occur  in the first two summands.  Multiplying  by $R(v_1-v_2-t)$, their sum is proportional to
\begin{multline}
\frac{R(v_2-v_3)R(v_4-v_2) }{R(v_1-v_3-t)R(v_1-v_4-t)}
 \prod_{l}\frac{1}{R(v_2+w_l-t+c\pm d)}
 \\
 -\exp[2ir(v_1-v_2)]\frac{R(v_4-v_1)R(v_1-v_3) }{R(v_2-v_3-t)R(v_2-v_4-t)}
 \prod_{l}\frac{1}{R(v_1+w_l-t+c\pm d)} .
\end{multline}
This expression vanishes   for $v_1=v_2$, but for $v_1=v_2+2t$ it becomes proportional to 
\be
 \exp[2ir(2v_2-v_3-v_4)]
 -\exp\Big[2ir\Big(2t-(2v_2-v_3-v_4+2t)+8(v_2+c)+2\sum_{l}w_l \Big)\Big],
\ee
which is nonzero. Hence $\cL$ and $\cR$ are not equal for arbitrary $v,w\in\C^4$, as asserted.
\end{proof}

Henceforth we require the $A_3$ constraint in the explicit form 
\be\label{x4}
x_4\equiv -x_1-x_2-x_3,\ \ \ x=v,w.
\ee
Then the transpositions $x_1\leftrightarrow x_2$, $x_2\leftrightarrow x_3$   and $x_3\leftrightarrow x_4$ generating $S_4$ yield involutions
\be\label{S4r} 
(x_1,x_2,x_3)\mapsto(x_2,x_1,x_3),\ \ (x_1,x_2,x_3)\mapsto (x_1,x_3,x_2),\ \ (x_1,x_2,x_3)\mapsto (x_1,x_2,-x_1-x_2-x_3).
\ee
Also, the A$\De$O~\eqref{A3new} turns into
\begin{multline}\label{A3r} 
A_r(x)=\frac{(x_2-x_3)(x_1+x_2+2x_3)(x_1+2x_2+x_3) }{(x_1-x_2-t)(x_1-x_3-t)(2x_1+x_2+x_3-t)}\exp\big( ia_-\big(3\partial_{1}-\partial_{2} -\partial_{3} \big)/4\big)
 \\
+\frac{(x_1+x_2+2x_3)(2x_1+x_2+x_3)(x_1-x_3) }{(-x_1+x_2-t)(x_2-x_3-t)(x_1+2x_2+x_3-t)}\exp\big( ia_-\big(-\partial_{1}+3\partial_{2} -\partial_{3} \big)/4\big)
\\
+\frac{(2x_1+x_2+x_3)(x_1-x_2)(x_1+2x_2+x_3) }{(x_3-x_1-t)(x_3-x_2-t)(x_1+x_2+2x_3-t) }\exp\big( ia_-\big(-\partial_{1}-\partial_{2} +3\partial_{3} \big)/4\big)
 \\
+\frac{(x_1-x_2)(x_2-x_3)(x_1-x_3) }{(-2x_1-x_2-x_3-t)(-x_1-2x_2-x_3-t)(-x_1-x_2-2x_3-t) } 
\\
\times \exp\big( ia_-\big(-\partial_{1}-\partial_{2} -\partial_{3} \big)/4\big), 
\end{multline}
and the 8-variable function $\cL$~\eqref{cL3} becomes the 6-variable function
\begin{multline}\label{cL3r} 
\cL_r= 
\frac{(v_2-v_3)(v_1+v_2+2v_3)(v_1+2v_2+v_3) }{(v_1-v_2-t)(v_1-v_3-t)(2v_1+v_2+v_3-t)}
\\
\times \Big( \prod_{l}(v_2+w_l-t+c\pm d)(v_3+w_l-t+c\pm d)(-v_1-v_2-v_3+w_l-t+c\pm d)\Big)^{-1}
\\
+\frac{(v_1+v_2+2v_3)(2v_1+v_2+v_3)(v_1-v_3) }{(-v_1+v_2-t)(v_2-v_3-t)(v_1+2v_2+v_3-t)}
\\
\times \Big(\prod_{l}(v_1+w_l-t+c\pm d)(v_3+w_l-t+c\pm d)(-v_1-v_2-v_3+w_l-t+c\pm d)\Big)^{-1}
\\
+\frac{(2v_1+v_2+v_3)(v_1-v_2)(v_1+2v_2+v_3) }{(v_3-v_1-t)(v_3-v_2-t)(v_1+v_2+2v_3-t) }
\\
\times  \Big(\prod_{l}(v_1+w_l-t+c\pm d)(v_2+w_l-t+c\pm d)(-v_1-v_2-v_3+w_l-t+c\pm d) \Big)^{-1}
 \\
+\frac{(v_1-v_2)(v_2-v_3)(v_1-v_3) }{(-2v_1-v_2-v_3-t)(-v_1-2v_2-v_3-t)(-v_1-v_2-2v_3-t) }
\\
\times  \Big(\prod_{l}(v_1+w_l-t+c\pm d)(v_2+w_l-t+c\pm d)(v_3+w_l-t+c\pm d) \Big)^{-1}. 
\end{multline} 
(As before, we have abbreviated $R_+(x)$  to $(x)$.) Using this formula   and the corresponding formula~$\cR_r$ for $\cR$ (obtained from $\cL_r$ by taking $v\leftrightarrow w$),
 the shift $v_1\to v_1+2t$ yields equal multipliers
\be\label{eqmult3}
\exp[2ir(6ia_++6v_2+6v_3)]\exp[2ir(12v_1)],
\ee
for all of the 8 summands. (This follows from straightforward  calculations.)

Just as in Section~2, to conclude that  $\cL_r$ and $\cR_r$ are equal,  we now need only show that the residue sums of $\cL_r$ and $\cR_r$, viewed as  meromorphic and $\pi/r$-periodic functions of~$v_1$, coincide for one of the (generically simple) $v_1$-poles from each pole set $\{p+2kt+l\pi/r \mid k,l\in\Z\}$.   This is straightforward for the $w$-independent poles
\be\label{vpind3}
v_1=v_2,v_3,-2v_2-v_3,-v_2-2v_3,-(v_2+v_3)/2+\omega_j,\ \ j=0,1,2,3. 
\ee
Indeed, $\cR_r$ is regular at the points~\eqref{vpind3}, and like in Lemma~2.4, residue cancellation for the 2 pertinent summands of $\cL_r$ at the 8 poles~\eqref{vpind3} is  readily verified.    The remaining problem is therefore to demonstrate equality of residue sums
at the pole
\be\label{vpdep3}
v_1=-w_1-c+ d,\ \ c=ia_+/4.
\ee
(To see that this suffices, note that we can invoke $S_4$ symmetry in~$w$ and evenness in~$d$ for the 7 remaining $w$-dependent poles.)   

We have no doubt this equality holds true, but as already mentioned in the Introduction, we have not found a complete proof. For the hyperbolic specialization, however, we prove the kernel identities in the next subsection. This entails the equality of the hyperbolic version of the above functions $\cL_r$ and $\cR_r$.

 \subsection{The hyperbolic  case}

Here we still have~\eqref{kf3} and~\eqref{A3sum}, now with $G$ the hyperbolic gamma function (cf.~Appendix~A), and \eqref{A31} replaced by
\be\label{A31h}
A_{3,\de}^{(1)}(x):= \frac{c_{\de}(x_2-x_3)c_{\de}(x_3-x_4)c_{\de}(x_4-x_2)}{\prod_{j=2}^4s_{\de}(x_1-x_j) }\exp\Big(\frac14 ia_{-\de}\big(3\partial_1-\partial_2-\partial_3-\partial_4\big)\Big).
\ee
Here and in the sequel we use abbreviations
\be\label{hnot}
c_{\de}(x):=\cosh(\pi x/a_{\de}),\ \ \ \ s_{\de}(x):=\sinh(\pi x/a_{\de}),\ \ \ \ e_{\de}(x):=\exp(\pi x/a_{\de}).
\ee
  Focussing once again on $A_{3,+}$, the counterpart of \eqref{AS2} becomes
\begin{multline} \label{AvS}
\frac{A_{3,+}(v)\cS_3}{\prod_{k,l=1}^4G(v_k+w_l-ia_-/4+i(a_++a_-)/4\pm d)}=\frac{c_+(v_2-v_3)c_+(v_3-v_4)c_+(v_4-v_2)}{s_+(v_1-v_2)s_+(v_1-v_3) s_+(v_1-v_4)  }
\\
\times \prod_{l=1}^42c_+(v_1+w_l-ia_+/4\pm d)+\mathrm{cyclic}.
\end{multline}

Suppressing the subscript + from now on, we divide the right-hand side by
\be
16\prod_{k,l}s(v_k+w_l+c\pm d),\ \ \ c=ia_+/4,
\ee
to get the function
\be\label{cLh}
\cL\equiv \frac{c(v_2-v_3)c(v_3-v_4)c(v_4-v_2)}{\prod_{j>1}s(v_1-v_j) }
 \prod_{k\ne 1,l}\frac{1}{s(v_k+w_l+c\pm d)}+\mathrm{cyclic}
 =: \sum_{j=1}^4 \cL_j.
\ee
The question is now whether $\cL$ equals
\be\label{cRh}
\cR\equiv \frac{c(w_2-w_3)c(w_3-w_4)c(w_4-w_2)}{\prod_{j>1}s(w_1-w_j)  } \prod_{k,l\ne 1}\frac{1}{s(v_k+w_l+c\pm d )}+\mathrm{cyclic}
 =: \sum_{j=1}^4 \cR_j.
\ee

Assuming at first that the 8 variables $v_j,w_j$ are independent, there is an illuminating way to see that this is not true. (Of course, we have already shown in Prop.~3.1 that  $\cL$ and $\cR$ are unequal for the elliptic unrestricted  case, but this involves an argument that has no hyperbolic analog. A priori, equality might therefore hold for the hyperbolic specialization.) 

\begin{proposition}
Letting $v,w\in\C^4$, the functions $\cL$ and $\cR$ are not equal.
\end{proposition}
\begin{proof}  
We multiply $\cL$ and $\cR$ by 
\be
\prod_{j=1}^3 s(v_j+w_j+c-d),
\ee
and then set
\be\label{vj}
v_j=-w_j-c+d,\ \ \ j=1,2,3.
\ee
Then we only get a nonzero contribution from $\cL_4$ and $\cR_4$, viz.,
\begin{multline}\label{L4}
\frac{c(w_1-w_2)c(w_2-w_3)c(w_3-w_1)}{\prod_{j=1}^3 s(v_4+w_j+c-d) }
\\
\times \prod_{k\ne l,k,l=1}^3\frac{1}{s(w_l-w_k+d\pm d )}\prod_{k=1}^3
\frac{1}{s(w_4-w_k+d \pm d )s(2d)},
\end{multline}
and
\begin{multline}\label{R4}
\frac{c(w_1-w_2)c(w_2-w_3)c(w_3-w_1) }{\prod_{j=1}^3 s(w_4-w_j)  }
\\
\times \prod_{k\ne l,k,l=1}^3\frac{1}{s(w_l-w_k+d\pm d )}\prod_{k=1}^3
\frac{1}{s(v_4+w_k+c \pm  d )s(2d)}.
\end{multline}
Since~\eqref{R4} has poles for $v_4=-w_k-c-d$, $k=1,2,3$, which are not shared by~\eqref{L4} (for generic $d$), it follows that $\cL$ and $\cR$ are not equal, as asserted.
\end{proof}

By contrast to the state of affairs in this proof, requiring the $A_3$ constraint~\eqref{A3} we obtain from~\eqref{vj} 
\be
v_4=-v_1-v_2-v_3=w_1+w_2+w_3+3c -3d=-w_4+3c -3d.
\ee
Thus, canceling the equal factors occurring in~\eqref{L4} and~\eqref{R4}, we are left with
\be\label{L4s}
\frac{1}{\prod_{j=1}^3s(w_j-w_4+4c-4d) }
\prod_{k=1}^3\frac{1}{s(w_4-w_k+d \pm d )},
\ee
and
\be\label{R4s}
\frac{1}{\prod_{j=1}^3s(w_4-w_j) }\prod_{k=1}^3
\frac{1}{s(w_k-w_4+4c -3d \pm  d )}.
\ee
When we now use 
\be
s(x+4c)=-s(x),
\ee
we deduce that \eqref{L4s} and \eqref{R4s} are equal.
Of course, this does not yet prove that the restricted $\cL$ and $\cR$ are equal, but it is a quite suggestive test. 

At this stage we abandon consideration of the restricted functions $\cL_r$ and $\cR_r$, so as to embark on another strategy that avoids the study of $v_1$-poles depending on~$w$. (As mentioned before, the latter poles give rise to a snag that may be surmountable, but we wish to avoid that climb... To be sure, it does follow from the sequel that $\cL_r$ and $\cR_r$  are indeed equal in the hyperbolic case at hand.) 

To begin with, let us reconsider~\eqref{AvS}. Our problem is to show that its rhs
 \be\label{vwsymm}
 \frac{c_+(v_2-v_3)c_+(v_3-v_4)c_+(v_4-v_2)}{s_+(v_1-v_2)s_+(v_1-v_3)s_+(v_1-v_4) }
 \prod_{l=1}^4s_+(v_1+w_l+ia_+/4\pm d)+\mathrm{cyclic}
 \ee
 is ($v\leftrightarrow w$)-invariant when we require the $A_3$ sum constraint~\eqref{A3}. Next we write
 \be\label{sadd}
 2s_+(v_k+w_l+ia_+/4\pm d)=c_+(2v_k+2w_l+ia_+/2)-c_+(2d)=is_+(2v_k+2w_l)-c_+(2d).
 \ee
 We now introduce
 \be
 b:= ic_+(2d),
 \ee
 noting that our problem then amounts to proving ($v\leftrightarrow w$)-invariance of
  \be
 \frac{c(v_2-v_3)c(v_3-v_4)c(v_4-v_2)}{s(v_1-v_2)s(v_1-v_3)s(v_1-v_4)}
 \prod_{l=1}^4[s(2v_1+2w_l)+b] +\mathrm{cyclic},
 \ee
 with \eqref{A3} in force. (Here and from now on, we again suppress the subscript +.) 
 
 Dividing by the manifestly $(v\leftrightarrow w)$-invariant product
 \be
 8\prod_{1\le j<k\le 4}c(v_j-v_k)c(w_j-w_k),
 \ee
we next define
\be
F_L:= \frac{1}{\prod_{1\le j<k\le 4}c(w_j-w_k)}
\Big(  \frac 
{ \prod_{l=1}^4[s(2v_1+2w_l)+b]}{s(2v_1-2v_2)s(2v_1-2v_3)s(2v_1-2v_4)} +\mathrm{cyclic}\Big),
\ee
and 
\be
F_R:= \frac{1}{\prod_{1\le j<k\le 4}c(v_j-v_k)}
\Big(  \frac 
{ \prod_{k=1}^4[s(2v_k+2w_1)+b]}{s(2w_1-2w_2)s(2w_1-2w_3)s(2w_1-2w_4)} +\mathrm{cyclic}\Big).
\ee
We are now prepared to state the main result of this section.

\begin{theorem}
Under the constraint
\be\label{A3cons}
\sum_{j=1}^4v_j=\sum_{j=1}^4w_j=0,
\ee
the $A_3$ hyperbolic kernel identities~\eqref{kid3} hold true.
\end{theorem}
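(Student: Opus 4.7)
The plan is to prove the equivalent statement $\cE_L(v,w)=\cE_R(v,w)$, where
$$\cE_L := \sum_{m=1}^4 R_m(v)\prod_{l=1}^4[s(2v_m+2w_l)+b],\quad R_m(v):=\frac{\prod_{j<k;\,j,k\ne m}c(v_j-v_k)}{\prod_{j\ne m}s(v_m-v_j)},$$
and $\cE_R$ is obtained by swapping $v\leftrightarrow w$. The equivalence with $F_L=F_R$ uses $s(2x)=2s(x)c(x)$ to extract the manifestly symmetric prefactor $8\prod_{j<k}c(v_j-v_k)c(w_j-w_k)$. I then expand the quartic product as a polynomial in $b$,
$$\prod_l[s(2v_m+2w_l)+b] = \sum_{k=0}^4 b^{4-k}\sigma_k(v_m;w),$$
where $\sigma_k(v_m;w)$ is the $k$-th elementary symmetric polynomial in $\{s(2v_m+2w_l)\}_l$, and compare the two sides power by power in $b$.

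The $b^4$ coefficient is $\sum_m R_m(v)=0$ by \eqref{iden0}. For $b^3$, the linearization of \eqref{iden1} (equating the $c(u)$- and $s(u)$-coefficients) gives $\sum_m R_m(v)c(2v_m)=\sum_j s(2v_j)$ and $\sum_m R_m(v)s(2v_m)=-\sum_j c(2v_j)$; expanding $s(2v_m+2w_l)$ by the addition law then leads to $\sum_m R_m(v)s(2v_m+2w_l)=-\sum_j c(2v_j-2w_l)$, and summing over $l$ produces the manifestly $(v\leftrightarrow w)$-symmetric quantity $-\sum_{j,l}c(2v_j-2w_l)$. For $b^2$ and $b^0$ I show each side vanishes independently. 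Setting $u_j:=e^{2\pi v_j/a_+}$ and $T_k:=\sum_m R_m(v)u_m^k$, a short residue computation on $P(u):=\prod_j(u-u_j)$ under $\prod_j u_j=1$ yields the auxiliary identities $T_{\pm 2}=0$ and $T_4+T_{-4}=0$. Using $s(A)s(B)=\tfrac12[c(A+B)-c(A-B)]$ one sees the $b^2$ coefficient reduces to $\sum_m R_m(v)c(4v_m+A)=0$ (after \eqref{iden0} kills the $m$-independent piece); expanding $\prod_l s(2v_m+2w_l)$ in powers of $u_m$ using $\prod_l e^{2\pi w_l/a_+}=1$ shows the $b^0$ coefficient reduces to $\tfrac{1}{16}(T_4+T_{-4})=0$.

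The crucial step is the $b^1$ coefficient, where \eqref{iden2} plays its essential role. Applying \eqref{iden2} with $x_i=w_i$ and $u=2v_m$ gives $\sigma_3(v_m;w) = -\sum_k R_k(w)\,c(2v_m-2w_k)\,s(2v_m+2w_k)^2$; substitution and interchange of sums yield
$$\sum_m R_m(v)\sigma_3(v_m;w) = -\sum_{m,k}R_m(v)R_k(w)\,c(2v_m-2w_k)\,s(2v_m+2w_k)^2,$$
which is invariant under $v\leftrightarrow w$ since both $c$ and $s(\cdot)^2$ are even. This matches the $b^1$-coefficient of $\cE_R$ and closes the argument. The main obstacle is not this reduction but rather the proof of the three hyperbolic identities \eqref{iden0}--\eqref{iden2} themselves (the content of Lemma~3.4): they are non-trivial symmetric-function identities whose validity depends crucially on the $A_3$ sum constraint $\sum_j x_j=0$, and it is there that the bulk of the work resides.
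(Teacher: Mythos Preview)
Your argument is correct and takes a genuinely different route from the paper. The paper proceeds complex-analytically: it views $F_{L,r}-F_{R,r}$ as an $ia_+$-periodic function of~$v_1$, shows it vanishes at $\re v_1\to\pm\infty$, and then matches residues at the $w$-independent poles $v_1=v_j+ia_+/2$, etc. This reduces to the equality of two residue expressions $r_L,r_R$; to handle $r_R$ the paper must first prove the nontrivial fact that $S_R(v_2,v_3)/s(2v_2+2v_3)$ is independent of~$v_3$ (via yet another residue/asymptotic argument), and only then invokes \eqref{id1}--\eqref{id2}. Your approach instead exploits that $b=ic_+(2d)$ is a free parameter and expands $\prod_l[s(2v_m+2w_l)+b]$ as a polynomial in~$b$, treating each coefficient separately. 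The $b^4,b^3,b^1$ coefficients are dispatched directly by \eqref{iden0}, \eqref{iden1}, \eqref{iden2} (the last via the neat double substitution that makes the $(v\leftrightarrow w)$-symmetry manifest), while the $b^2$ and $b^0$ coefficients vanish on each side separately by the elementary Lagrange-interpolation identities $T_{\pm 2}=0$, $T_4+T_{-4}=0$ in the variables $y_j=u_j^2$ (crucially using $\prod_j y_j=1$).

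Your route is shorter and more transparent: it bypasses the paper's delicate $v_3$-independence step entirely, and reveals more clearly how each of the three identities in Lemma~3.4 enters. On the other hand, the paper's residue-and-asymptotics framework is closer in spirit to the elliptic method of Section~2, and does not rely on the factorization $2s_+(v_k+w_l+ia_+/4\pm d)=is_+(2v_k+2w_l)-c_+(2d)$, which is specific to the hyperbolic regime. Thus your expansion in~$b$ has no obvious elliptic analogue, whereas the paper's method at least admits an (unfinished) elliptic formulation. Both approaches ultimately rest on Lemma~3.4, whose proof is where the real work lies.
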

\begin{proof}
We have already reduced the proof of this assertion to showing that the functions $F_L$ and $F_R$ become equal when~\eqref{A3cons} is imposed.
To prove equality, let us view $F_L$ as a function of $v_1$. It is $ia_+$-periodic, both for the case of unrestricted~$v$ and for the $A_3$-case $\sum_j v_j=0$. In the former case it has no poles for $v_1=v_2$, and it is easily checked it has no poles there in the latter case either. (Evidently, the same is true for $F_R$.) However, for
\be\label{v1v2}
v_1=v_2+ia_+/2,
\ee
we do get (generically)  simple poles from the first two summands of~$F_L$, whose residues do not cancel.

Viewing $F_R$ as a function of $v_1$, it is $ia_+$-antiperiodic for unrestricted~$v$ (entailing once more $F_L\ne F_R$ in that case). On the other hand, imposing  the $A_3$-restriction~\eqref{A3cons} in the explicit form~\eqref{x4} from now on, we have
\be\label{cprod}
\prod_{1\le j<k\le 4}c(v_j-v_k)=c(v_1-v_2)c(v_1-v_3)c(2v_1+v_2+v_3)c(v_2-v_3)c(v_1+2v_2+v_3)c(v_1+v_2+2v_3).
\ee
Hence the restricted function $F_{R,r}$ that results from requiring~\eqref{x4}  is $ia_+$-periodic in~$v_1$, just as the restricted function~$F_{L,r}$.

The second and third summand of   $F_{L,r}$ vanish for $|\re v_1|\to\infty$, whereas from the first and fourth we get for $\re v_1\to\infty$ proportionality to
\begin{multline}
\frac{\prod_le(2v_1+2w_l)}{e(2v_1-2v_2)e(2v_1-2v_3)e(4v_1+2v_2+2v_3)}
\\
-
\frac{\prod_le(2v_1+2v_2+2v_3-2w_l)}{e(4v_1+2v_2+2v_3)e(2v_1+4v_2+2v_3)e(2v_1+2v_2+4v_3)}=\prod_le(2w_l)-\prod_le(-2w_l).
\end{multline}
Due to the $w$-restriction, this limit vanishes. Likewise, we infer $F_{L,r}\to 0$ for $\re v_1\to -\infty$. Clearly, we also have $F_{R,r}\to 0$ for $\re v_1\to \pm\infty$. For equality of $F_{L,r}$ and $F_{R,r}$ it therefore suffices to show that the residues at the  $v_1$-poles of $F_{L,r}$ and $F_{R,r}$ are equal.

Just as in the elliptic case,  both functions are regular for
\be
v_1=v_2,v_3,-2v_2-v_3, -v_2-2v_3,-(v_2+v_3)/2,-(v_2+v_3)/2+ia_+/2.
\ee
  (The summands of $F_{R,r}$ are regular at these points, and it is easy to check residue cancellation for the two relevant summands of~$F_{L,r}$.) 

It involves a lot more work to show equality of residues for
\be
v_1-ia_+/2=v_2,v_3,-2v_2-v_3, -v_2-2v_3,-(v_2+v_3)/2-ia_+/4,-(v_2+v_3)/2-3ia_+/4.
\ee
To begin with, we observe that $F_{L,r}$ and $F_{R,r}$ are invariant under swapping $v_2$ and $v_3$. Thus we need only consider the residues at the first, third, fifth and sixth location.
To study the residues at the first location, we multiply by $c(v_1-v_2)$ and then put $v_1=v_2+ia_+/2$. For $F_{L,r}$ this yields 
\be\label{r1}
r_L 
=\frac{\prod_{l=1}^4[s(2v_2+2w_l)+b]- \prod_{l=1}^4[s(2v_2+2w_l)-b]}{2is(2v_2-2v_3)s(6v_2+2v_3)\prod_{1\le j<k\le 4}c(w_j-w_k)},
\ee
and for $F_{R,r}$ we obtain, using~\eqref{cprod},
\begin{multline}\label{r2}
r_R= \frac{-4i}{s(2v_2-2v_3)s(6v_2+2v_3)s(2v_2+2v_3)}
\\
\times\Big(  \frac 
{ [b^2-s(2w_1+2v_2)^2]  [b+s(2w_1+2v_3)] [b-s(2w_1-4v_2-2v_3)]}{s(2w_1-2w_2)s(2w_1-2w_3)s(2w_1-2w_4)} +\mathrm{cyclic}\Big).
\end{multline}

As it turns out, we need only prove equality of $r_L$ and $r_R$. This is because long, but straightforward calculations show that when we proceed in the same way for the third location $v_1=-2v_2-v_3+ia_+/2$ (that is, multiply by $c(v_1+2v_2+v_3)$, etc.), we obtain $-r_L$ and $-r_R$, whereas for  the fifth and sixth location we get upon multiplication by $c(2v_1+v_2+v_3)$, setting $v_1=-(v_2+v_3)/2+i\de a_+/4$, $\de=+,-$, and then substituting
\be
(v_2,v_3)\mapsto (-2v_2-v_3+ia_+/2,v_3),
\ee
once more the   quantities $-r_L$ and $-r_R$.
(These unexpected coincidences can possibly be understood as a footprint left by the original $S_4$ symmetry, but we do not see an argument from which this follows.)

Comparing $r_L$ and $r_R$, we infer that our remaining task is to prove equality of
\be\label{rl}
 \prod_{l=1}^4[b+s(2v_2+2w_l)]- \prod_{l=1}^4[b-s(2v_2+2w_l)]
\ee
and
\begin{multline}\label{rr}
\frac{1}{s(2v_2+2v_3)}\Big(  \frac 
{ c(w_2-w_3)c(w_3-w_4)c(w_4-w_2) }{s(w_1-w_2)s(w_1-w_3)s(w_1-w_4)}
\\
\times [b^2-s(2w_1+2v_2)^2]  [b+s(2w_1+2v_3)] [b-s(2w_1-4v_2-2v_3)] +\mathrm{cyclic}\Big)
\\
=:\frac{1}{s(2v_2+2v_3)}S_R(v_2,v_3).
\end{multline}
In particular, we should show that~\eqref{rr} does not depend on $v_3$, just as~\eqref{rl}. (Note that at face value this $v_3$-independence seems hard to believe.) 

To prove the latter property, consider the ratio
\be\label{SRsp}
\frac{S_R(v_2,-v_2)}{8\prod_{1\le j<k\le 4}c(w_j-w_k)}=  \frac 
{ [b^2-s(2w_1+2v_2)^2] [b^2-s(2w_1-2v_2)^2]  }{s(2w_1-2w_2)s(2w_1-2w_3)s(2w_1-2w_4)}
 +\mathrm{cyclic}
\ee
as a function of $w_1$. It is $ia_+/2$-periodic and has simple poles at
\be\label{w1p}
w_1=w_2,w_3,-(w_2+w_3)/2,-(w_2+w_3)/2+ia_+/4,
\ee
in a period strip. It is readily checked that the residue sums at these poles vanish.
For $\re w_1\to\pm\infty$ the second and third summand on the rhs of~\eqref{SRsp} vanish. For $\re w_1\to\infty$ we get from the first and fourth summand
\begin{multline}
\frac{e(4w_1+4v_2)e(4w_1-4v_2)}{e(2w_1-2w_2)e(2w_1-2w_3)
e(4w_1+2w_2+2w_3)}
\\
-
\frac{e(4w_1+4w_2+4w_3-4v_2)e(4w_1+4w_2+4w_3+4v_2)}{e(4w_1+2w_2+2w_3)e(2w_1+4w_2+2w_3)
e(2w_1+2w_2+4w_3)}=0.
\end{multline}
Likewise, the sum vanishes for $\re w_1\to -\infty$. Hence it follows that the ratio on the lhs of~\eqref{SRsp} vanishes.

As a consequence, $S_R(v_2,-v_2)$ vanishes. In the same way, it follows that $S_R(v_2,-v_2+ia_+/2)$ vanishes. Hence $S_R(v_2,v_3)/s(2v_2+2v_3)$, viewed as an $ia_+$-periodic function of $v_3$, has no poles. Moreover, for $\re v_3\to \infty$ we have, using the definition~\eqref{rr} of~$S_R(v_2,v_3)$,
\begin{multline}
\frac{4S_R(v_2,v_3)}{8\prod_{1\le j<k\le 4}c(w_j-w_k)}=\frac 
{  [b^2-s(2w_1+2v_2)^2]e(2v_3+2w_1)e(2v_3-2w_1+4v_2)  }{s(2w_1-2w_2)s(2w_1-2w_3)s(2w_1-2w_4)}
\\
  +\mathrm{cyclic}+O(e(2v_3)) 
  \\
=e(4v_3+4v_2)\Big( \frac 
{  [b^2-s(2w_1+2v_2)^2]   }{s(2w_1-2w_2)s(2w_1-2w_3)s(2w_1-2w_4)}  +\mathrm{cyclic}\Big)+O(e(2v_3)).
\end{multline}
Just as for the rhs of~\eqref{SRsp}, it can be verified that the function in brackets vanishes. For $\re v_3\to -\infty$ we get a similar result. Hence we obtain
\be
S_R(v_2,v_3)/s(2v_2+2v_3)=O(1),\ \ \ \re v_3\to \pm\infty,
\ee
and since this ratio is pole-free, it follows that it equals its limit for $v_3\to\infty$. Therefore, \eqref{rr} equals
\begin{multline}
 e(-2v_2) \frac 
{ c(w_2-w_3)c(w_3-w_4)c(w_4-w_2) }{s(w_1-w_2)s(w_1-w_3)s(w_1-w_4)}
\\
\times [b^2-s(2w_1+2v_2)^2]  b(e(-2w_1+4v_2)+e(2w_1))+\mathrm{cyclic}
\\
=2b\frac 
{ c(w_2-w_3)c(w_3-w_4)c(w_4-w_2) }{s(w_1-w_2)s(w_1-w_3)s(w_1-w_4)}
 [b^2-s(2w_1+2v_2)^2]c(2w_1-2v_2)+\mathrm{cyclic}. 
\end{multline}

Comparing this to~\eqref{rl}, we see that we can complete the proof of the theorem by invoking the identities~\eqref{id1} and~\eqref{id2} in the next lemma.
\end{proof} 

\begin{lemma} 
Letting $x,w_1,w_2,w_3,w_4\in\C$ and $\sum_{j=1}^4w_j=0$, we have identities
\be\label{id0}
\frac 
{ c(w_2-w_3)c(w_3-w_4)c(w_4-w_2) }{s(w_1-w_2)s(w_1-w_3)s(w_1-w_4)}
+\mathrm{cyclic}=0,
\ee
\be\label{id1}
\frac 
{ c(w_2-w_3)c(w_3-w_4)c(w_4-w_2) }{s(w_1-w_2)s(w_1-w_3)s(w_1-w_4)}
c(x-2w_1)+\mathrm{cyclic}=\sum_{l=1}^4s(x+2w_l),
\ee
\begin{multline}\label{id2}
\frac 
{ c(w_2-w_3)c(w_3-w_4)c(w_4-w_2) }{s(w_1-w_2)s(w_1-w_3)s(w_1-w_4)}
c(x-2w_1)s(x+2w_1)^2+\mathrm{cyclic}
\\
=-\sum_{1\le l_1<l_2<l_3\le 4}s(x+2w_{l_1})s(x+2w_{l_2})s(x+2w_{l_3}).
\end{multline}
 \end{lemma}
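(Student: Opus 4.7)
My overall strategy parallels the residue-and-asymptotic techniques used in the proof of Theorem~3.3. For \eqref{id0}, I would first recast $R_m$ via the half-angle identity $c(a)s(a)=\tfrac12 s(2a)$: writing $C:=\prod_{1\le i<j\le 4}c(w_i-w_j)$, one finds $R_m = 8C/\prod_{j\neq m}s(2(w_m-w_j))$, so each $R_m$ is, up to the factor $a_+/\pi$, the residue at $z=w_m$ of the $ia_+/2$-periodic function
\[
\Psi(z) := \frac{16\,C}{\prod_{j=1}^4 s\bigl(2(z-w_j)\bigr)}.
\]
The constraint $\sum_j w_j=0$ forces $\Psi(z)\to 0$ as $\re z\to\pm\infty$, so Cauchy's theorem on an $ia_+/2$-period rectangle gives at once $\sum_m R_m=0$.

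For \eqref{id1}, I would impose the sum constraint in the explicit form $w_4 = -w_1 - w_2 - w_3$ and view $D(w_1) := \mathrm{LHS} - \mathrm{RHS}$ as a meromorphic function of $w_1$, proving in succession that (i)~$D$ is $ia_+$-periodic, by tracking the sign changes of the $s_{ij}, c_{ij}$ under $w_1\mapsto w_1 + ia_+$ and noting that they balance between the numerator and denominator of each $R_m$ (exactly as for $\sum_m R_m$); (ii)~the generic simple $w_1$-poles of the individual summands cancel in pairs, just as the poles of $\sum R_m$ do (at each pole location the two singular terms share a common numerator, after one uses evenness of $c$ at limiting configurations like $w_1=w_4$), so $D$ is entire; and (iii)~$D(w_1) \to 0$ as $\re w_1 \to \pm\infty$. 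Point (iii) is the substantive one: using $s,c \sim \tfrac12 e(\pm\,\cdot)$ at horizontal infinity, the leading $e(2w_1)$ coefficients of LHS and RHS agree thanks to the sum-to-product identity $c(x-2w_2) - c(x-2w_3) = 2s(x-w_2-w_3)s(w_3-w_2)$; an analogous matching at the subleading $O(1)$ level, using $\sum_j w_j = 0$ to eliminate spurious terms, shows $D\to0$. An entire, $ia_+$-periodic function of $w_1$ that vanishes at $\re w_1 \to \pm \infty$ is identically zero, giving \eqref{id1}.

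The proof of \eqref{id2} would follow the same template, but the $x$-content is now of degree three. Using the product-to-sum formulas $c(a)s(a)^2 = \tfrac14[c(3a) - c(a)]$ and $s(a+b)c(a-b) = \tfrac12[s(2a)+s(2b)]$, each summand $c(x-2w_m)s(x+2w_m)^2$ on the LHS expands as a linear combination of $c(3x+6w_m), c(x+2w_m), s(3x+6w_m), s(x+2w_m)$ with $w_m$-dependent coefficients, and the RHS expands in parallel via Newton's identities applied to $y_l := s(x+2w_l)$. Matching the four independent $x$-Fourier modes $c(kx),s(kx)$ ($k=1,3$) reduces \eqref{id2} to four identities in the $w$'s, each provable by the same periodicity/pole-cancellation/asymptotic recipe as for \eqref{id1}.

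The main obstacle is the sheer length of the bookkeeping for \eqref{id2}: four Fourier modes in both $x$ and $w_1$ must be tracked, the pole-pair residue cancellations are numerous, and the constraint $\sum_j w_j = 0$ enters essentially at many junctures, in the same spirit as the calculations for Lemma~2.5 and Sublemma~2.6 in Section~2.
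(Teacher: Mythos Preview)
Your approach is correct in outline, and for \eqref{id0} and \eqref{id1} it is close to the paper's. For \eqref{id0} you recast $R_m$ as the residue of an auxiliary $ia_+/2$-periodic function $\Psi(z)$ and invoke Cauchy's theorem; this is a clean variant of the paper's argument, which instead shows directly that $\sum_m R_m$ is entire in $w_1$ (pairwise residue cancellation) and vanishes as $\re w_1\to\pm\infty$ via explicit expansions of each $R_m$. (Incidentally, your $\Psi(z)\to 0$ at horizontal infinity holds regardless of the sum constraint, so your argument actually proves \eqref{id0} unconstrained.) For \eqref{id1} your periodicity/entireness/asymptotics scheme in $w_1$ is exactly what the paper does.

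The real divergence is at \eqref{id2}. You propose to Fourier-expand in $x$ and reduce to four separate $w$-identities, each handled by the same $w_1$-recipe; this works in principle but, as you anticipate, is long. The paper takes a much shorter route: it multiplies \eqref{id2} by $16\prod_{j<k}s(w_j-w_k)\big/\prod_l s(x+2w_l)$ to obtain an equivalent identity whose two sides are $S_4$-antisymmetric in $w$, $ia_+$-antiperiodic in $x$, and vanish as $\re x\to\pm\infty$. It then suffices to check a \emph{single} residue in $x$, at $x=-2w_4$, which reduces to a finite identity in $w_1,w_2,w_3$ that is verified by one product-to-sum expansion. So instead of four $w_1$-arguments, the paper does one $x$-argument plus a short explicit check; this is both conceptually cleaner and considerably less work than your Fourier-mode bookkeeping.
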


\begin{proof} 
 In order to prove \eqref{id0}, we begin by noting that its lhs, viewed as a function  for unconstrained $w\in\C^4$, is holomorphic on $\C^4$. Indeed, it has no poles on the hyperplanes $w_j=w_k+ila_+$, $l\in\Z$, since residues cancel in pairs. (We get simple poles for generic values of the remaining variables, and need not consider hyperplane intersections, since they have codimension 2.)  Restricting attention to the $A_3$ choice
\be
w_4 \equiv -w_1-w_2-w_3,
\ee
and viewing the lhs from now on as a function of $w_1$, we therefore obtain an entire function. (Alternatively, residue cancellation at the $w_1$-poles can be checked directly.) 

Denoting the coefficient ratios by $R_j(w)$, $j=1,2,3,4$,  we now have for $\re w_1\to  \infty$:
\begin{multline}\label{R1}
R_1=\frac{ c(w_2-w_3)c(w_1+w_2+2w_3)c(w_1+2w_2+w_3) }{s(w_1-w_2)s(w_1-w_3)s(2w_1+w_2+w_3)}
\\
=2c(w_2-w_3)\Big(  e( - 2w_1+3w_2+3w_3)+O(e(-4w_1) )  \Big), 
\end{multline}
\begin{multline}\label{R2}
R_2= \frac 
{ c(w_1-w_3)c(2w_1+w_2+w_3 )c(w_1+w_2+2w_3) }{s(w_2-w_1)s(w_2-w_3)s(w_1+2w_2+w_3)}
=\frac{-1}{2s(w_2-w_3)}\Big(  [e( 2w_1+w_2+w_3)
\\
 +e(w_2+3w_3)+e(-w_2-3w_3)+e(3w_2+w_3)+e(-3w_2-w_3)    ]+O(e(-2w_1) )  \Big), 
\end{multline}
\begin{multline}\label{R3}
R_3= \frac 
{ c(w_1-w_2)c(2w_1+w_2+w_3 )c(w_1+2w_2+w_3) }{s(w_3-w_1)s(w_3-w_2)s(w_1+w_2+2w_3)}
\\
=\frac{-1}{2s(w_3-w_2)}\Big(  [e( 2w_1+w_2+w_3) +2c(w_2+3w_3)+2c(3w_2+w_3)    ]+O(e(-2w_1))  \Big), 
\end{multline}
\begin{multline}
R_4=-\frac{ c(w_2-w_3)c(w_1-w_2)c(w_1-w_3) }{s(2w_1+w_2+w_3)s(w_1+2w_2+w_3)s(w_1+w_2+2w_3)}
\\
=-2c(w_2-w_3)\Big(  e(- 2w_1-5w_2-5w_3)+O(e(-4w_1))  \Big). 
\end{multline}
As a result, we obtain
\be
\sum_{m=1}^4 R_m= O(e(-2w_1)),
\ee
so that the lhs of~\eqref{id0} vanishes for $\re w_1\to \infty$. The lhs changes sign when we take $w\to -w$, whence we deduce it also vanishes for $\re w_1\to -\infty$. Since it is $ia_+$-periodic and pole-free, it must vanish identically, yielding  the first identity~\eqref{id0}.

In order to prove \eqref{id1}, we proceed along the same lines. Once more, it follows that we are dealing with functions that are entire in the variable~$w_1$. On the lhs we now obtain for $\re w_1\to  \infty$:
\be
R_1c(x-2w_1)=c(w_2-w_3) e(- x+3w_2+3w_3)+O(e(-2w_1) ),
\ee 
\begin{multline}
R_2c(x-2w_2)+R_3c(x-2w_3)=\frac{1}{2s(w_2-w_3)} [-c(x-2w_2)+c(x-2w_3)]
\\
 \times [e( 2w_1+w_2+w_3) +2c(w_2+3w_3)+2c(3w_2+w_3)    ]
+O(e(-2w_1) )
\\
=s(x-w_2-w_3)[e( 2w_1+w_2+w_3) +2c(w_2+3w_3)+2c(3w_2+w_3)    ]
+O(e(-2w_1) ), 
\end{multline} 
and
\be
R_4c(x-2w_4)=-c(w_2-w_3) e ( x-3w_2-3w_3)+O(e(-2w_1)) .
\ee 

Summing  these three terms, we see that the lhs of~\eqref{id1} yields
\begin{multline}
s(x-w_2-w_3)[2c( 2w_1+w_2+w_3) +2c(w_2+3w_3)+2c(3w_2+w_3)    ]
\\
-2c(w_2-w_3)s(x-3w_2-3w_3)+O(e(-2w_1) )
\\
=s(x+2w_1)+s(x-2w_1-2w_2-2w_3) +s(x+2w_3)+s(x-2w_2-4w_3)+s(x+2w_2)
\\
+s(x-4w_2-2w_3)
-s(x-2w_2-4w_3)-s(x-4w_2-2w_3)+O(e(-2w_1) )
\\
=\sum_{l=1}^4s(x+2w_l)+O(e(-2w_1) ).
\end{multline}
As a result, we get limit 0 for $\re w_1\to\infty$ when we subtract~$\sum_{l}s(x+2w_l)$ from the lhs of~\eqref{id1}. Now when we flip the sign of $x$ and $w$, both lhs and rhs of~\eqref{id1} change sign. Therefore the difference of lhs and rhs also vanishes for $\re w_1\to -\infty$. Taking entireness and periodicity into account, \eqref{id1} follows. 

To prove the identity~\eqref{id2}, we multiply by
\be
16\prod_{1\le j<k\le 4}s(w_j-w_k)\Big/\prod_{l=1}^4 s(x+2w_l),
\ee
to get the equivalent identity
\begin{multline}\label{id3}
\frac{2c(x-2w_1)s(x+2w_1)}{\prod_{l>1}s(x+2w_l)}s(2w_2-2w_3)s(2w_3-2w_4)s(2w_2-2w_4)+\mathrm{cyclic}
\\
=-16\prod_{j<k}s(w_j-w_k)\sum_{l=1}^4\frac{1}{s(x+2w_l)}.
\end{multline}
Both lhs and rhs are antisymmetric under permutations of $w_1,w_2,w_3,w_4$, vanish for $\re x\to\pm\infty$, and are $ia_+$-antiperiodic in~$x$. Multiplying by $s(x+2w_4)$ and then putting $x=-2w_4$, we obtain on the lhs 
\begin{multline}\label{lhs}
2c(2w_4+2w_1)s(2w_1-2w_4)s(2w_2-2w_3)-2c(2w_4+2w_2)s(2w_2-2w_4)s(2w_1-2w_3)
\\
+2c(2w_4+2w_3)s(2w_3-2w_4)s(2w_1-2w_2)
\\
=[s(4w_1)-s(4w_4)]s(2w_2-2w_3) +[s(4w_2)-s(4w_4)]s(2w_3-2w_1)
\\
+[s(4w_3)-s(4w_4)]s(2w_1-2w_2).
\end{multline}
Clearly, to prove~\eqref{id3} it suffices to show that this expression equals $-16\prod_{j<k} s(w_j-w_k)$.

To this end we expand the $s$-product by using
\be\label{ss}
2s(a)s(b)=c(a+b)-c(a-b).
\ee
This yields
\be
\prod_{1\le j<k\le 3}2s(w_j-w_k)=-2[s(2w_1-2w_2) +s(2w_2-2w_3)+s(2w_3-2w_1)],
\ee 
and
\begin{multline}
\prod_{l=1}^3 2s(w_l-w_4)=2s(w_1+w_2+w_3-3w_4)
\\
-2s(w_1+w_2-w_3-w_4)-2s(w_1-w_2+w_3-w_4)-2s(-w_1+w_2+w_3-w_4).
\end{multline}
Using next $w_4=-w_1-w_2-w_3$, we get
\be
\prod_{l=1}^3 2s(w_l-w_4)=-2[s(4w_4)+s(2w_1+2w_2)+s(2w_2+2w_3)+s(2w_3+2w_1)].
\ee
From this we deduce
\begin{multline}
-16\prod_{1\le j<k\le 4}s(w_j-w_k) = -\big[ s(2w_1-2w_2)+s(2w_2-2w_3)+s(2w_3-2w_1)\big]
\\
\times \big[ s(4w_4)+s(2w_1+2w_2)+s(2w_2+2w_3)+s(2w_3+2w_1)\big].
\end{multline}
Comparing this to~\eqref{lhs}, we infer that it remains to show the identity
\begin{multline}
- \big[ s(2w_1-2w_2)+s(2w_2-2w_3)+s(2w_3-2w_1)\big]
\\
\times \big[ s(2w_1+2w_2)+s(2w_2+2w_3)+s(2w_3+2w_1)\big]
\\
=s(4w_1)s(2w_2-2w_3)+s(4w_2)s(2w_3-2w_1)+s(4w_3)s(2w_1-2w_2).
\end{multline}
Expanding lhs and rhs by using once more~\eqref{ss}, we see that among the 18 cosh-terms on the lhs, the 6 terms involving $c(4w_j)$ and the 6 terms involving $c(2w_j+2w_k)$ with $j<k$ cancel in pairs. The remaining 6 terms yield the 6 terms on the rhs. 
 \end{proof}

\section{Further developments}

\subsection{The $A_2$ case}

Consider the A$\De$O family $A_{2,\de}(\mu;x)$ \eqref{A2sum}. It is not obvious, but true that all of these operators commute, viewed as endomorphisms of the space of meromorphic functions of~$x$. We proceed to prove this. 

\begin{proposition}
We have commutation relations
\be\label{Acom}
[A_{\de}(\mu;x),A_{\de'}(\mu';x)]=0,
\ee
where $\de,\de'\in \{+,-\}$ and $\mu,\mu'\in\C$.
\end{proposition}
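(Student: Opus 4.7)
The plan is to deduce~\eqref{Acom} from the $A_2$ kernel identities~\eqref{kid2} proved in Section~2 (which hold for both $\de=+$ and $\de=-$ by modular invariance). Writing $A:=A_{2,\de}(\mu;\cdot)$ and $B:=A_{2,\de'}(\mu';\cdot)$, the strategy is first to show that $[A(v),B(v)]\cS_2(v,w,z)=0$ on the $A_2$ constraint surface, and then to promote this to operator equality $[A,B]=0$ on meromorphic functions of~$x\in\C^3$.

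For the kernel-identity step, the plan is to exploit the three-fold $v\leftrightarrow w\leftrightarrow z$ symmetry of~\eqref{kid2}---a feature absent from standard Ruijsenaars kernel identities---to route the two operators to \emph{different} auxiliary slots. Starting from $A(v)B(v)\cS_2$, I apply the kernel identity to the inner factor to replace $B(v)$ by $B(w)$; since $A(v)$ and $B(w)$ act on disjoint slots they commute as operators, and a further use of the kernel identity on $A(v)$ produces $B(w)A(z)\cS_2$. Applied to $B(v)A(v)\cS_2$, the same procedure---routing $A$ to slot~$z$ first and $B$ to slot~$w$ second---yields $A(z)B(w)\cS_2$. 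Because $A(z)$ and $B(w)$ act on disjoint slots they too commute, so the two expressions coincide and $[A(v),B(v)]\cS_2=0$ identically on the constraint surface.

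For the second step, write $[A(v),B(v)]=\sum_s C_s(v)T^s$, with shifts $s=ia_{-\de}e_j+ia_{-\de'}e_k$ where $e_1=(2,-1,-1)/3$ and cyclic images. For generic parameters these shifts are pairwise distinct (for $\de=\de'$ one first groups the pairs $(j,k)\leftrightarrow(k,j)$ sharing shift $-e_l$). The relation $\sum_s C_s(v)\cS_2(v+s,w,z)=0$ is then a linear dependence among translates of $\cS_2$; fixing generic $v$ and viewing each $\cS_2(v+s,\cdot,\cdot)$ as a meromorphic function of $(w,z)$, the pole lattices contributed by the factors $G(v_k+s_k+w_l+z_m-\de_2)$ depend injectively on~$s$, so distinct shifts yield disjoint pole loci and linear independence forces $C_s\equiv 0$. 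Finally, the coefficients and shifts of both $A$ and $B$ are invariant under the diagonal translation $x\mapsto x+c(1,1,1)$ (the coefficients depend on $x$ only through differences $x_i-x_j$, and $e_k\cdot(1,1,1)=0$), so the commutator inherits this invariance; vanishing on the hyperplane $\sum v_j=0$ therefore extends to vanishing on all of~$\C^3$. The main obstacle is the injectivity step: one must verify that all shifts in the commutator are genuinely distinct (possibly after continuous deformation past isolated resonant values of $a_+/a_-$) and that the translated kernel functions are linearly independent, but both are routine once the explicit pole structure of the elliptic gamma function is invoked.
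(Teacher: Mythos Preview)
Your argument is correct and genuinely different from the paper's. The paper proceeds by direct computation: for $\de\ne\de'$ the commutator vanishes term-by-term using the A$\De$E~\eqref{Rade}, and for $\de=\de'$ the off-diagonal coefficients are shown to vanish by reducing to a concrete theta-function identity (Eq.~\eqref{Razaid}), which is then proved by an elliptic-function argument in an auxiliary variable~$w$ (matching residues, then evaluating at $w=0$). No kernel identities are invoked.

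Your route is more conceptual: you exploit the distinctive \emph{three}-slot structure of the $A_2$ kernel identity~\eqref{kid2} to send $A$ and $B$ to \emph{different} auxiliary slots, where they manifestly commute. This is a genuine use of the $A_2$-specific feature that $\cS_2$ carries a third vector~$z$; the standard two-slot kernel identities~\eqref{kid} would not suffice for this maneuver. The subsequent lift from $[A,B]\cS_2=0$ to $[A,B]=0$ via linear independence of the shifted kernels and translation invariance is sound (your observation that the diagonal shifts $2e_j$ vanish trivially and the off-diagonal pairs group to $-e_l$ matches the paper's remark that the $k=l$ case is ``clear'').

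The trade-offs: the paper's proof is self-contained and yields the explicit identity~\eqref{Razaid} as a byproduct, whereas your argument imports the full strength of Theorem~2.2, which is considerably harder than the proposition itself. On the other hand, your method shows that commutativity is a \emph{structural consequence} of the three-variable kernel identity, and would apply verbatim to any operator family satisfying identities of the form~\eqref{kid2}. The linear-independence step, while routine, does deserve a line more of justification than ``depends injectively on~$s$'': one should note that for generic~$v$ and parameters the $3N$ values $v_k+s_k$ (over all shifts~$s$ in the commutator) are pairwise distinct modulo the period lattice, so each shift contributes a pole hyperplane $w_1+z_1=\mathrm{const}$ not shared by any other.
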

\begin{proof}
For $\de\ne\de'$, this is easily checked from the definition~\eqref{A2sum} by using the A$\De$E~\eqref{Rade} satisfied by $R_{\de}(z)$. Taking next $\de=\de'$, we may just as well show commutativity for the A$\De$Os obtained upon multiplication by $\exp[ia_{-\de}(\partial_1+\partial_2+\partial_3)/3]$. This amounts to showing that the coefficients of the shift monomials $\exp[ia_{-\de}(\partial_k+\partial_l)]$ in the commutator all vanish. This is clear for $k=l$, so by $S_3$-invariance we need only show vanishing for $k=1$ and $l=2$, and we may also take $\de=+$.

A short calculation shows that the vanishing of the pertinent coefficient is equivalent to the identity
\begin{multline}\label{Razaid}
\frac{R_+(x_2-x_3\pm\mu)R_+(x_1-x_3+ia_-\pm\mu')}{R_+(x_1-x_2-ia_+/2)R_+(x_2-x_1-ia_--ia_+/2)}
\\
+\frac{R_+(x_1-x_3\pm\mu)R_+(x_2-x_3+ia_-\pm\mu')}{R_+(x_2-x_1-ia_+/2)R_+(x_1-x_2-ia_--ia_+/2)}=(\mu\leftrightarrow \mu').
\end{multline}
Setting
\be
u:=x_2-x_3,\ \ v:=x_1-x_3,\ \ w:=ia_-,\ \ t:=ia_+/2,
\ee
and dropping the subscript, this becomes
\be\label{Raid}
\frac{R(u\pm\mu)R(v+w\pm\mu')}{R(u-v+t )R(v-u+w+t)}
+\frac{R(v\pm\mu)R(u+w\pm\mu')}{R(v-u+t)R(u-v+w+t)}=(\mu\leftrightarrow \mu').
\ee

In order to prove~\eqref{Raid}, we divide the lhs and rhs by $R(u+v-t)R(u+v+w-t)$. The point is that the resulting 4 functions are then elliptic functions of $w$ (with periods $\pi/r$, $ia_+$). The residues of lhs and rhs at the (generically) simple poles
\be
w=u-v,v-u, -u-v,
\ee
are readily verified to be equal. Therefore, the difference is constant in~$w$. Choosing $w=0$, we deduce that the difference vanishes, so that~\eqref{Acom} follows.
\end{proof}

It is clear from the proof that~\eqref{Acom} still holds when we view the operators as endomorphisms of the space of meromorphic functions of $x$ satisfying the $A_2$ restriction $\sum_jx_j=0$. Razamat  reduced the  commutativity of his operators to the theta-function identity Eq.\,(2.18) in~\cite{Ra18}, which is substantially equivalent to~\eqref{Razaid}.

Letting $x\in\R^3$, it follows from~\eqref{defR} that the factors $R_{\de}(x_j-x_k\pm\mu)$ take values in $[0,\infty)$  when we choose $\mu$ real or purely imaginary. We shall do so from now on.

Next we note that the shifts are formally self-adjoint (henceforth s.\,a.) with respect to Lebesgue measure on $\R^3$. Viewed as operators on meromorphic functions, they commute with the numerator functions, but they do not commute with the denominator functions. Moreover, for $x\in \R^3$ the latter are not real-valued. Thus the A$\De$Os are not  formally s.\,a. with respect to Lebesgue measure. 

However, there is a simple choice for a weight function $W(x)$ so that the A$\De$Os become formally s.\,a.~with respect to the measure $W(x)dx$, as already pointed out by Razamat~\cite{Ra18}. With our conventions this weight function reads
\be\label{W}
W(x):=\prod_{1\le j<k\le 3}G(\pm (x_j-x_k)+ia).
\ee
Using the reflection equation~\eqref{refle} and A$\De$E~\eqref{Geratio}, this yields a simpler representation
\be\label{Ws}
W(x)=(p_+p_-)^3\prod_{1\le j<k\le 3}s_+(x_j-x_k)s_-(x_j-x_k).
\ee

Choosing from now on
\be\label{xord}
-\pi/2r\le x_3<x_2<x_1<\pi/2r,
\ee
the formal self-adjointness property just announced can be deduced from the similarity transformation to new A$\De$Os
\be\label{Hde}
H_{2,\de}(\mu;x):=W(x)^{1/2}A_{2,\de}(\mu;x)W(x)^{-1/2},\ \ \ \de=+,-,
\ee
where the positive square root of $W(x)$ is understood. (Note~\eqref{xord} implies $W(x)$ is positive.) In fact, from an explicit formula for these Hamiltonians we can obtain a stronger property, as shown next.

\begin{proposition}
With the ordering~\eqref{xord} in effect and $\mu\in \R\cup i\R$, the A$\De$Os $H_{2,\de}(\mu;x)$ are formally positive with respect to Lebesgue measure.
\end{proposition}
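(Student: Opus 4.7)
The plan is to exhibit each summand of $H_{2,\delta}(\mu;x)$ as $Q_{k,\delta}^{*}Q_{k,\delta}$, so that formal positivity is immediate. Write
\[
H_{2,\delta}(\mu;x)=\sum_{k=1}^{3}\tilde C_{k,\delta}(\mu;x)\,T_{k,\delta},\qquad T_{k,\delta}=\exp(i\alpha_{k,\delta}\cdot\nabla),
\]
with $\alpha_{1,\delta}=a_{-\delta}(2,-1,-1)/3$ and $\alpha_{2,\delta},\alpha_{3,\delta}$ its cyclic images. Because $\alpha_{k,\delta}\in\R^{3}$, Fourier transformation turns $T_{k,\delta}$ into multiplication by the positive function $\exp(-\alpha_{k,\delta}\cdot\xi)$, so $T_{k,\delta}$ is formally a positive self-adjoint operator on $L^{2}(\R^{3},dx)$, with positive self-adjoint square root $T_{k,\delta}^{1/2}=\exp(i\alpha_{k,\delta}\cdot\nabla/2)$.

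Next I would compute the conjugated coefficient
\[
\tilde C_{k,\delta}(\mu;x)=\Bigl(\frac{W(x)}{W(x+i\alpha_{k,\delta})}\Bigr)^{1/2}C_{k,\delta}(\mu;x)
\]
from the factorized weight \eqref{Ws}, the difference equations for $R_{\delta}$, and the reflection symmetry $\overline{R_{\delta}(z)}=R_{\delta}(\bar z)$ recorded in Appendix~A. The goal is to identify a meromorphic $V_{k,\delta}(\mu;z)$ that inherits the Schwarz reflection $\overline{V_{k,\delta}(\mu;z)}=V_{k,\delta}(\bar\mu;\bar z)$ and satisfies
\[
\tilde C_{k,\delta}(\mu;x)=V_{k,\delta}(\mu;x+i\alpha_{k,\delta}/2)\,V_{k,\delta}(\mu;x-i\alpha_{k,\delta}/2).
\]
For $\mu\in\R\cup i\R$ and real $x$ obeying the ordering \eqref{xord}, this right-hand side equals $|V_{k,\delta}(\mu;x+i\alpha_{k,\delta}/2)|^{2}\ge 0$. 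Producing this factorization is the content of the ``explicit formula'' alluded to in the paragraph preceding the proposition: the weight \eqref{Ws} is engineered so that the ratio $W(x)/W(x+i\alpha_{k,\delta})$ pairs precisely with the $R_{\delta}(x_{1}-x_{j}-ia_{\delta}/2)$ denominator factors of $C_{k,\delta}$, while the numerator $R_{\delta}(x_{l}-x_{m}\pm\mu)$ is already a Schwarz-symmetric square in $x_{l}-x_{m}$ exactly when $\mu\in\R\cup i\R$; the remaining shifts line up so that the two halves are related by $z\leftrightarrow\bar z$ across the pair $x\pm i\alpha_{k,\delta}/2$.

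Granting this factorization, an elementary calculation using $T_{k,\delta}^{1/2}g(x)=g(x+i\alpha_{k,\delta}/2)T_{k,\delta}^{1/2}$ together with self-adjointness of $T_{k,\delta}^{1/2}$ gives
\[
\tilde C_{k,\delta}(\mu;x)\,T_{k,\delta}=\bigl(V_{k,\delta}(\mu;x)\,T_{k,\delta}^{1/2}\bigr)^{*}\bigl(V_{k,\delta}(\mu;x)\,T_{k,\delta}^{1/2}\bigr),
\]
a manifestly non-negative operator. Summing the three non-negative summands proves the proposition, and the cyclic symmetry of \eqref{A2sum} and \eqref{Ws} reduces the verification of the factorization to the single case $k=1$. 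The main obstacle is exactly this factorization step: one must massage the explicit coefficient $\tilde C_{1,\delta}(\mu;x)$ into the paired Schwarz-symmetric form, and it is precisely the hypothesis $\mu\in\R\cup i\R$ (together with the chamber ordering \eqref{xord}) that ensures every would-be negative or complex factor meets its Schwarz conjugate partner rather than leaving a stray sign or phase.
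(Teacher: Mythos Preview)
Your overall strategy---showing each summand is formally non-negative---matches the paper's, but the presentation differs and, more importantly, you have not actually carried out the key step.

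The paper writes each summand in the sandwich form
\[
\text{const}\cdot R_{\de}(x_j-x_k\pm\mu)\,V_{m,\de}(x)^{1/2}\,T_{m,\de}\,V_{m,\de}(x)^{1/2},
\]
with $V_{m,\de}(x)$ an explicit \emph{real positive} function on the chamber (a reciprocal product of $s_\de$-factors, cf.~\eqref{V1}--\eqref{V3}). Positivity is then immediate: the $R_\de(\cdots\pm\mu)$ prefactor is non-negative for $\mu\in\R\cup i\R$ and commutes with the shift (it depends only on the invariant difference $x_j-x_k$), and $V^{1/2}TV^{1/2}$ is manifestly positive for a positive shift $T$ and real $V^{1/2}$. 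No half-shifts, no square roots of complex-valued functions, no Schwarz-reflection bookkeeping. The paper stresses that verifying this sandwich form ``involves crucial and tricky signs'' and supplies the two identities \eqref{sRy}--\eqref{ssy} that make the computation go through.

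Your proposal instead introduces half-shift operators $T_{k,\de}^{1/2}$ and aims for a $Q^{*}Q$ form via a factorization $\tilde C_k(x)=V_k(x+i\alpha_k/2)\,V_k(x-i\alpha_k/2)$ with a Schwarz-symmetric meromorphic $V_k$. This is a legitimate alternative route, and once unwound it is equivalent to the paper's sandwich form. But you do not actually produce $V_k$ or verify the factorization: phrases like ``I would compute'' and ``the goal is to identify'' leave the entire content of the proof undone. The paper's warning about tricky signs is exactly the point here---without the concrete identities relating $s_\de$, $R_\de$ and the shifts, one cannot be sure the phases and branch choices in your complex square roots line up as claimed, and this is where an attempted proof along your lines would most likely go wrong.
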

\begin{proof}
We claim that the above Hamiltonians can be written as
\begin{multline}\label{H2sum} 
H_{2,\de}(\mu;x)=\frac{\exp(-ra_{-\de})}{p_{\de}^2}\Big[ R_{\de}(x_2-x_3\pm\mu)V_{1,\de}(x)^{1/2}\exp\Big(\frac13 ia_{-\de}\big(2\partial_{x_1}-\partial_{x_2}-\partial_{x_3}\big)\Big)V_{1,\de}(x)^{1/2}
\\
+ R_{\de}(x_3-x_1\pm\mu)V_{2,\de}(x)^{1/2}\exp\Big(\frac13 ia_{-\de}\big(2\partial_{x_2}-\partial_{x_3}-\partial_{x_1}\big)\Big)V_{2,\de}(x)^{1/2}
\\
+ R_{\de}(x_1-x_2\pm\mu)V_{3,\de}(x)^{1/2}\exp\Big(\frac13 ia_{-\de}\big(2\partial_{x_3}-\partial_{x_1}-\partial_{x_2}\big)\Big) V_{3,\de}(x)^{1/2}\Big] ,
\end{multline}
where we have introduced coefficients
\be\label{V1}
V_{1,\de}(x):=1/s_{\de}(x_1-x_2)s_{\de}(x_1-x_3),
\ee
\be
 V_{2,\de}(x):=1/s_{\de}(x_1-x_2)s_{\de}(x_2-x_3),
 \ee
 \be\label{V3}
 V_{3,\de}(x):=1/s_{\de}(x_1-x_3)s_{\de}(x_2-x_3),
\ee
and positive square roots are understood. (Note that these coefficients are positive in view of our assumption~\eqref{xord}.) 

Taking this explicit representation for granted, formal positivity is manifest from the $\mu$-assumption ensuring $R_{\de}(x_j-x_k\pm \mu)\in [0,\infty)$ and the formal positivity of the shifts. Its verification involves crucial and tricky signs, so we add a little more detail. Letting $y\in (0,\pi/r)$, the key point is to use the relation~\eqref{sR} between $s_{\de}(y)$ and $R_{\de}(y)$ to conclude
\be\label{sRy}
\frac{\sqrt{s_{\de}(y)}}{R_{\de}(sy-ia_{\de}/2)}=\frac{si\exp(-siry)}{p_{\de}}\frac{1}{\sqrt{s_{\de}(y)}},\ \ \ \ \ \ \ \ \ \ \ s=+,-,
\ee
and combine this with
\be\label{ssy}
\exp(sia_{-\de}d/dy)\frac{1}{\sqrt{s_{-\de}(y)}}=-si\exp(-ra_{-\de}/2)\exp(siry)\frac{1}{\sqrt{s_{-\de}(y)}}\exp(sia_{-\de}d/dy),
\ee
where we used $y\in(0,\pi/r)$ and  the A$\De$E~\eqref{Rade} satisfied by $s_{-\de}$.
\end{proof}

With the $A_2$ sum constraint in force from now on, the Hamiltonians \eqref{H2sum}
satisfy the kernel identities
\be\label{kid2n}
H_{2,\de}(\mu;v)\cK_2(v,w,z)=H_{2,\de}(\mu;w)\cK_2(v,w,z)=H_{2,\de}(\mu;z)\cK_2(v,w,z),\ \ \ \de=+,-,
\ee
where
\be\label{cK2}
\cK_2(v,w,z):=(W(v)W(w)W(z))^{1/2}\cS_2(v,w,z).
\ee
Here we are thinking of vectors $v,w,z$ restricted by~\eqref{xord}, and once more taking positive square roots. 

Requiring the ordering~\eqref{xord} and the $A_2$ sum constraint,  we obtain a set denoted $G_2$. When we fix $z\in G_2$, \eqref{cK2} yields a function that can be reinterpreted as the kernel of a Hilbert-Schmidt integral operator on
$L^2(G_2,dw_r)$, where $dw_r$ denotes Lebesgue measure on $G_2$. (The subscript~$r$ encodes the hyperplane restriction at issue.) Specifically, we are thinking of the integral operator defined by
\be\label{Iz}
(I(z)f)(v):= \int_{G_2}\cK_2(v,-w,z)f(w)dw_r,\ \ \ z\in G_2.
\ee
 We expect that this yields a commutative family of Hilbert-Schmidt operators, all of whose singular values are (generically) positive. The associated orthonormal functions can be expected to yield a basis of positive-eigenvalue eigenfunctions for the Hamiltonians, too, allowing them to be reinterpreted as bona fide self-adjoint operators. (More information on this scenario can be found in previous work that can be traced from~\cite{Ru15}.) 

We proceed to render these expectations more precise in the following conjecture.

\begin{conjecture}
There exists a convergent expansion
\be\label{expans}
\cS_2(v,w,z)=\sum_{n=0}^{\infty} c_n J_n(v)J_n(w)J_n(z),\ \ c_n>0,\ \ \ v,w,z\in G_2,
\ee
with $J_n(x)$, $n\in\N$, functions on $G_2$  with a meromorphic extension to 
\be\label{cM2}
\cM_2:=\{ x\in\C^3 \mid x_1+x_2+x_3=0\},
\ee
 and with the following additional features:
 \be\label{Jncon}
 J_n(-x)=\overline{J_n(\overline{x})},\ \ \ x\in\cM_2,
 \ee
\be
\sum_{n=0}^{\infty}|c_nJ_n(x)|^2 <\infty,\ \ \ x\in G_2,
\ee
\be
A_{2,\de}(\mu;x)J_n(x)=\lambda_{\de,n}(\mu)J_n(x),\ \ \de=+,-,\ \ \ n\in\N,\ \ x\in\cM_2,
\ee 
\be
\lambda_{\de,n}(\mu)>0,\ \ \ \mu\in \R \cup i\R .
\ee
Moreover, the set of vectors $J_n(x)$, $n\in\N$, yields an orthonormal base for $L^2(G_2,W(x)dx_r)$. 
\end{conjecture}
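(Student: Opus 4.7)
My strategy is to spectrally analyze the commutative family of integral operators $I(z)$ defined in~\eqref{Iz}, and to use the kernel identities~\eqref{kid2n} to lift the diagonalization to the Hamiltonians $H_{2,\de}(\mu;x)$. First I would show that $\cS_2(v,-w,z)$ is continuous on the compact set $G_2^3$: since $\de_2=i(a_++a_-)/6$, the building blocks $G(v_k-w_l+z_m-\de_2)$ have no real poles, and $W^{1/2}$ is bounded on $G_2$. Consequently $I(z)$ is Hilbert-Schmidt on $L^2(G_2,dw_r)$. Using the reflection equation for $G$ together with the manifest $S_3$-symmetry of $\cS_2$ in its three vector arguments, one finds $\overline{\cK_2(v,-w,z)}=\cK_2(w,-v,z)$ for real arguments, so $I(z)$ is self-adjoint; the same symmetry combined with Fubini yields $[I(z),I(z')]=0$ for all $z,z'\in G_2$.

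With these structural properties in hand, the spectral theorem for commuting compact self-adjoint operators produces a common orthonormal basis $\{\tilde J_n\}_{n\in\N}$ of $L^2(G_2,dw_r)$ with $I(z)\tilde J_n=\tau_n(z)\tilde J_n$, and setting $J_n:=W^{-1/2}\tilde J_n$ yields an orthonormal basis of $L^2(G_2,W(x)dx_r)$. A Mercer-type argument, using continuity of $\cK_2$ and trace-class estimates on iterates of $I(z)$, delivers the pointwise convergent expansion~\eqref{expans} with $c_n>0$ chosen so that $\tau_n(z)=c_n J_n(z)$; the reality relation~\eqref{Jncon} emerges from the same reflection-based step. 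For the eigenvalue assertion, one applies $H_{2,\de}(\mu;v)$ under the integral $(I(z)\tilde J_m)(v)$, uses~\eqref{kid2n} to trade the action for $H_{2,\de}(\mu;w)$, transfers it onto $\tilde J_m$ via formal self-adjointness from Prop.~4.2, and compares with the same action applied in the $z$-variable; this gives $H_{2,\de}(\mu;x)J_n(x)=\lambda_{\de,n}(\mu)J_n(x)$, and formal positivity of $H_{2,\de}$ then yields $\lambda_{\de,n}(\mu)>0$ by pairing with $J_n$.

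The hard part is forcing all $c_n$ to be strictly positive and ensuring completeness of $\{J_n\}$: without a closed-form factorization of $\cS_2$, there is no direct way to see that the family $\{I(z):z\in G_2\}$ has trivial joint kernel. The most promising route is a perturbative argument, degenerating to the trigonometric or hyperbolic limits where explicit Macdonald-type orthogonal systems are available, and then propagating fullness by continuity in the modular parameters; but the elliptic case has enough additional structure to make this quite nontrivial. A secondary difficulty is promoting the formal eigenvalue equation to a pointwise identity on $\cM_2$: the shifts $\exp(ia_{-\de}(2\partial_j-\partial_k-\partial_l)/3)$ move the integration contour through the complex domain, and while the kernel identity~\eqref{kid2n} is precisely the tool that should make the contour manipulations succeed, tracking the $\cS_2$-residues encountered and verifying their cancellation will be delicate, and is the step that most directly relies on the detailed residue analysis carried out in Section~2.
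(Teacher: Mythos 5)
This statement is Conjecture~4.3: the paper does not prove it, and explicitly frames it as an open problem, so there is no proof in the paper to compare your proposal against. What the paper \emph{does} provide is a candid discussion of the obstructions, and it is worth checking your plan against that discussion.

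Your proposal contains one concretely wrong step. You assert that the reflection equation and $S_3$-symmetry imply $\overline{\cK_2(v,-w,z)}=\cK_2(w,-v,z)$ for real arguments, and hence that $I(z)$ is self-adjoint, so that the spectral theorem for commuting compact self-adjoint operators can be applied. The paper says the opposite, just below the conjecture: ``the Hilbert-Schmidt integral operator $I(z)$~\eqref{Iz} does not have a real-valued kernel and is not self-adjoint.'' Indeed, from $\overline{G(z)}=G(-\overline{z})$~\eqref{Gcone} and $\de_2\in i\R$, for real arguments one gets $\overline{\cS_2(v,-w,z)}=\cS_2(-v,w,-z)$; by the $S_3$-symmetry this equals $\cS_2(w,-v,-z)$, which differs from $\cS_2(w,-v,z)$ by the sign of $z$. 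So the hermiticity relation you need fails. This means the entire spectral-theorem route as you have written it breaks down: you only have a singular value decomposition, which yields $c_n\ge 0$ and \emph{a priori} two distinct orthonormal systems (left and right singular vectors) rather than one orthonormal basis of joint eigenvectors. The paper's own suggestion is precisely this SVD-plus-$S_3$-symmetry route, and it emphasizes that even then one only reaches $c_n\ge 0$, with strict positivity and completeness (in the sense of~\cite{Ru15}) being the genuinely open point. Your final paragraph does correctly identify completeness of $\{J_n\}$ and positivity of all $c_n$ as the hard part, so your assessment of the residual difficulty is in line with the paper; but the self-adjointness claim is a definite error, and the Mercer-type argument you invoke (which requires a positive self-adjoint kernel) does not apply here without substantial modification.

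A smaller issue: your eigenvalue argument applies $H_{2,\de}(\mu;v)$ under the integral sign and uses the kernel identity~\eqref{kid2n} to move the action from $v$ to $w$, then transfers it by formal self-adjointness of $H_{2,\de}$. This is the standard heuristic, and the paper expects exactly this mechanism, but it is not rigorous as stated: the A$\De$O shifts push the $w$-integration contour into the complex domain, and one must control the boundary behavior and residues encountered in deforming back. You flag this as a ``secondary difficulty,'' which is appropriate, but it is tied to the same analytic work the paper leaves open. In short: your plan captures the intended structure (integral operators from the kernel, kernel identities to diagonalize the Hamiltonians, completeness as the crux) but the self-adjointness of $I(z)$ is false, and without it the chain of deductions needs to be rebuilt on the SVD, as the paper itself indicates.
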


A concrete 1-variable counterpart for the state of affairs encoded in this conjecture is provided by Theorem~2.2 in~\cite{Ru13}. In that case, however, the orthonormal functions are real-valued. As already suggested by~\eqref{Jncon}, we expect that this not the case for the functions $J_n(x)$, $x\in G_2$. Indeed, the Hilbert-Schmidt integral operator $I(z)$~\eqref{Iz} does not have a real-valued kernel and is not self-adjoint.

On the other hand, we stress that the Hilbert-Schmidt property suffices for the existence of a singular value decomposition of the kernel. Combining this with the obvious $S_3$ symmetry of the kernel, it may well be possible to prove that this already entails the expansion~\eqref{expans}, with orthonormal functions $J_n(x)$ and with $c_n\ge 0$.  
However, to show that the operator is not only infinite-rank, but complete in the sense of~\cite{Ru15} (i.~e., that all $c_n$ are nonzero and the functions yield an orthonormal \emph{basis}), requires arguments beyond those provided by~\cite{Ru13}. 

As a final remark on this elliptic conjecture, we point out that it may turn out to be more natural to label the basis functions not by $n\in\N$, but rather by a multi-index
\be\label{kcon}
k\in\Z^3,\ \ \ k_1+k_2+k_3=0.
\ee

For the hyperbolic case the kernel identities are once more given by~\eqref{kid2}, now with $G$ in~\eqref{kf2} the hyperbolic gamma function (cf.~Appendix~A), and the A$\De$O family given by
\begin{multline}\label{A2hsum} 
A_{2,\de}(\mu;x)\equiv \frac{c_{\de}(x_2-x_3\pm\mu)}{s_{\de}(x_1-x_2)s_{\de}(x_1-x_3)}\exp\Big(\frac13 ia_{-\de}\big(2\partial_{x_1}-\partial_{x_2}-\partial_{x_3}\big)\Big)
\\
+  \frac{c_{\de}(x_3-x_1\pm\mu)}{s_{\de}(x_2-x_3)s_{\de}(x_2-x_1)}\exp\Big(\frac13 ia_{-\de}\big(2\partial_{x_2}-\partial_{x_3}-\partial_{x_1}\big)\Big)
\\
+ \frac{c_{\de}(x_1-x_2\pm\mu)}{s_{\de}(x_3-x_1)s_{\de}(x_3-x_2)}\exp\Big(\frac13 ia_{-\de}\big(2\partial_{x_3}-\partial_{x_1}-\partial_{x_2}\big)\Big).
\end{multline}
(Here we use the hyperbolic functions~\eqref{hnot}.) In this case the kernel identity involves the equality of the functions
\be\label{cLhn} 
\cL\equiv \frac{c_+(v_2-v_3\pm\mu)}{s_+(v_1-v_2)s_+(v_1-v_3)}
 \prod_{k\ne 1,l,m}\frac{1}{s_+(v_k+w_l+z_m+ia_+/3)}+\mathrm{cyclic},
\ee
and
\be\label{cRhn}
\cR\equiv \frac{c_+(w_2-w_3\pm\mu)}{s_+(w_1-w_2)s_+(w_1-w_3)}
 \prod_{k,l\ne 1,m}\frac{1}{s_+(v_k+w_l+z_m+ia_+/3)}+\mathrm{cyclic}.
\ee
More precisely, just as in the elliptic case, these functions are not equal for unconstrained $v,w,z\in\C^3$, but they become equal under the $A_2$ sum constraint~\eqref{A2con}. 

With one exception, the assertions just made readily follow from their elliptic counterparts by using the limit relations between the elliptic and hyperbolic quantities that can be found in~Subsection~III~B of~\cite{Ru97}. This exception is the assertion $\cL\ne \cR$ for unconstrained variables. (Note that the residue argument encoded in~\eqref{sumne1}--\eqref{sumne2} is inconclusive, since $r=0$ in the hyperbolic case.) We proceed to supply a proof.

\begin{proposition}
Letting $v,w,z\in\C^3$, the functions $\cL$ and $\cR$ are not equal.
\end{proposition}
\begin{proof}
We multiply $\cL$ and $\cR$ by
\be
\prod_{j=1}^2s_+(v_j+w_j+z_j+ia_+/3),
\ee
and then set
\be
v_j+w_j+z_j+ia_+/3=0,\ \ \ j=1,2.
\ee
For $\cL$ this yields
\be\label{Lm}
\frac{c_+(v_1-v_2\pm\mu)}{s_+(v_3-v_1)s_+(v_3-v_2)}\prod_{k=1,2}\prod_{\stackrel{l,m=1,2,3}{(l,m)\ne(k,k)}}\frac{1}{s_+(v_k+w_l+z_m+ia_+/3)},
\ee
whereas for $\cR$ we get
\be\label{Rm}
\frac{c_+(w_1-w_2\pm\mu)}{s_+(w_3-w_1)s_+(w_3-w_2)}\prod_{l=1,2}\prod_{\stackrel{k,m=1,2,3}{(k,m)\ne(l,l)}}\frac{1}{s_+(v_k+w_l+z_m+ia_+/3)}.
\ee
Next, multiplying~\eqref{Lm}--\eqref{Rm} by $s_+(v_3+w_1+z_2+ia_+/3)$ and then setting $v_3+w_1+z_2+ia_+/3=0$, the first product vanishes, whereas the second one is nonzero. Thus the assertion follows..
\end{proof}

Both with and without the $A_2$ sum constraint, the operators~\eqref{A2hsum}  are a commuting family (i.~e., they satisfy~\eqref{Acom}). Indeed, this is easily checked by taking the limit $r\to 0$ in the elliptic objects.
Also, assuming from now on
\be
x_3<x_2<x_1,
\ee
the weight function $W(x)$~\eqref{W} (with $G$ the hyperbolic gamma function) becomes the positive function
\be\label{Wh}
W(x)=\prod_{1\le j<k\le 3}4s_+(x_j-x_k)s_-(x_j-x_k),
\ee
and letting $\mu\in\R$ or $\mu\in i\R $,  the A$\De$Os are formally positive with respect to the measure $W(x)dx$. Indeed, the similarity transformation~\eqref{Hde} yields the operators
 \begin{multline}\label{H2hsum} 
H_{2,\de}(\mu;x)= c_{\de}(x_2-x_3\pm\mu)V_{1,\de}(x)^{1/2}\exp\Big(\frac13 ia_{-\de}\big(2\partial_{x_1}-\partial_{x_2}-\partial_{x_3}\big)\Big)V_{1,\de}(x)^{1/2}
\\
+ c_{\de}(x_3-x_1\pm\mu)V_{2,\de}(x)^{1/2}\exp\Big(\frac13 ia_{-\de}\big(2\partial_{x_2}-\partial_{x_3}-\partial_{x_1}\big)\Big)V_{2,\de}(x)^{1/2}
\\
+ c_{\de}(x_1-x_2\pm\mu)V_{3,\de}(x)^{1/2}\exp\Big(\frac13 ia_{-\de}\big(2\partial_{x_3}-\partial_{x_1}-\partial_{x_2}\big)\Big) V_{3,\de}(x)^{1/2} ,
\end{multline}   
where the coefficients $V_{j,\de}(x)$ are again defined by~\eqref{V1}--\eqref{V3}, now with $s_{\de}(x)$ given by~\eqref{hnot}. This explicit representation can be checked in the same way as in the elliptic case, with \eqref{sRy} and~\eqref{ssy} replaced by
\be\label{sRyh}
\frac{\sqrt{s_{\de}(y)}}{c_{\de}(sy-ia_{\de}/2)}= \frac{si}{\sqrt{s_{\de}(y)}},\ \ \ \ y>0, \ \  \ \ s=+,-,
\ee
and
\be\label{ssyh}
\exp(sia_{-\de}d/dy)\frac{1}{\sqrt{s_{-\de}(y)}}=-si \frac{1}{\sqrt{s_{-\de}(y)}}\exp(sia_{-\de}d/dy), \ \ y>0, \ \ \ \ \ s=+,-.
\ee  
    
Defining next
\be
G_2:=\{ x\in\R^3 \mid x_1+x_2+x_3=0,\ \ x_3<x_2<x_1\},
\ee
we conjecture that the hyperbolic version of $I(z)$~\eqref{Iz}  does not define a Hilbert-Schmidt integral operator on $L^2(G_2,dw_r)$. (Equivalently, the hyperbolic version $\cK_2(v,w,z)$ of the kernel~\eqref{cK2} with $z$ fixed is most likely not square-integrable over~$G_2\times G_2$.) However, $I(z)$ may well be a bounded integral operator.

In any event, a natural counterpart of the above elliptic conjecture is the following hyperbolic conjecture.

\begin{conjecture}
Letting
\be
 \hat{G}_2\equiv \{k\in\R^3\mid k_1+k_2+k_3=0\},
 \ee
 there exists a convergent expansion
\be\label{expansh}
\cS_2(v,w,z)=\int_{\hat{G}_2}c_k J_k(v)J_k(w)J_k(z)dk_r,\ \ c_k>0,\ \ \  v,w,z\in G_2, 
\ee
where $J_k(x)$ (with $k\in \hat{G}_2$ fixed) is a function on~$G_2$ with a meromorphic continuation to~$\cM_2$~\eqref{cM2} 
 and with the following additional features:
  \be\label{Jkcon}
 J_k(-x)=\overline{J_k(\overline{x})},\ \ \ x\in\cM_2,
 \ee
 \be
A_{2,\de}(\mu;x)J_k(x)=\lambda_{\de,k}(\mu)J_k(x),\ \ \de=+,-,\ \ x\in\cM_2,
\ee
\be
\lambda_{\de,k}(\mu)>0,\ \ \  \ \mu\in \R \cup i\R .
\ee
Moreover, the functions $J_k(x)$ yield the kernel of an isometric integral transformation from $L^2(G_2,W(x)dx_r)$ onto $L^2(\hat{G}_2,dk_r)$. 
\end{conjecture}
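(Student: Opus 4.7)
The plan is to construct a spectral theory for the commutative family $\{A_{2,\de}(\mu;x)\mid \de=\pm,\ \mu\in\R\cup i\R\}$ on $\cH:=L^2(G_2,W(x)dx_r)$, with $W$ the hyperbolic weight \eqref{Wh}, using the hyperbolic kernel identity (inherited from Corollary~2.3 by the $r\to 0$ limit) as the source of additional commuting integral operators. Throughout I would work with the formally positive Hamiltonians $H_{2,\de}(\mu;x)$ of \eqref{H2hsum}, whose formal self-adjointness on $\cH$ via $W^{1/2}$-conjugation carries over verbatim from the elliptic case once \eqref{sRy}--\eqref{ssy} are replaced by \eqref{sRyh}--\eqref{ssyh}.

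For each $z\in G_2$ I would introduce the tentative integral operator $(I(z)f)(v):=\int_{G_2}\cK_2(v,-w,z)f(w)\,dw_r$, with $\cK_2$ the hyperbolic analogue of \eqref{cK2}. The hyperbolic kernel identity gives the formal intertwining $H_{2,\de}(\mu;v)I(z)=I(z)H_{2,\de}(\mu;v)$, and the $S_3$-symmetry of $\cS_2$ in $(v,w,z)$ then yields pairwise commutation of the $I(z)$ themselves. The first technical task is to upgrade this to a statement about bounded operators: the asymptotic decay of the hyperbolic gamma function and the denominator factors in $W^{1/2}$ must be balanced to show that $I(z)$ is a bounded (though, in contrast to the elliptic case, not Hilbert--Schmidt) operator commuting in the strong resolvent sense with a chosen self-adjoint realization of $H_{2,\de}(\mu)$.

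Granting this, $I(z)$ and $H_{2,\de}(\mu)$ generate a commutative $C^{*}$-algebra $\cA\subset B(\cH)$, and the spectral theorem produces a unitary $U:\cH\to\int^{\oplus}_{\hat G_2}\cH_k\,dk_r$ diagonalizing $\cA$. Guided by the 1-variable relativistic models of \cite{Ru13,Ru15}, one expects the joint spectrum to be simple, so that each fibre is one-dimensional and $U$ is represented by a kernel $(Uf)(k)=\int_{G_2}\overline{J_k(v)}f(v)W(v)dv_r$; substituting the resolution of the identity $U^{*}U=\mathrm{id}$ into the reproducing relation obtained from $\cS_2$ then formally yields the triple expansion \eqref{expansh}, with the Plancherel density $c_k>0$ emerging as the spectral weight. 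Positivity of the eigenvalues $\lambda_{\de,k}(\mu)$ for $\mu\in\R\cup i\R$ is inherited from the formal positivity of $H_{2,\de}(\mu;x)$, while the conjugation property \eqref{Jkcon} follows from the symmetries $A_{2,\de}(\mu;-x)=\overline{A_{2,\de}(\bar\mu;\bar x)}$ and $\cS_2(-v,-w,-z)=\overline{\cS_2(\bar v,\bar w,\bar z)}$, both immediate from the reflection equation for the hyperbolic gamma function. Meromorphic extension of each $J_k$ from $G_2$ to $\cM_2$, together with the A$\De$E on all of $\cM_2$, should follow from an explicit contour-integral representation of $J_k$ analogous to the one employed in \cite{Ru13}, built from $\cS_2$ by a prescribed deformation of the $w$-contour parametrized by $\hat G_2$.

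The principal obstacle is the completeness and surjectivity of $U$: identifying the spectral parameter space with $\hat G_2$ exactly, excluding any residual discrete spectrum, and pinning down the Plancherel density $c_k$. Even in the 1-variable case these points require sharp asymptotic analysis (stationary phase after contour shifts, together with a Harish-Chandra-type $c$-function), and in the $A_2$ setting the same issues recur, compounded by the $S_3$ Weyl-group combinatorics and by the loss of compactness of $I(z)$. The direct 1-variable inputs from \cite{Ru13,Ru15} are unlikely to suffice on their own; a genuinely new multi-variable asymptotic analysis of either $\cS_2$ or the $J_k$, along each of the Weyl chambers of $\hat G_2$, would be the decisive missing ingredient.
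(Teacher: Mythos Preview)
The statement you are addressing is Conjecture~4.5, and the paper offers \emph{no proof} of it; it is presented as an open problem summarizing the expected Hilbert space scenario for the hyperbolic $A_2$ regime. There is therefore no proof in the paper against which your proposal can be compared.

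Your outline is a reasonable heuristic programme, and indeed it is in the spirit of the remarks the paper makes around the conjecture (formal positivity of $H_{2,\de}$, the role of $I(z)$, the 1-variable analogy with~\cite{Ru13,HaRu18}). But you correctly identify that it is not a proof: the boundedness of $I(z)$, the existence of a self-adjoint realization of $H_{2,\de}(\mu)$ with the desired common domain, simplicity of the joint spectrum, identification of the spectral parameter space with $\hat G_2$, and the determination of the Plancherel density $c_k$ are all genuinely open. Several of these steps (notably spectral simplicity and the multi-variable asymptotics needed for completeness and surjectivity) are not established anywhere in the literature for this operator family, and your final paragraph rightly flags them as the decisive missing ingredients. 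In short, what you have written is a coherent research plan, not a proof; the paper itself does not claim more.
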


An explicit 1-variable example illustrating this hyperbolic conjecture can be found in Section~2 of~\cite{HaRu18}, see in particular Eq.~(92) on p.~213.  

We proceed to point out a remarkable connection to an A$\De$O associated with the dual relativistic nonperiodic (open)  Toda 3-particle system. To this end we write
\be
2A_{2,\de}(\mu;x)=c_{\de}(2\mu)B_{\de}(x)+C_{\de}(x),
\ee
where
\be
D_{\de}(x):=\sum_{m=1}^3 D_{\de}^{(m)}(x),\ \ \ D=B,C,
\ee    
with
\be
B_{\de}^{(1)}(x):= \frac{ 1}{s_{\de}(x_1-x_2)s_{\de}(x_1-x_3)}\exp\Big(\frac13 ia_{-\de}\big(2\partial_{x_1}-\partial_{x_2}-\partial_{x_3}\big)\Big),
\ee
\be
C_{\de}^{(1)}(x):= \frac{c_{\de}(2x_2-2x_3)}{s_{\de}(x_1-x_2)s_{\de}(x_1-x_3)}\exp\Big(\frac13 ia_{-\de}\big(2\partial_{x_1}-\partial_{x_2}-\partial_{x_3}\big)\Big),
\ee
and $D_{\de}^{(m)}(x)$, $m=2,3$, obtained from $D_{\de}^{(1)}(x)$ by cyclic permutations. The relation now reads
\be\label{BATr}
B_{\de}(x)=-4\hat{A}_{-1,\de}(x)\exp\Big(\frac13 ia_{-\de}\big(-\partial_{x_1}-\partial_{x_2}-\partial_{x_3}\big)\Big),
\ee
where the A$\De$O  $\hat{A}_{-1,\de}(x)$ is given by Eq.~(2.132) in~\cite{HaRu12} (with $N=3$). The latter satisfies two kernel identities involving the dual relativistic Toda kernel functions
\be
\cT_{\tau}^{\mathrm{rel}}(x,y):=\prod_{j,k=1}^3 G(\tau(x_j-y_k)),\ \ \ \ \tau=+,-,
\ee
where $x,y\in\C^3$, cf.~Theorem~2.10 in~\cite{HaRu12}. As a corollary of this theorem, it follows that $B_{\de}(x)$ satisfies two kernel identities
\be\label{Bkid}
B_{\de}(x)\cT_{\tau}^{\mathrm{rel}}(x,-y)=B_{\de}(y)\cT_{\tau}^{\mathrm{rel}}(x,-y),\ \ \ \tau=+,-,
\ee
for unconstrained variables, in addition to the kernel identities
\be
B_{\de}(x)\cS_{2}(x,y,z)=B_{\de}(y)\cS_{2}(x,y,z)=B_{\de}(z)\cS_{2}(x,y,z),
\ee
which only hold for $x,y,z$ obeying the $A_2$ sum constraint. We also note that the weight function $W(x)$~\eqref{Wh} is equal to the dual relativistic Toda weight function given by Eq.~(2.148) in~\cite{HaRu12}. However, it is not hard to see that $C_{\de}(x)$ does not satisfy the Toda kernel identities~\eqref{Bkid}.

From this state of affairs it can be expected that the Hilbert space reinterpretation of the A$\De$O $B_{\de}(x)$ associated with the above open conjecture will be quite different from the one arising from the dual relativistic nonperiodic  Toda 3-particle system. (To be sure, the problem of associating commuting self-adjoint Hilbert space operators to the commuting Toda A$\De$Os $\hat{A}_{-1,\pm}(x)$  is to date still open, too.)

We continue by briefly studying the trigonometric regime, which arises upon taking
\be\label{subt}
a_+\to -i\pi/r,\ \ \ a_-\to\alpha.
\ee
For the hyperbolic functions $\cL$ and $\cR$ given by~\eqref{cLh}--\eqref{cRh}, this yields trigonometric functions
\be\label{cLt} 
\lambda\equiv \frac{\cos r(v_2-v_3\pm\mu)}{\sin r(v_1-v_2)\sin r(v_1-v_3)}
 \prod_{k\ne 1,l,m}\frac{1}{\sin r(v_k+w_l+z_m+\pi/3r)}+\mathrm{cyclic},
\ee
and
\be\label{cRt}
\rho\equiv \frac{\cos r(w_2-w_3\pm\mu)}{\sin r(w_1-w_2)s_+(w_1-w_3)}
 \prod_{k,l\ne 1,m}\frac{1}{\sin r(v_k+w_l+z_m+\pi/3r)}+\mathrm{cyclic}.
\ee
As follows from our previous results, these functions are equal in case $v,w,z$ satisfy the $A_2$ sum constraint. Can this equality be reinterpreted as a trigonometric kernel identity? 

To try and answer this, we note that  the substitutions~\eqref{subt} in~\eqref{A2hsum} with $\de=+$ yield the A$\De$O
\be\label{Atraux}
 \frac{\cos r(x_2-x_3\pm\mu)}{\sin r(x_1-x_2)\sin r(x_1-x_3)}\exp\Big(\frac13 i\alpha\big(2\partial_{x_1}-\partial_{x_2}-\partial_{x_3}\big)\Big)+\mathrm{cyclic}.
\ee
Also, setting 
\be
\cS_t\equiv \prod_{k,l,m=1}^3G_t(v_k+w_l+z_m-\de_t),\ \ \ \de_t\equiv i\alpha/6-\pi/3r,
\ee
we obtain
\begin{multline}
\frac{\exp\Big(\frac13 i\alpha\big(2\partial_{v_1}-\partial_{v_2}-\partial_{v_3}\big)\Big)\cS_t}{\prod_{k,l,m=1}^3G_t(v_k-i\alpha/3+w_l+z_m-\de_t)}=\prod_{l,m=1}^3\frac{G_t(v_1+2i\alpha/3+w_l+z_m-\de_t)}{G_t(v_1-i\alpha/3+w_l+z_m-\de_t)}
\\
=\prod_{l,m=1}^3[1-\exp(2ir(v_1+w_l+z_m+\pi/3r))],
\end{multline}
where we used the $G_t$-A$\De$E~\eqref{Gtade}. Thanks to the sum constraint, this can be rewritten as
\be
 C\exp(9irv_1)\prod_{l,m} \sin r(v_1+w_l+z_m+\pi/3r),
\ee
with $C$ an irrelevant constant. From this we deduce that if the factor $\exp(9irv_1)$ were absent, we would arrive at kernel identities of the previous type for the A$\De$O~\eqref{Atraux} and kernel function~$\cS_t$.

The only way we see to take this extra factor into account is to introduce the trigonometric A$\De$O
\be\label{Atr}
A_t(\mu;x)\equiv  \frac{\exp(-9irx_1)\cos r(x_2-x_3\pm\mu)}{\sin r(x_1-x_2)\sin r(x_1-x_3)}\exp\Big(\frac13 i\alpha\big(2\partial_{x_1}-\partial_{x_2}-\partial_{x_3}\big)\Big)+\mathrm{cyclic}.
\ee
Indeed, for this choice we do obtain the kernel identities
\be
A_t(\mu;v)\cS_t(v,w,z)=A_t(\mu;w)\cS_t(v,w,z)=A_t(\mu;z)\cS_t(v,w,z), \ \ \ \sum_{j=1}^3 x_j=0,\ \ x=v,w,z.
\ee
However, we are not aware of a weight function $W_t(x)$ so that (with $\mu$ real or purely imaginary) the A$\De$O $A_t(\mu;x)$ becomes formally s.\,a.~with respect to the measure $W_t(x)dx$.

There seems to be no non-trivial rational version of the kernel identities. To be sure, there is a natural rational counterpart of the hyperbolic A$\De$O $A_{2,+}(\mu;x)$ (say) and the trigonometric A$\De$O $A_t(\mu;x)$. Indeed, letting $\alpha:=a_-$ and taking $a_+\to\infty$ in the first case or $r\to 0$ in the second one, obvious renormalizations yield the limit
\begin{multline}\label{A2rsum} 
A_r(x)\equiv \frac{1}{(x_1-x_2)(x_1-x_3)}\exp\Big(\frac13 i\alpha \big(2\partial_{x_1}-\partial_{x_2}-\partial_{x_3}\big)\Big)
\\
+  \frac{1}{(x_2-x_3)(x_2-x_1)}\exp\Big(\frac13 i\alpha \big(2\partial_{x_2}-\partial_{x_3}-\partial_{x_1}\big)\Big)
\\
+ \frac{1}{(x_3-x_1)(x_3-x_2)}\exp\Big(\frac13 i\alpha \big(2\partial_{x_3}-\partial_{x_1}-\partial_{x_2}\big)\Big).
\end{multline}

At face value, the limit of the equality of the (restricted) hyperbolic functions $\cL$ and $\cR$ (cf.~\eqref{cL}--\eqref{cR}) and their trigonometric counterparts $\lambda$ and $\rho$ (cf.~\eqref{cLt}--\eqref{cRt}) seems strange, as it says that a function of $v$ becomes equal to a function of $w$. The explanation is that both functions vanish. Indeed, we have an identity
\be\label{ratid}
 \frac{1}{(x_1-x_2)(x_1-x_3)}+ \frac{1}{(x_2-x_3)(x_2-x_1)} + \frac{1}{(x_3-x_1)(x_3-x_2)}=0,
\ee
which clearly holds for all $x\in\C^3$. Put differently, the constant function is an eigenfunction of $A_r(x)$ with eigenvalue zero.

In fact, the rational A$\De$O is related to an A$\De$O associated with the 
dual nonrelativistic nonperiodic  Toda 3-particle system. Specifically, we have
\be\label{AATn}
A_r(x)=-\alpha^{-2}\hat{A}_{-1,\mathrm{nr}}(x)\exp\Big(\frac13 i\alpha \big(-\partial_{x_1}-\partial_{x_2}-\partial_{x_3}\big)\Big),
\ee
where the A$\De$O  $\hat{A}_{-1,\mathrm{nr}}(x)$ is given by Eq.~(4.112) in~\cite{HaRu12} (with $N=3$ and $\hbar\mu=\alpha$). From Theorem~4.11 in~\cite{HaRu12} it then readily follows that $A_r(x)$ satisfies two kernel identities
\be
A_r(x)\cT_{\sigma}^{\mathrm{nr}}(x,-y)=A_r(y)\cT_{\sigma}^{\mathrm{nr}}
(x,-y),\ \ \ \sigma=1,-1,
\ee
for unconstrained variables, with dual nonrelativistic kernel functions given by
\be
\cT_{\sigma}^{\mathrm{nr}}(x,y):=\prod_{j,k=1}^3 \Gamma((x_j-y_k)/\alpha)^{\sigma},\ \ \ \ \sigma=\pm 1.
\ee


\subsection{The $A_3$ case}

Proceeding along the same lines as in Subsection~4.1, we first note that we have
\be
[A_{3,+}(x),A_{3,-}(x)]=0,
\ee
where
 the A$\De$Os $A_{3,\pm}(x)$ are given by \eqref{A3sum}--\eqref{A31}. (Using \eqref{Rade} this commutation relation is easily verified.) 
 
 Next, letting $x\in\R^4$, it is clear from~\eqref{defR} that the numerator  factors $R_{\de}(x_j-x_k)$ in the coefficients take values in $(0,\infty)$. 
The shifts are formally   s.\,a.~(self-adjoint)  with respect to Lebesgue measure on $\R^4$. They commute with the numerator functions, but they do not commute with the (complex-valued)  denominator functions. Hence the A$\De$Os are not  formally s.\,a. with respect to Lebesgue measure. 

This can be remedied in the same way as in the $A_2$ case (cf.~also~\cite{Ra18}): The counterparts of \eqref{W}--\eqref{Hde} are the formulas
 \be\label{W3}
W(x):=\prod_{1\le j<k\le 4}G(\pm (x_j-x_k)+ia),
\ee
 \be
W(x)=(p_+p_-)^6\prod_{1\le j<k\le 4}s_+(x_j-x_k)s_-(x_j-x_k),
\ee
\be\label{xord3}
-\pi/2r\le x_4<x_3<x_2<x_1<\pi/2r,
\ee
 \be\label{Hde3}
H_{3,\de}(x)    \equiv  W(x)^{1/2}A_{3,\de}(x)W(x)^{-1/2} ,\ \ \ \de=+,-,
\ee
and then we have the following result.

\begin{proposition}
With the ordering~\eqref{xord3} in force, the A$\De$Os $H_{3,\de}(x)$ are formally positive with respect to Lebesgue measure.
\end{proposition}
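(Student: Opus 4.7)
The plan mirrors the proof of Proposition~4.2: I would establish an explicit ``sandwich'' representation of $H_{3,\delta}(x)$ that makes formal positivity manifest. Concretely, my target is the formula
\begin{multline*}
H_{3,\delta}(x) = \frac{\exp(-3ra_{-\delta}/2)}{p_{\delta}^{3}} \sum_{m=1}^{4} R_{\delta}^{\mathrm{num}(m)}(x)\, V_{m,\delta}(x)^{1/2} \exp(D_{m})\, V_{m,\delta}(x)^{1/2},
\end{multline*}
where $R_{\delta}^{\mathrm{num}(m)}(x)$ is the triple $R_{\delta}$-product in the numerator of $A_{3,\delta}^{(m)}(x)$, $D_{m}$ is the corresponding shift, and the positive coefficient $V_{m,\delta}(x)$ is built from the three positive differences $|x_{m}-x_{j}|$, $j\ne m$; for $m=1$, explicitly,
\begin{equation*}
V_{1,\delta}(x) := \frac{1}{s_{\delta}(x_{1}-x_{2})\,s_{\delta}(x_{1}-x_{3})\,s_{\delta}(x_{1}-x_{4})},
\end{equation*}
and the remaining three by cyclic relabeling that always picks positive differences. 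Under the ordering~\eqref{xord3} every such difference lies in $(0,\pi/r)$, so each $V_{m,\delta}(x)$ is strictly positive and admits a positive square root; the numerator factors $R_{\delta}(x_{j}-x_{k})$ are non-negative, and the shifts sandwiched between real positive functions are formally positive with respect to Lebesgue measure. Once the displayed representation is established, formal positivity of $H_{3,\delta}(x)$ is immediate.

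To verify the representation I would start from $H_{3,\delta}^{(1)}(x) = W(x)^{1/2} A_{3,\delta}^{(1)}(x) W(x)^{-1/2}$. The shift in $A_{3,\delta}^{(1)}$ moves each pair $x_{1}-x_{j}$ ($j>1$) by exactly $ia_{-\delta}$ and leaves every pair $x_{j}-x_{k}$ with $j,k\in\{2,3,4\}$ unchanged; consequently only the six weight factors $\sqrt{s_{\pm}(x_{1}-x_{j})}$, $j=2,3,4$, interact non-trivially with the conjugation. Three applications of~\eqref{sRy} convert the three denominator factors $R_{\delta}(x_{1}-x_{j}-ia_{\delta}/2)$ into $1/s_{\delta}(x_{1}-x_{j})$ each times $(i/p_{\delta})\exp(-ir(x_{1}-x_{j}))$; three applications of~\eqref{ssy} pull the three factors $1/\sqrt{s_{-\delta}(x_{1}-x_{j})}$ in $W^{1/2}$ through the shift, each producing a multiplier $-i\exp(-ra_{-\delta}/2)\exp(ir(x_{1}-x_{j}))$. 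The position-dependent phases cancel in pairs, the sign factor $(i)^{3}(-i)^{3}=1$, and one is left with the real positive constant $\exp(-3ra_{-\delta}/2)/p_{\delta}^{3}$; the remaining $\sqrt{s_{\delta}(x_{1}-x_{j})}$-factors reassemble into $V_{1,\delta}(x)^{1/2}$ on either side of the shift. The other three summands are handled identically modulo cyclic relabeling of indices.

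The main obstacle is precisely the sign- and phase-bookkeeping that the paper flagged as ``crucial and tricky'' already in the $A_{2}$ case: each summand now requires three applications of each of~\eqref{sRy} and~\eqref{ssy}, so the number of $\pm i$ factors and of position-dependent exponentials is larger, and one has to verify that they combine into a single real positive constant independent of~$x$. The calculation is nevertheless mechanical once the three pairs $x_{1}-x_{j}$ are organized in parallel, because the shift on each $x_{1}-x_{j}$ in $A_{3,\delta}^{(1)}$ equals $ia_{-\delta}$, exactly as in the $A_{2}$ case, so the per-pair combinatorics of~\eqref{sRy} and~\eqref{ssy} is identical to what was carried out in Proposition~4.2.
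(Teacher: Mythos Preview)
Your proposal is correct and follows essentially the same route as the paper: establish the explicit sandwich representation with the prefactor $\exp(-3ra_{-\delta}/2)/p_{\delta}^{3}$ and the positive coefficients $V_{m,\delta}$, then read off formal positivity. Your per-pair bookkeeping of the $\pm i$ factors and position-dependent exponentials via~\eqref{sRy} and~\eqref{ssy} is exactly what the paper invokes (it merely states that the formula follows from those two relations and displays the result), and your observation that each of the three pairs $x_{1}-x_{j}$ is shifted by $ia_{-\delta}$, so that the $A_{2}$ mechanism replicates threefold, is the right organizing principle.
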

\begin{proof} 
Using the relations~\eqref{sRy} and~\eqref{ssy}, we can explicitly determine the A$\De$Os, yielding
\begin{multline}\label{H3de}
H_{3,\de}(x)  =   \frac{\exp(-3ra_{-\de}/2)}{p_{\de}^3}\Big[ R_{\de}(x_2-x_3)R_{\de}(x_3-x_4)R_{\de}(x_4-x_2)
\\
 \times V_{1,\de}(x)^{1/2}\exp\Big(\frac14 ia_{-\de}\big(3\partial_{x_1}-\partial_{x_2}-\partial_{x_3}-\partial_{x_4}\big)\Big)V_{1,\de}(x)^{1/2}
  \\
 +R_{\de}(x_3-x_4)R_{\de}(x_4-x_1)R_{\de}(x_1-x_3)
 V_{2,\de}(x)^{1/2}\exp\Big(\frac14 ia_{-\de}\big(3\partial_{x_2}-\partial_{x_3}-\partial_{x_4}-\partial_{x_1}\big)\Big)V_{2,\de}(x)^{1/2}
 \\
 +R_{\de}(x_4-x_1)R_{\de}(x_1-x_2)R_{\de}(x_2-x_4)
 V_{3,\de}(x)^{1/2}\exp\Big(\frac14 ia_{-\de}\big(3\partial_{x_3}-\partial_{x_4}-\partial_{x_1}-\partial_{x_2}\big)\Big)V_{3,\de}(x)^{1/2}
 \\
 +R_{\de}(x_1-x_2)R_{\de}(x_2-x_3)R_{\de}(x_3-x_1)
 V_{4,\de}(x)^{1/2}\exp\Big(\frac14 ia_{-\de}\big(3\partial_{x_4}-\partial_{x_1}-\partial_{x_2}-\partial_{x_3}\big)\Big)V_{4,\de}(x)^{1/2}\Big], 
\end{multline} 
where
\be\label{V31}
V_{1,\de}(x):=1/s_{\de}(x_1-x_2)s_{\de}(x_1-x_3)s_{\de}(x_1-x_4),
\ee
\be\label{V32}
V_{2,\de}(x):=1/s_{\de}(x_1-x_2)s_{\de}(x_2-x_3)s_{\de}(x_2-x_4),
\ee
\be\label{V33}
V_{3,\de}(x):=1/s_{\de}(x_1-x_3)s_{\de}(x_2-x_3)s_{\de}(x_3-x_4),
\ee
\be\label{V34}
V_{4,\de}(x):=1/s_{\de}(x_1-x_4)s_{\de}(x_2-x_4)s_{\de}(x_3-x_4),
\ee
and positive square roots are understood. From this positivity can be read off. 
\end{proof}

Imposing the $A_3$ sum constraint from now on, we expect that the Hamiltonians \eqref{Hde3}
satisfy the kernel identities
\be\label{kid3n}
H_{3,\de}(v)\cK_3(d;v,w)=H_{3,\de}(w)\cK_3(d;v,w),\ \ \ \de=+,-,
\ee
with
\be\label{cK3}
\cK_3(d;v,w):=(W(v)W(w))^{1/2}\cS_3(d;v,w).
\ee
(Here the vectors $v,w$ are assumed to satisfy~\eqref{xord3}, and positive square roots are understood.) Indeed, the validity of these identities is equivalent to the (expected) validity of the elliptic kernel identities~\eqref{kid3}.

With the ordering~\eqref{xord3} and the $A_3$ sum constraint in effect,  we get a set $G_3$. Fixing~$d$ in the strip $\{|\im d|<a/2\}$,  \eqref{cK3} yields a function that can be viewed as the kernel of a Hilbert-Schmidt integral operator on
$L^2(G_3,dw_r)$ defined by (recall $a:=(a_++a_-)/2$)
\be\label{Id}
(I(d)f)(v):= \int_{G_3}\cK_3(d;v,-w)f(w)dw_r,\ \ \ |\im d|<a/2.
\ee
We are now prepared for the following conjecture.

\begin{conjecture}
Letting $|\im d|<a/2$, there exists a convergent expansion
\be\label{expans3}
\cS_3(d;v,w)=\sum_{n=0}^{\infty} c_n(d) J_n(v)J_n(w) ,\ \ c_n(d)\in \C^*,\ \ \ v,w\in G_3,
\ee
with $J_n(x)$, $n\in\N$, functions on $G_3$  with a meromorphic extension to 
\be\label{cM3}
\cM_3:=\{ x\in\C^4 \mid x_1+x_2+x_3+x_4=0\},
\ee
 and with the following additional features:
 \be\label{Jncon3}
 J_n(-x)=\overline{J_n(\overline{x})},\ \ \ x\in\cM_3,
 \ee
\be
\sum_{n=0}^{\infty}|c_n(d)|^2 <\infty,
\ee
\be
A_{3,\de}(x)J_n(x)=\lambda_{\de,n}J_n(x),\ \ \de=+,-,\ \ \ n\in\N,\ \ x\in\cM_3,
\ee 
\be
\lambda_{\de,n}>0.
\ee
Moreover, the set of vectors $J_n(x)$, $n\in\N$, yields an orthonormal base for $L^2(G_3,W(x)dx_r)$. 
\end{conjecture}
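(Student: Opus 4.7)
The plan is to adapt the strategy for Conjecture~4.3 sketched by the author (the 1-variable analog being Theorem~2.2 of~\cite{Ru13}) to the present $A_3$ setting, granting the still-conjectural elliptic kernel identities of Subsection~3.1.

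First, I would verify that $\cK_3(d;v,w)$ lies in $L^2(G_3\times G_3, dv_r\otimes dw_r)$ for $|\im d|<a/2$: the compactness of $G_3$, together with the cancellation of the boundary singularities of $\cS_3$ by the zeros of $(W(v)W(w))^{1/2}$ furnished by the reflection and difference equations for the elliptic gamma function, should yield continuity on $\overline{G_3\times G_3}$, hence square-integrability. Since $\cK_3(d;v,w)$ is invariant under $(v,w)\mapsto (w,v)$, the associated Hilbert-Schmidt operator is complex-symmetric, and the Autonne--Takagi factorisation gives an orthonormal sequence $\{\Phi_n^{(d)}\}\subset L^2(G_3,dv_r)$ with nonnegative singular values $\sigma_n(d)$ such that
$$\cK_3(d;v,w)=\sum_{n=0}^{\infty}\sigma_n(d)\,\Phi_n^{(d)}(v)\Phi_n^{(d)}(w).$$
Setting $J_n(x):=W(x)^{-1/2}\Phi_n^{(d)}(x)$ and $c_n(d):=\sigma_n(d)$ then produces the bilinear form~\eqref{expans3}.

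Second, granted~\eqref{kid3}, the similarity transformation~\eqref{Hde3} yields~\eqref{kid3n}. Pairing $H_{3,\de}(v)\cK_3(d;v,w)=H_{3,\de}(w)\cK_3(d;v,w)$ against smooth test functions on a common core and invoking formal self-adjointness of $H_{3,\de}$ (Proposition~4.6) gives commutation of $I(d)$ with $H_{3,\de}$, so that the $\Phi_n^{(d)}$ simultaneously diagonalise each $H_{3,\de}$. The A$\De$O structure then propagates $A_{3,\de}(x)J_n(x)=\lambda_{\de,n}J_n(x)$ from $G_3$ to a meromorphic equation on $\cM_3$. Formal positivity of $H_{3,\de}$ forces $\lambda_{\de,n}>0$; since these eigenvalues are $d$-independent, the joint-eigenspace structure allows us to take the $J_n$ independent of $d$, absorbing all $d$-dependence into $c_n(d)$. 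Property~\eqref{Jncon3} should then follow from the invariance of $\cS_3$ under $(v,w,d)\mapsto(-\overline v,-\overline w,\overline d)$ together with the reality of $W$.

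The principal obstacle will be completeness: showing every $c_n(d)$ is nonzero and that $\{J_n\}$ is an orthonormal \emph{basis} of $L^2(G_3,W(x)dx_r)$. The Hilbert--Schmidt decomposition only guarantees $\sigma_n(d)\to 0$ and tolerates a nontrivial common null space of $\{I(d)\mid |\im d|<a/2\}$. A conceivable strategy is to degenerate $d$ to a limiting regime (trigonometric or non-relativistic) where a basis of Macdonald-type polynomials is explicit, and propagate completeness back by analyticity in $d$; but in the elliptic $A_3$ regime no such degeneration argument is in the literature, so even modulo the kernel identities of Subsection~3.1 this final step remains a substantial open problem.
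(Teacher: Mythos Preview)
The statement you are attempting to prove is labelled a \emph{Conjecture} in the paper and is not proved there; the paper offers no argument beyond the heuristic remarks preceding it and the pointer to the 1-variable analog in~\cite{Ru13}. So there is no ``paper's own proof'' to compare against, and your proposal should be read as a strategy toward an open problem rather than an alternative proof.

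That said, your outline tracks the paper's informal discussion closely (Hilbert--Schmidt property of~$I(d)$, symmetry of the kernel, formal positivity via Proposition~4.6), and you correctly isolate completeness as the principal obstruction. A few additional gaps deserve mention. First, your argument presupposes the elliptic $A_3$ kernel identities~\eqref{kid3}, which the paper itself does \emph{not} prove (only the hyperbolic specialisation is established, in Theorem~3.3); so you are stacking one conjecture on another. Second, the passage from \emph{formal} self-adjointness and positivity of $H_{3,\de}$ to genuine commutation $[I(d),H_{3,\de}]=0$ on a common core is not innocent for analytic difference operators: domain questions for such operators are notoriously delicate, and ``pairing against test functions'' does not by itself yield a self-adjoint extension, let alone the needed operator identity. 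Third, formal positivity gives $\lambda_{\de,n}\ge 0$, not strict positivity; ruling out zero eigenvalues is a separate issue. Finally, the step where you make the $J_n$ independent of~$d$ tacitly assumes the joint $H_{3,\pm}$-eigenspaces are one-dimensional (or at least that the spectral decomposition is rigid under variation of~$d$), which is not established. None of this is a criticism of your judgement---you flag the hardest point yourself---but the proposal as written is a programme, not a proof, exactly as the paper's designation ``Conjecture'' anticipates.
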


Observe that it would follow from this conjecture that the Hilbert-Schmidt operators $I(d)$~\eqref{Id} form a commutative family. We also note that it may be more natural to trade $n\in\N$ for a multi-index
\be
k\in\Z^4,\ \ \ k_1+k_2+k_3+k_4=0.
\ee

For the hyperbolic case the weight function~\eqref{W3} becomes
 \be\label{W3h}
W(x)=\prod_{1\le j<k\le 4}4s_+(x_j-x_k)s_-(x_j-x_k),
\ee
the $s_{\de}$ now being given by~\eqref{hnot}. The hyperbolic counterpart of~\eqref{H3de} (again denoted $H_{3,\de}(x)$) equals
\begin{multline}\label{H31} 
c_{\de}(x_2-x_3)c_{\de}(x_3-x_4)c_{\de}(x_4-x_2)
 V_{1,\de}(x)^{1/2}\exp\Big(\frac14 ia_{-\de}\big(3\partial_{x_1}-\partial_{x_2}-\partial_{x_3}-\partial_{x_4}\big)\Big)V_{1,\de}(x)^{1/2}
  \\
 +c_{\de}(x_3-x_4)c_{\de}(x_4-x_1)c_{\de}(x_1-x_3)
 V_{2,\de}(x)^{1/2}\exp\Big(\frac14 ia_{-\de}\big(3\partial_{x_2}-\partial_{x_3}-\partial_{x_4}-\partial_{x_1}\big)\Big)V_{2,\de}(x)^{1/2}
 \\
 +c_{\de}(x_4-x_1)c_{\de}(x_1-x_2)c_{\de}(x_2-x_4)
 V_{3,\de}(x)^{1/2}\exp\Big(\frac14 ia_{-\de}\big(3\partial_{x_3}-\partial_{x_4}-\partial_{x_1}-\partial_{x_2}\big)\Big)V_{3,\de}(x)^{1/2}
 \\
 +c_{\de}(x_1-x_2)c_{\de}(x_2-x_3)c_{\de}(x_3-x_1)
 V_{4,\de}(x)^{1/2}\exp\Big(\frac14 ia_{-\de}\big(3\partial_{x_4}-\partial_{x_1}-\partial_{x_2}-\partial_{x_3}\big)\Big)V_{4,\de}(x)^{1/2}, 
\end{multline} 
with $V_{m,\de}(x)$, $m=1,2,3,4$, defined by~\eqref{V31}--\eqref{V34} and~\eqref{hnot}. Clearly, these Hamiltonians satisfy the kernel identities~\eqref{kid3n}--\eqref{cK3}.

Introducing
\be
G_3:=\{ x\in\R^4 \mid x_1+x_2+x_3+x_4=0,\ \ x_4<x_3<x_2<x_1\},
\ee
we are prepared for a hyperbolic analog of the above elliptic conjecture.

\begin{conjecture}
Letting
\be
 \hat{G}_3\equiv \{k\in\R^4\mid k_1+k_2+k_3+k_4=0\},
 \ee
 there exists a convergent expansion
\be\label{expansh3}
\cS_3(d;v,w)=\int_{\hat{G}_3}c_k(d) J_k(v)J_k(w)dk_r,\ \ c_k(d)\in\C^*,\ \ \  v,w\in G_3, \ \ |\im d|<a/2,
\ee
where $J_k(x)$ (with $k\in \hat{G}_3$ fixed) is a function on~$G_3$ with a meromorphic continuation to~$\cM_3$~\eqref{cM3} 
 and with the following additional features:
  \be\label{Jkcon3}
 J_k(-x)=\overline{J_k(\overline{x})},\ \ \ x\in\cM_3,
 \ee
 \be
A_{3,\de}(x)J_k(x)=\lambda_{\de,k}J_k(x),\ \ \de=+,-,\ \ x\in\cM_3,
\ee
\be
\lambda_{\de,k}>0.
\ee
Moreover, the functions $J_k(x)$ yield the kernel of an isometric integral transformation from $L^2(G_3,W(x)dx_r)$ onto $L^2(\hat{G}_3,dk_r)$. 
\end{conjecture}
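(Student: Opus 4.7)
The plan is to combine the hyperbolic kernel identities of Theorem~3.3 with a joint spectral analysis of the commuting modular pair $H_{3,\pm}$ from~\eqref{H31}, following the strategy worked out in the 1-variable setting of~\cite{HaRu18}. Passing from $\cS_3$ to the symmetrized kernel $\cK_3(d;v,w)$ of~\eqref{cK3}, define the integral operator
$$(I(d)f)(v) := \int_{G_3} \cK_3(d;v,-w) f(w)\, dw_r, \qquad |\im d| < a/2.$$
Theorem~3.3 together with~\eqref{Hde3} yields the formal identity $H_{3,\de}(v)\cK_3(d;v,w) = H_{3,\de}(w)\cK_3(d;v,w)$; after reflecting $w \mapsto -w$ and using formal self-adjointness of $H_{3,\de}$ (the hyperbolic analog of Prop.~4.6), this gives formal commutativity of $I(d)$ with $H_{3,\pm}$ on a common dense domain.

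Next, I would establish that $I(d)$ is a bounded operator on $L^2(G_3, dw_r)$ and that the family $\{I(d) \mid |\im d| < a/2\}$ is commutative. Boundedness requires estimating $|\cK_3(d;v,-w)|$ using the asymptotic properties of the hyperbolic gamma function summarized in Appendix~A; a Schur test with a judiciously chosen weight should suffice, though likely the Hilbert-Schmidt property fails (as in the $A_2$ hyperbolic case). Commutativity of $I(d)I(d')$ follows from the manifest $v \leftrightarrow w$ symmetry of the kernel upon Fubini interchange of integration order. The next step is to extend $H_{3,\pm}$ to genuine commuting self-adjoint operators on $L^2(G_3, W\,dx_r)$; since $H_{3,\pm}$ commutes with the bounded normal family $\{I(d)\}$, any joint spectral decomposition of $\{I(d)\}$ automatically yields a simultaneous diagonalization, giving candidate eigenfunctions $J_k$ labeled by points in the joint spectrum.

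The third ingredient is the identification of that joint spectrum with $\hat{G}_3$, together with the construction of the eigenfunctions $J_k(v)$ as scattering wave functions having plane-wave asymptotics $\exp(2\pi i \sum_j k_j v_j / a_+)$ times known scattering factors in the Weyl chamber $v \in G_3$ as the coordinate differences tend to $+\infty$. This would proceed via a Harish-Chandra-style series expansion seeded by the leading plane wave, with convergence controlled by contractivity of the difference shifts on the cone $\hat{G}_3$. The required symmetry $J_k(-x) = \overline{J_k(\bar x)}$ and the positivity of $\lambda_{\de,k}$ and $c_k(d)$ would follow by applying the kernel identity to the wave function itself, combined with positivity of $H_{3,\de}$ coming from the explicit representation~\eqref{H31}.

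The principal obstacle is the completeness/isometry statement: showing that the $J_k$ span $L^2(G_3, W\,dx_r)$ and that the resulting transform to $L^2(\hat{G}_3, dk_r)$ is unitary. In the 1-variable case treated in~\cite{HaRu18}, explicit hypergeometric identities permit inversion of the transform via a second copy of itself, but no closed-form multi-variable analog is known for this particular A$\De$O. One would likely need to establish a Plancherel-type identity either by a direct contour-integration argument in the spirit of Heckman--Opdam theory, or by iterating the kernel identity: the symmetry under $S_4 \times S_4$ together with a three-fold composition $\int \cS_3(d;v,u)\cS_3(d;u,w)\, du$ might be shown to reproduce a delta function on the diagonal via residue analysis, yielding the inverse transform. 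Making either strategy rigorous in the present setting appears to require analytic control over $J_k$ that is not currently available in the literature, which is why this statement is left as a conjecture.
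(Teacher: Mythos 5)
This statement is explicitly a \emph{conjecture} in the paper: the author proves the hyperbolic $A_3$ kernel identities (Theorem~3.3) and the formal positivity of $H_{3,\de}$ (Prop.~4.6, whose hyperbolic analog is given around~\eqref{H31}), and then \emph{motivates} the conjectured Plancherel-type expansion by analogy with the 1-variable case in~\cite{HaRu18}, but offers no proof. Your proposal correctly recognizes this state of affairs and honestly concludes that the program cannot at present be made rigorous; in that sense it is consistent with the paper's own positioning of the result. The overall strategy you sketch — kernel identity $\Rightarrow$ formal commutation of $I(d)$ with $H_{3,\pm}$ $\Rightarrow$ joint diagonalization $\Rightarrow$ scattering-wave eigenfunctions $\Rightarrow$ Plancherel — is the natural one and aligns with the paper's motivating discussion.

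That said, several intermediate steps in your outline are stated as if routine when they are not, and a couple are off-target. First, you claim commutativity of the family $\{I(d)\}$ follows from the $v\leftrightarrow w$ symmetry of the kernel together with Fubini; this is too quick, and the paper itself treats commutativity of the $I(d)$ as a \emph{consequence} of the conjectured expansion (see the remark after Conjecture~4.7), not as an independently establishable input. Second, you invoke ``the bounded normal family $\{I(d)\}$''; there is no reason for $I(d)$ to be normal. The kernel $\cK_3(d;v,-w)$ is not real-valued (cf.\ the analogous remark after Conjecture~4.3 for the $A_2$ case), and one would need to verify $I(d)^*I(d)=I(d)I(d)^*$ — which, like commutativity of $\{I(d)\}$, appears to require the expansion one is trying to prove. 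Third, passing from formal commutation on a dense domain to genuine commuting self-adjoint extensions of the unbounded $H_{3,\pm}$ is a well-known source of difficulty (domain issues, deficiency indices) and cannot be dismissed as ``automatic'' from the existence of a commuting bounded family. Finally, a small inaccuracy: you write that ``positivity of $\lambda_{\de,k}$ \emph{and} $c_k(d)$'' would follow, but the conjecture only asserts $c_k(d)\in\C^*$, not $c_k(d)>0$; since the kernel is complex and $d$-dependent, positivity of $c_k(d)$ should not be expected and is not claimed. The genuine open problem — constructing the $J_k$ with plane-wave asymptotics and establishing the isometry/completeness — you identify correctly, and that is indeed why the statement remains a conjecture.
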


Recalling the identity~\eqref{iden0} and the definition of the hyperbolic A$\De$Os $A_{3,\pm}(x)$ (encoded in~\eqref{A3sum} and~\eqref{A31h}), it is immediate that the constant function is a zero-eigenvalue eigenfunction. Clearly it is not in $L^2(G_3,W(x)dx_r)$, but that is not going to be true for the functions $J_k(x)$ either. However, it might still have a special relation to $\cS_3(d;v,w)$ and $W(x)$. Possibly, the integral 
\be
\int_{G_3} \cS_3(d;v,w)W(w)dw_r
\ee
converges in a suitable sense and does not depend on~$v$.

Turning to the trigonometric regime, the substitutions~\eqref{subt} in the functions~$\cL$~\eqref{cLh} and $\cR$~\eqref{cRh} yield functions
\be
\lambda \equiv \frac{\cos r(v_2-v_3)\cos r(v_3-v_4)\cos r(v_4-v_2)}{\prod_{j>1}\sin r(v_1-v_j) }
 \prod_{k\ne 1,l}\frac{1}{\sin r(v_k+w_l+\pi/4r\pm d)}+\mathrm{cyclic}.
\ee
and
\be
\rho\equiv \frac{\cos r(w_2-w_3)\cos r(w_3-w_4)\cos r(w_4-w_2)}{\prod_{j>1}\sin r(w_1-w_j)  } \prod_{k,l\ne 1}\frac{1}{\sin r(v_k+w_l+\pi/4r\pm d )}+\mathrm{cyclic}.
\ee
These functions become equal when the $A_3$ sum constraint is imposed. Again, we can ask whether this equality can be reinterpreted as a kernel identity.

To study this, we first note that the obvious trigonometric counterpart of the A$\De$O $A_{3,+}(x)$ is given by
\be\label{Atraux3}
\frac{\cos r(x_2-x_3)\cos r(x_3-x_4)\cos r(x_4-x_2)}{\prod_{j>1}\sin r(x_1-x_j) }
\exp\Big(\frac14 i\alpha \big(3\partial_{x_1}-\partial_{x_2}-\partial_{x_3}-\partial_{x_4}\big)\Big)+\mathrm{cyclic}.
\ee
Letting
\be
\cS_t\equiv \prod_{k,l=1}^4G_t(v_k+w_l-\de_t\pm d),\ \ \ \de_t\equiv i\alpha/4-\pi/4r,
\ee
we obtain using \eqref{Gtade}
\begin{multline}
\frac{\exp\Big(\frac14 i\alpha\big(3\partial_{v_1}-\partial_{v_2}-\partial_{v_3}-\partial_{v_4}\big)\Big)\cS_t}{\prod_{k,l=1}^4G_t(v_k-i\alpha/4+w_l-\de_t\pm d)}=\prod_{l=1}^4\frac{G_t(v_1+3i\alpha/4+w_l-\de_t\pm d)}{G_t(v_1-i\alpha/4+w_l-\de_t\pm d)}
\\
=\prod_{l=1}^4[1-\exp(2ir(v_1+w_l+\pi/4r\pm d))].
\end{multline}
 Using the $A_3$ sum constraint, this becomes
\be
 C\exp(8irv_1)\prod_{l=1}^4 \sin r(v_1+w_l+\pi/4r\pm d),
\ee
with $C$ a constant. Therefore, without the factor $\exp(8irv_1)$, we would  arrive at kernel identities of the previous type for the A$\De$O~\eqref{Atraux3} and kernel function~$\cS_t$.

We can take the additional factor into account via a trigonometric A$\De$O $A_t(x)$ defined by
\be\label{At3} 
\frac{e^{-8irx_1}\cos r(x_2-x_3)\cos r(x_3-x_4)\cos r(x_4-x_2)}{\prod_{j>1}\sin r(x_1-x_j) }
\exp\Big(\frac14 i\alpha \big(3\partial_{x_1}-\partial_{x_2}-\partial_{x_3}-\partial_{x_4}\big)\Big)+\mathrm{cyclic}.
\ee
Then we obtain kernel identities
\be\label{Atid}
A_t(v)\cS_t(d;v,w)=A_t(w)\cS_t(d;v,w),\ \ \ \ \sum_{j=1}^4 x_j=0,\ \ x=v,w.
\ee
However, just as in the $A_2$ case, we do not know a weight function $W_t(x)$ such that $A_t(x)$ can be viewed as a formally s.\,a.~operator with respect to the measure~$W_t(x)dx$.

We are also not aware of a non-trivial rational version of the $A_3$ kernel identities. The natural rational counterpart of the hyperbolic A$\De$O $A_{3,+}(x)$ and the trigonometric A$\De$O $A_t(x)$ is obtained by taking $\alpha:=a_-$ and then letting $a_+\to\infty$ in the first case or $r\to 0$ in the second one. With obvious renormalizations this yields 
\be\label{A3rsum} 
A_r(x)\equiv \frac{1}{(x_1-x_2)(x_1-x_3)(x_1-x_4)}\exp\Big(\frac14 i\alpha \big(3\partial_{x_1}-\partial_{x_2}-\partial_{x_3}-\partial_{x_4}\big)\Big)
+\mathrm{cyclic}.
\ee 
As before, the limit of the equality of the (restricted) hyperbolic functions $\cL$ and $\cR$  and their trigonometric counterparts $\lambda$ and $\rho$ can be understood from the counterpart of the identity~\eqref{ratid}, viz.,
\be
 \frac{1}{(x_1-x_2)(x_1-x_3)(x_1-x_4)}+\mathrm{cyclic}=0,\ \ x\in\C^4,
 \ee
whose validity is easily verified directly. Hence, the constant function is again an eigenfunction of $A_r(x)$ with eigenvalue zero.

The rational A$\De$O is related to an A$\De$O associated with the 
dual nonrelativistic nonperiodic  Toda 4-particle system in a similar way as in the $A_2$ case: We have 
\be\label{AATn3}
A_r(x)=i\alpha^{-3}\hat{A}_{-1,\mathrm{nr}}(x)\exp\Big(\frac14 i\alpha \big(-\partial_{x_1}-\partial_{x_2}-\partial_{x_3}-\partial_{x_4}\big)\Big),
\ee
with the A$\De$O  $\hat{A}_{-1,\mathrm{nr}}(x)$  given by Eq.~(4.112) in~\cite{HaRu12} (letting $N=4$ and $\hbar\mu=\alpha$). From Theorem~4.11 in~\cite{HaRu12} it now follows that $A_r(x)$ satisfies the kernel identities
\be
A_r(x)\cT_{\sigma}^{\mathrm{nr}}(x,-y)=A_r(y)\cT_{\sigma}^{\mathrm{nr}}
(x,-y),\ \ \ \sigma=1,-1,
\ee
for unconstrained variables, with dual nonrelativistic kernel functions given by
\be
\cT_{\sigma}^{\mathrm{nr}}(x,y):=\prod_{j,k=1}^4 \Gamma((x_j-y_k)/\alpha)^{\sigma},\ \ \ \ \sigma=\pm 1.
\ee

\newpage

\renewcommand{\thesection}{A}
\setcounter{equation}{0}
 
\addcontentsline{toc}{section}{Appendix A. Generalized gamma functions }

\section*{Appendix A. Generalized gamma functions }
								
We begin this appendix by collecting properties of the elliptic gamma function, cf.~Subsection~III~B in~\cite{Ru97}. It
can be defined by the product representation
\be\label{Gell}
	G(r,a_+,a_-;z) := \prod_{m,n=0}^\infty \frac{1-\exp \big(-(2m+1)ra_+-(2n+1)ra_--2irz\big)}{1-\exp \big(-(2m+1)ra_+-(2n+1)ra_-+2irz\big)}.
\ee
Throughout this paper we choose positive parameters
\be
r,a_+,a_->0.
\ee
This positivity restriction is crucial for quantum-mechanical aspects. (Obviously we only need $\re (ra_{\de})>0$, $\de=+,-$, for \eqref{Gell} to yield a well-defined meromorphic function of~$z$.)
We usually suppress the dependence on the parameters when no ambiguity can occur. Clearly, $G(z)$ is not only meromorphic for $z\in\C$, but also holomorphic and nonzero for $z$ in the strip 
\be\label{strip}
S:= \{z\in\C \mid |\im (z)|<a\}, 
\ee
where we have introduced a new parameter
\be\label{defa}
a:=(a_++a_-)/2.
\ee
 For $z\in S$   we have an alternative representation
  \be\label{Gge}
G(z)=\exp(ig(z)),\ \ \ \ z\in S,
\ee
where   
\be\label{g}
g(z):=\sum_{n=1}^{\infty}\frac{\sin(2nrz)}{2n\sinh(nra_{+})\sinh(nra_{-})},\ \ \ \ z\in S.
\ee
From this  (and also from~\eqref{Gell}), the following properties are obvious:
\be\label{refle}	
G(-z) = 1/G(z),\ \ \ ({\rm reflection\ equation}),
\ee
\be
G(z+\pi/r)=G(z),\ \ \ \ \ ({\rm periodicity}),
\ee
\be\label{modinve}
G(a_-,a_+;z) = G(a_+,a_-;z),\ \ \  ({\rm modular\ invariance}),
\ee
 \be\label{Gcone}
\overline{G(z)}=G(-\overline{z}).
\ee

A slightly different form of the elliptic gamma function is used in most of the recent literature (in particular in Razamat's work~\cite{Ra18}), namely, 
\be\label{Gamell}
\Gamma_e(p,q;x):= \prod_{k,l= 0}^{\infty}\frac{1-x^{-1}p^{k+1}q^{l+1}}{1-xp^{k}q^{l}},\ \ \ \ \ \  |p|<1,\ |q|<1.
\ee
Its relation to~\eqref{Gell} is given by
\be
\Gamma_e(\exp(-2ra_+),\exp(-2ra_-);\exp(-ra_+-ra_-+2irz)) =G(r,a_+,a_-;z).
\ee

The elliptic gamma function $G(z)$ can be viewed as a minimal solution to an analytic difference equation (A$\De$E) that involves a right-hand side function defined by
\be\label{defR}
R(r,\alpha;z) :=\prod_{k=1}^{\infty}[1-\exp(2irz-(2k-1)\alpha r)][1-\exp(-2irz-(2k-1)\alpha r)].
\ee
(This is in essence a rescaled Jacobi theta function.) 
Indeed, letting
\be\label{Rdel}
R_\delta(z):= R(r,a_\delta;z),\ \ \ \ \de=+,-,
\ee
it satisfies the two A$\De$Es
\be\label{Geades}
\frac{G(z+ia_\delta/2)}{G(z-ia_\delta/2)} = R_{-\delta}(z),\quad \delta=+,-,
\ee
with the modular symmetry feature~\eqref{modinve} entailing that only one of the two needs to be verified.

It easily follows from~\eqref{defR} and~\eqref{Rdel} that  the functions $R_+$ and~$R_-$ are entire, even and $\pi/r$-periodic, and $R_{\de}(z)$ satisfies the A$\De$E
\be\label{Rade}
\frac{f(z+ia_\delta/2)}{f(z-ia_\delta/2)} = -\exp(-2irz),\ \ \ \de=+,-.
\ee
They have alternative representations
\be\label{Rrep}
R_{\de}(z)=\exp \left( -\sum_{n=1}^{\infty}\frac{\cos 2nrz}{n\sinh
nra_{\de}}\right),\ \ \ |\im z|<a_{\de}/2,\ \ \ \de=+,-,
\ee
as is readily checked from~\eqref{Gge}, \eqref{g} and~\eqref{Geades}. 

In Section~4 we have occasion to use the functions
\be\label{sR}
s_{\de}(z):= s(r,a_{\de};z)\equiv ie^{-irz}R_{\de}(z-ia_{\de}/2)/p_{\de},\ \  \ \de=+,-,
\ee
where 
\be\label{pde}
p_{\de}:= p(r,a_{\de})\equiv  2r\prod_{k=1}^{\infty}(1-e^{-2kra_{\de}})^2.
\ee
It is easily verified that the function~$s_{\de}(z)$ is entire, odd and $\pi/r$-antiperiodic, and that  it also satisfies the A$\De$E~\eqref{Rade}.
Its relation to the Weierstrass $\sigma$-function is given by
\be\label{sf}
s_{\de}(z)= \exp\big(-\eta(\pi/2r,ia_{\de}/2)\, z^2r/\pi\big)\sigma(z;\pi/2r,ia_{\de}/2),
\ee
where we use the notation of Whittaker/Watson~\cite{WW73}. Using the $G$-A$\De$Es~\eqref{Geades} we readily obtain the A$\De$E
\be\label{Geratio}
\frac{G(z+ia)}{G(z-ia)}=p_+p_-s_+(z)s_-(z),
\ee
which we invoke in Section~4.

 The hyperbolic gamma function can be defined as the unique minimal solution of one of the two A$\De$Es 
\be\label{Ghades}
	\frac{G(z+ia_\delta/2)}{G(z-ia_\delta/2)} = 2\cosh(\pi z/a_{-\de}),\ \ \ \  \delta=+,-,
\ee
that has modulus 1 for real $z$ and satisfies $G(0)=1$ (cf.~Subsection~III~A in~\cite{Ru97}); it is not obvious, but true that this entails that the other one is then satisfied as well.
It is meromorphic in~$z$, 
and for $z$ in the strip $S$~\eqref{strip} 
no poles and zeros occur. Hence we have
\be\label{Ggh}
G(z)=\exp(ig(z)),\ \ \ \ z\in S,
\ee
with $g(z)$ holomorphic in $S$. 
 Explicitly, $g(z)$ has the integral representation
\be\label{ghyp}
g(a_{+},a_{-};z) =\int_0^\infty\frac{dy}{y}\left(\frac{\sin 2yz}{2\sinh(a_{+}y)\sinh(a_{-}y)} - \frac{z}{a_{+}a_{-} y}\right),\ \ \ \ z\in S.
\ee
From this, the following properties of the hyperbolic gamma function are immediate:
\be\label{reflh}	
G(-z) = 1/G(z),\ \ \ ({\rm reflection\ equation}),
\ee
\be\label{modinvh}
G(a_-,a_+;z) = G(a_+,a_-;z),\ \ \  ({\rm modular\ invariance}),
\ee
 \be\label{Gconh}
\overline{G(z)}=G(-\overline{z}).
\ee

The trigonometric gamma function is defined by 
\be\label{Gt}
G_t(r,\alpha;z)\equiv \prod_{n=0}^{\infty} (1-q^{2n+1}\exp(2irz))^{-1},\ \ \ \ q\equiv \exp(-\alpha r),\ \ \ \alpha,r>0,
\ee
It can be viewed as the iteration solution to the A$\De$E
\be\label{Gtade}
\frac{G_t(z+i\alpha/2)}{G_t(z-i\alpha/2)}=1-\exp(2irz),
\ee
cf.~Subsection~III~C in~\cite{Ru97}.
 
 \vspace{1cm}

\noindent
{\Large\bf Acknowledgment}

\vspace{1cm}

\noindent
We are indebted to S.~Razamat, not only for drawing our attention to his paper~\cite{Ra18}, but also for his extensive help in understanding the identities at which he arrived, and  convincing us of their validity.  
\vspace{1cm}

 
\bibliographystyle{amsalpha}

\end{document}